\theoremstyle{definition}
\newtheorem{theorem}{Theorem}[section]
\newtheorem{lemma}[theorem]{Lemma}
\newtheorem{proposition}[theorem]{Proposition}
\newtheorem{corollary}[theorem]{Corollary}
\newtheorem{definition}[theorem]{Definition}
\theoremstyle{remark}
\newtheorem{remark}{Remark}
\numberwithin{equation}{section}
\numberwithin{figure}{section}
\newcommand{\bernchange}[1]{\textcolor{black}{#1}}
\newcommand{\tcr}[1]{\textcolor{black}{#1}}
\def\spacingset#1{\renewcommand{\baselinestretch}%
{#1}\small\normalsize} \spacingset{1}
\begin{document}

 \title{\bf Modelling Time-Varying First and Second-Order Structure of Time Series via Wavelets and Differencing}
   
  \author{Euan T. McGonigle \thanks{E.T. McGonigle gratefully acknowledges financial support from EPSRC and Numerical Algorithms Group Ltd. via The Smith Institute i-CASE award No. EP/R511997/1. \newline 
  Corresponding author: {\tt euan.mcgonigle@bristol.ac.uk} }\hspace{.2cm}\\
    STOR-i Centre for Doctoral Training, Lancaster University,\\
    School of Mathematics, University of Bristol, \\
    Rebecca Killick \\
    Department of Mathematics and Statistics, Lancaster University,\\
    and\\
    Matthew A. Nunes\\
    Department of Mathematics and Statistics, University of Bath}
  \maketitle

\bigskip
\begin{abstract}
Most time series observed in practice exhibit time-varying trend (first-order) and autocovariance (second-order) behaviour. Differencing is a commonly-used technique to remove the trend in such series, in order to estimate the time-varying second-order structure (of the differenced series). However, often we require inference on the second-order behaviour of the original series, for example, when performing trend estimation. In this article, we propose a method, using differencing, to jointly estimate the time-varying trend and second-order structure of a nonstationary time series, within the locally stationary wavelet modelling framework. We develop a wavelet-based estimator of the second-order structure of the original time series based on the differenced estimate, and show how this can be incorporated into the estimation of the trend of the time series. We perform a simulation study to investigate the performance of the methodology, and demonstrate the utility of the method by analysing data examples from environmental and biomedical science.
\end{abstract}

\noindent%
{\it Keywords:} Differencing, locally stationary time series, trend estimation, wavelet thresholding, wavelet spectrum.

\spacingset{1.5}

\section{Introduction}

Time series data can often possess complex and dynamic characteristics. Most commonly-encountered time series in practice are nonstationary - the mean and autocovariance of the series vary over time. Modelling how these properties change over time is crucial for making inference on the data. Nonstationary time series arise in a variety of applications, for example in environmental sciences \citep{hu2019two}, climatology \citep{das2020predictive}, and financial time series \citep{roueff2019time}. In this article, we consider a time series model of the form
\begin{equation}\label{ts-model}
X_{t,T} = \mu \left( \frac{t}{T} \right) + \epsilon_{t,T}, \quad 0 \leq t < T,
\end{equation}
where $\mu: [0,1] \rightarrow \mathbb{R}$ is a non-parametric deterministic trend function, and $\epsilon_{t,T}$ is a locally stationary wavelet (LSW) process with $\mathbb{E} (\epsilon_{t,T}) = 0$.  This model accounts for nonstationarity in \emph{both} the first and second-order structure of the time series: the time-varying mean function is encapsulated in the trend term, while the time-varying second-order behaviour is described by the LSW process term. A thorough explanation of Model (\ref{ts-model}), including necessary background on LSW processes, is given in Section \ref{model-def}.

First and second-order estimation of a time series are most commonly \tcr{considered separately}, rather than \tcr{jointly} within the one common methodological framework. However, there is an emerging literature where the mean function is estimated as well as parameters responsible for second-order nonstationarity. For example \cite{Tunyavetchaki} consider time varying AR(p)-processes where the \tcr{mean function and AR coefficients are estimated jointly within the same procedure}. Other methods, such as \cite{dahlhaus2001locally} which focuses on the semi-parametric setting, consider results where the mean function is time-varying and/or estimated. \cite{khismatullina2019multiscale} test for increases and decreases in trend in the presence of stationary time series errors, while \cite{dette2020prediction} consider the problem of prediction in locally stationary time series. 

The problem of performing inference on the mean function in the presence of nonstationary second-order structure is a highly challenging one. In \cite{von2000non}, for example, the authors describe a method using wavelet thresholding, however the threshold used is data-driven and depends on the nonstationary second-order structure, which is ultimately unknown. \cite{vogt2012nonparametric} employs kernel-based methods to estimate a smooth non-parametric trend function in the presence of locally stationary errors. \cite{dette2019detecting} test for relevant changes in the mean of nonstationary processes. Similarly, there is less attention in the literature on nonstationary second-order estimation in the presence of a \tcr{non-zero} mean function. 

In order to estimate a nonstationary second-order structure a zero-mean process is often assumed. One of the most well-known methods for removing the trend in a time series is differencing: see for example \cite{chan1977note} and \cite{shumway2010time}. The time series $\{X_t\}$ can be, for example, first-differenced to obtain a new time series, $\{\nabla X_t = X_{t} - X_{t-1} \}$. Differencing aims to remove a trend without the need to estimate any parameters (whose estimation often includes an assumption of stationary errors). In the time series literature, it is often the case that inference is made on the properties of the differenced time series.  In contrast, this article proposes a method to jointly estimate the time-varying trend and second-order structure of the original time series, by employing the commonly-used strategy of differencing a time series in order to remove the trend. 

Our approach to this problem can be summarised as follows. By differencing the time series to remove the trend, we can estimate the appropriate second-order quantities of interest of the locally stationary wavelet part of Model (\ref{ts-model}). This is achieved by considering the effect of differencing on the second-order properties of the series in order to develop an appropriate estimation procedure. Expanding upon the rigorous theory developed in \cite{nason2000wavelet}, we obtain results on the consistent estimation of the second-order structure using our modified estimation strategy for the original time series. Using this estimate, we discuss a wavelet thresholding technique to estimate the trend function $\mu (t/T)$ in a principled manner, by taking into account the time-varying second-order behaviour. Our methodology thus enables the joint estimation of the first and second-order structure of nonstationary time series.  The applications demonstrate the added utility that estimating the second order structure of the original time series brings.

The rest of the article is organised as follows. In Section 2 we introduce the time series model which we focus on in this article, describe key assumptions, and discuss necessary background to LSW processes. In Section 3, we analyse the effect that differencing has on the spectral structure of a time series, and explain the intuition behind our methodology. Furthermore, we describe the methodology for consistent estimation of the second-order structure in the presence of trend, and in Section 4 we use this estimate in order to estimate the trend of the series. Simulation studies assessing the method's performance are given in Section 5. In Section 6, we apply our framework to a data example, demonstrating the utility of the method, while concluding remarks are given in Section 7. All proofs and additional simulation and data application results are contained within the supplementary material.

\section{Wavelets and Model Formulation}\label{model-def}

In this section we introduce the modelling paradigm that we will use, as well as explaining the necessary background concepts. 

\bernchange{In the LSW framework, wavelets act as building blocks, analogous to the Fourier exponentials in the classical Cram\'{e}r representation for stationary processes. Briefly, wavelets are oscillatory basis functions which provide efficient (sparse) multiscale representations of signals. Wavelets are useful in estimating time-varying quantities, especially nonstationary characteristics, due to these attractive properties. For an overview of wavelet techniques, see for example \cite{nason2010wavelet} or \cite{vidakovic2009statistical}.}

\bernchange{For example, for a function $f \in L^{2}(\mathbb{R})$, we have the expression $f(x) = \sum_{k \in \mathbb{Z}} c_{J} \phi_{J,k} (x) + \sum_{j \leq J} \sum_{k \in \mathbb{Z}} d_{j,k} \psi_{j,k} (x)$, where the wavelet $\psi_{j,k} (x) = 2^{-j/2} \psi(2^{-j} x - k)$ is a dilated and translated version of a (mother) wavelet $\psi (x)$ and similarly for the father wavelet $\phi(x)$. The coefficients $d_{j,k}$ at location $k$ and scale $j$ represent the oscillatory behaviour of the signal $f$ at a particular frequency, whereas the coefficients $c_{j,k}$ give information about the mean behaviour of the signal at different scales $j$. }

\tcr{Next, we define the discrete wavelets that make up the building blocks of the LSW model. Let $\psi$ be a compactly supported wavelet, for example any within the Daubechies family \cite{daubechies1992ten}. Denote by $\{h_k, g_k\}$ the low- /high pass filter pair associated with $\psi$. Let $N_h$ be the number of non-zero values of $\{h_k \}$, and define $L_j = (2^{-j} - 1)(N_h-1)+1$. The discrete wavelets at a given scale $j \in \mathbb{Z}^{-}$, as discussed in \cite{nason2000wavelet}, are defined as the vectors $\psi_j = (\psi_{j,0} , \ldots , \psi_{j, L_j-1})$ , where $\psi_{-1,n}= \sum_k g_{n-2k} \delta_{0k} = g_n$, and $\psi_{j-1,n} = \sum_k h_{n-2k} \psi_{j,k}$ for $n=0, \ldots ,L_j-1$, where $\delta_{0k}$ is the Kronecker delta. The discrete father wavelet is defined similarly using the associated low-pass filter $\{ h_k \}$.}

\bernchange{The simplest example of a wavelet basis is the Haar wavelet, which is given by
\begin{equation*}
\psi^{H}_{j,k} = 2^{-j/2} \mathbb{I} \left(0 \leq k \leq 2^{j-1} - 1 \right) - 2^{j/2} \mathbb{I} \left(2^{j-1}  \leq k \leq 2^{j} - 1 \right),
\end{equation*}
where $j = \{-1, -2, -3, \ldots \}$ and $k \in \mathbb{Z}$. }

\subsection{Model Definition}

Below, we define the \emph{Trend} Locally Stationary Wavelet (T-LSW) model, which is composed of a deterministic Lipschitz continuous trend component and a locally stationary wavelet component. Our T-LSW model, developing on the theory of locally stationary wavelet processes of \cite{nason2000wavelet}, allows for simultaneous inference on the time-varying mean and autocovariance of a time series. 

\begin{definition}
A \emph{trend locally stationary wavelet (T-LSW) process} $\{X_{t,T} \}$, $t = 0, \ldots , T-1$, and $T = 2^J \geq 1$ for $J \in \mathbb{N}$ is a doubly-indexed stochastic process with the following representation in the mean-square sense:
\begin{equation}\label{lsw_rep}
X_{t,T} = \mu \left( \frac{t}{T} \right) + \sum_{j = -\infty}^{-1} \sum_{k \in \mathbb{Z}} w_{j,k;T} \psi_{j,k-t} \xi_{j,k} ,
\end{equation}
where $\{\xi_{j,k} \}$ is a random, uncorrelated, zero-mean orthonormal increment sequence, $\{\psi_{j, k-t} \}_{j,k}$ is a set of discrete non-decimated wavelets, and $\{w_{j,k;T} \}$ is a set of amplitudes. The quantities in representation (\ref{lsw_rep}) possess the following properties:
\begin{enumerate}
\item The function $\mu: [0,1] \in \mathbb{R} $ is Lipschitz continuous with constant $K>0$, i.e
\begin{equation*}
\left| \mu \left( \frac{t}{T} \right) -  \mu \left( \frac{s}{T} \right) \right| \leq \frac{K}{T}, \ \forall \ s,t \in [0,T].
\end{equation*}
\item There exists, for each $j \leq -1$, a Lipschitz continuous function $W_j(z)$ for $z \in (0,1)$ which satisfies the following properties:
\begin{equation*}
\sum_{j = -\infty}^{-1} | W_j (z) |^2 < \infty \text{ uniformly in } z \in (0,1);
\end{equation*}
the Lipschitz constants $L_j$ are uniformly bounded in $j$ and $\sum_{j=-\infty}^{-1} 2^{-j} L_j < \infty$. There exists a sequence of constants $C_j$ such that for each $T$
\begin{equation*}
\sup_k \left| w_{j,k;T} - W_j \left( \frac{k}{T} \right) \right| \leq \frac{C_j}{T},
\end{equation*}
where for each $j \leq-1$ the supremum is over $k = 0, \ldots , T-1$, and where the sequence $\{C_j\}$ satisfies $\sum_{j=-\infty}^{-1} C_j < \infty$.
\end{enumerate}
\end{definition}

The model imposes the same assumptions on the LSW component as in \cite{nason2000wavelet}, allowing for locally stationary second-order structure, while also permitting nonstationary first-order behaviour by incorporating a smooth mean function $\mu$. Imposing a Lipschitz continuous trend assumption is not overly restrictive, given that trend functions are generally assumed to be smooth and slowly-evolving. In particular, polynomials (restricted to the interval $[0,1]$) are Lipschitz continuous, as are sinusoids. Such an assumption is commonly made in the literature, see for example \cite{vogt2012nonparametric} and \cite{khismatullina2019multiscale}. Furthermore, in Section \ref{n-order-diff} we will discuss the case when the trend is not Lipschitz.

\subsection{Background to LSW Processes}

In the original work of \cite{nason2000wavelet}, the trend function $\mu (t/T)$ in Equation (\ref{lsw_rep}) is assumed to be everywhere zero, which forces the process mean $\mathbb{E} (X_{t,T} )$ to be equal to zero for all $t$. Consequently, within the original LSW framework it is only possible to estimate time-varying second-order structure when the time series does not exhibit a trend. In order to discuss our proposed methodology in the setting where a trend \emph{is} present, we dedicate the rest of the section to recalling a number of definitions and results from \cite{nason2000wavelet}, which we will expand upon and adapt to the setting where a trend is present.

The second-order structure of a LSW process is \tcr{uniquely determined by its spectrum}. The evolutionary wavelet spectrum (EWS) of a LSW process is defined as $S_{j} (z) := | W_{j} (z) |^{2}$ for rescaled time $z = k/T \in (0,1)$, and measures the contribution to variance at a particular rescaled time $z$ and scale $j$. Since  the $W_{j}$ are assumed to be Lipschitz continuous, the spectrum $S_{j}$ is also Lipschitz continuous, which ensures it evolves slowly over time. Alterations to the LSW model that use different assumptions on the EWS can be found in \cite{fryzlewicz2006haar} and \cite{van2008locally}, which assume respectively that the EWS is piecewise constant and of bounded total variation. 

For ease of notation we now drop the dependence on $T$ in the subscripts of the model quantities. The EWS is estimated via the empirical wavelet coefficients of the time series, given by $d_{j,k} := \langle X_{t} , \psi_{j,k-t} \rangle = \sum_{t} X_{t} \psi_{j,k-t}$. As with Fourier approaches, the raw wavelet periodogram $I_{k}^j := | d_{j,k} |^2$ is a biased, inconsistent estimator of the EWS (\cite{nason2000wavelet}, Proposition 4):
\begin{eqnarray}\label{usual-exp}
\mathbb{E} \left( I_{k}^j \right) &=& \sum_{l} A_{jl} S_l (k/T) + \mathcal{O} (T^{-1} ),\\
\label{usual-var}
\text{Var}( I_{k}^j ) &=& 2 \left( \sum_l A_{jl} S_l (k/T) \right)^2 + \mathcal{O} (2^{-j} T^{-1}),
\end{eqnarray}
where the operator $A = (A_{jl})_{j, l <0}$ is given by $A_{jl} : = \langle \Psi_j , \Psi_l \rangle = \sum_\tau \Psi_j (\tau) \Psi_l (\tau)$, and the autocorrelation wavelets are defined by $\Psi_j (\tau) := \sum_{k \in \mathbb{Z}} \psi_{j,k} \psi_{j,k- \tau}  , \ j < 0, \tau \in \mathbb{Z}$. Hence, for the vector of periodograms $\textbf{I} (z) := \{ I_{\lfloor zT \rfloor}^l \}_{l = -1, \ldots , -J}$, and the vector of corrected periodograms $\textbf{L} (z) := \{L_{\lfloor zT \rfloor }^j \}_{j = -1, \ldots , -J}$ with $\textbf{L} (z) = A_J^{-1} \textbf{I} (z)$, where the $J$-dimensional matrix $A_J := (A_{jl})_{j, l=-1, \ldots , -J}$,
\begin{equation}\label{exp-cor}
\mathbb{E} \left( \textbf{L} (z) \right) = \mathbb{E} \left( A_J^{-1} \textbf{I} (z) \right) = \textbf{S} (z) + \mathcal{O} (T^{-1} ) \qquad \forall \ z \in (0,1),
\end{equation}
where $\textbf{S} (z) := \{ S_j (z) \}_{j = -1, \ldots , -J }$. The raw wavelet periodogram is first smoothed and then corrected to produce an asymptotically unbiased, consistent estimator. There are several approaches to smoothing for consistency, for example via a running mean as in \cite{nason2013test} or wavelet thresholding as in \cite{nason2000wavelet}. The correction is performed by premultiplying the (smoothed) raw wavelet periodogram by $A^{-1}_{J}$, as motivated by Equation \eqref{exp-cor}. The operator $A$ is shown to have bounded inverse for all Daubechies compactly supported wavelets in \cite{cardinali2017locally}.

The local autocovariance (LACV) function for a LSW process provides information about the covariance at a rescaled location $z = k/T \in (0,1)$. The LACV, $c(z, \tau)$, of a LSW process with EWS $\{S_j (z) \}$ is defined as $c(z, \tau) = \sum_{j = -\infty}^{-1} S_j (z) \Psi_j (\tau)$, for $\tau \in \mathbb{Z}$, $z \in (0,1)$. The LACV can be thought of as a decomposition of the autocovariance of a process over scales and rescaled time locations. The LACV is estimated by plugging in the smoothed, corrected estimate for the EWS into the definition of the LACV. Using wavelet thresholding of the EWS estimator, it is shown in \cite{nason2000wavelet} that the LACV estimator is consistent. The next section addresses the consistent estimation of these second-order quantities in the presence of first-order nonstationarity.

\section{LSW Estimation in the Presence of Trend via Differencing}

In this section, we discuss methodology for estimation of the evolutionary wavelet spectrum and local autocovariance function of a trend-LSW process. In order to estimate these quantities, we employ the ubiquitous practice of differencing to remove the trend, but crucially correct for the effect of this on the spectrum.

\subsection{Using Differencing to Detrend a Time Series}\label{diff-detrend-sec}

One of the most common methods for removing the trend in a time series is differencing, see for example \cite{chan1977note} and \cite{shumway2010time}. The time series $\{X_t\}$ can be, for example, first-differenced to obtain a new time series, $\{\nabla X_t = X_{t} - X_{t-1} \}$, upon which inference is then performed. One advantage of differencing is that no parameters are estimated in the differencing operation, which is not the case for \tcr{detrending achieved using an estimator of the trend}. An $n$-th order difference is capable of removing a polynomial trend of degree $n$ from the data.

One of the consequences of differencing is that the second-order statistical properties of the time series in Model (\ref{lsw_rep}) will change, sometimes quite drastically. Therefore, it is potentially problematic to directly use the differenced process for inference on the original process. In the context of ARIMA modelling, differencing is performed in order to induce stationarity, and estimation is then performed on the differenced series. However, this reasoning does not hold within our setting: if we difference a nonstationary LSW process, the result will still be a nonstationary process. Due to the potentially complex structure of the LSW process, properties of the differenced series are not necessarily representative of the original series. However, in order to estimate the trend component in (\ref{lsw_rep}), we require an estimate of the second-order structure of the original time series, not the differenced series.

It is straightforward to produce an example where spectral behaviour can be completely altered as a consequence of detrending using, for example, first-differencing. Consider the zero-mean LSW process of length $T = 2^{10} = 1024$, defined by the spectrum $S_j(z) = 1$ for $j=-1$, $0$ otherwise. This time series is referred to as the scale $-1$ Haar moving average process, and can be written as $X_t = ( \epsilon_t - \epsilon_{t-1} )/ \sqrt{2}$, where the $\epsilon_{t}$ are independent identically distributed (IID) random variables. Taking the first-difference, we obtain $\nabla {X}_t = X_{t} - X_{t-1} =  \left( \epsilon_{t} -2\epsilon_{t-1} + \epsilon_{t-2} \right) / \sqrt{2}$.  Computing the expectation of the raw wavelet periodogram of the differenced time series, we find that $\mathbb{E} (I^{-1}_{k}) = 5$, and for $j < -1$, $\mathbb{E} (I^{j}_{k})  = 3 \times 2^{j+1}$. Therefore, for all $z \in (0,1)$, the expected value of the LSW estimator at time $z$ is $\mathbb{E} ( \textbf{L} (z) ) = \mathbb{E} ( A_{10}^{-1} \textbf{I} (z) )$.

Having differenced the time series, a problem arises since $\mathbb{E} (I^{j}_{k} ) \neq \sum_l A_{jl} S_l (k/T) + \mathcal{O} (T^{-1})$. In particular, the expectation of the EWS estimate at scale $-2$ at any time point is given by $\mathbb{E} \left(\sum_l A_{-2,l}^{-1} I^{l}_{k} \right) = -0.79 < 0$. Therefore, the expectation of our estimate of the spectrum at level $-2$ is -0.79, while in the original time series we had $S_{-2} (z) = 0$ for all $z \in (0,1)$. In Figure \ref{extraplot} left, we see a plot of the original spectrum, while on the right, we see the expectation of the corrected periodogram estimate, showing a clear discrepancy.

\begin{figure}[]
\centering
\includegraphics[width = 0.87\textwidth]{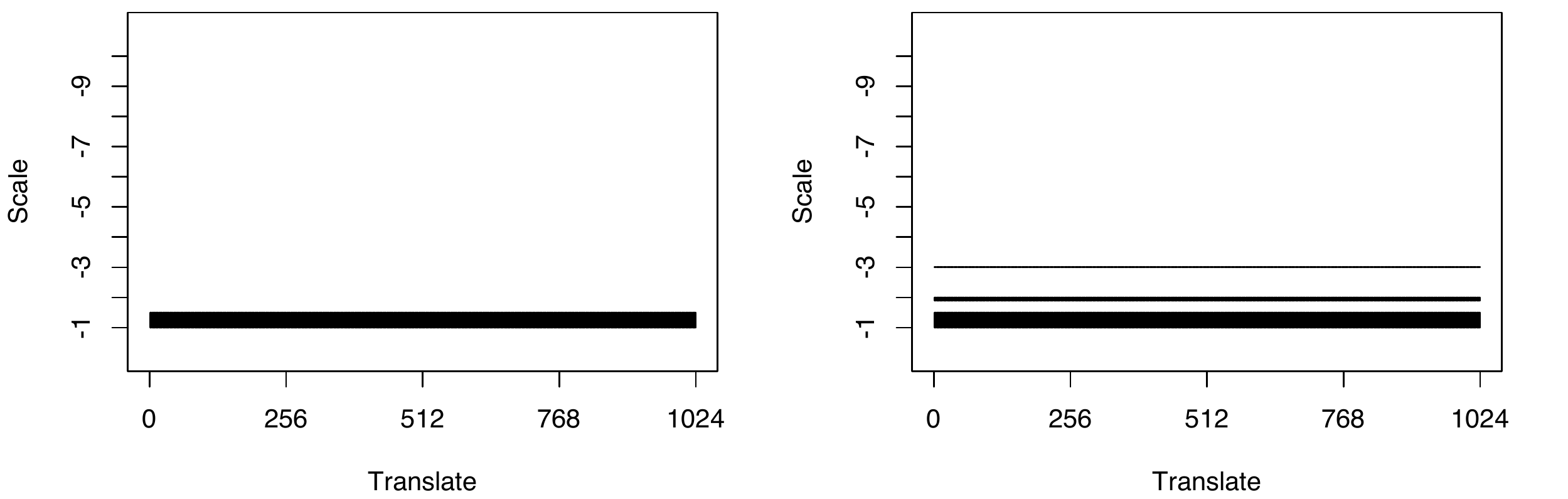}
\caption{Left: original spectrum of the scale $-1$ Haar moving average process. Right: expectation of spectral estimate of differenced time series.}
\label{extraplot}
\end{figure}

Differencing can also cause \tcr{vastly alter the sparsity structure of the spectrum of the error process}. The white noise process $Y_{t} = \epsilon_{t}$, where the $\epsilon_{t}$ are IID random variables, can be represented as an LSW process using Haar wavelets and a spectrum defined by $S_{j} (z) = 2^{j} \text{ for all } z \in (0,1)$. Differencing this time series gives $\nabla {X}_{t }= X_{t} - X_{t-1} = \epsilon_{t} - \epsilon_{t-1}$, which is a Haar LSW process with spectrum $S_{j} (z) = 2$ for $j = -1$, $S_{j} (z) = 0$ otherwise. This induces autocorrelation in the time series at lag 1, which is similar to what is observed when over-differencing in classical stationary time series. Therefore, it can be seen that we must take into account the differencing of the time series if we wish to say something meaningful about the original series.

\subsection{Intuition Behind Estimation Procedure}

Denote the LSW component of Model \eqref{lsw_rep} by $\epsilon_{t}$. Given that the trend component of the Model (\ref{lsw_rep}) is Lipschitz continuous, first-differencing the time series yields
\begin{align*}
\nabla {X}_{t} &= X_{t} - X_{t-1} = \mu \left( \frac{t}{T} \right) + \epsilon_{t} - \mu \left(  \frac{t-1}{T} \right) - \epsilon_{t-1} \\[1ex]
&= \epsilon_{t} - \epsilon_{t-1} + \mathcal{O} (T^{-1}).
\end{align*}
Hence, differencing the original series results in a differenced locally stationary wavelet process that is asymptotically zero-mean. We wish to estimate the evolutionary wavelet spectrum of the original time series $\{X_{t} \}$ using the differenced series $\{\nabla {X}_{t} \}$. Proceeding by using the standard estimation procedure of \cite{nason2000wavelet} by taking the squared wavelet coefficients and premultiplying by the inverse of the $A$ matrix, as in Equation (\ref{exp-cor}), is not appropriate, as we have seen in Section \ref{diff-detrend-sec}. Denote the empirical non-decimated wavelet coefficients of the first-differenced series by
\begin{equation*}
\tilde{d}_{j,k} := \sum_{t} \nabla {X}_{t} \psi_{j,k-t}.
\end{equation*}
We can compute the expectation of the raw wavelet periodogram $\tilde{I}_{k}^{j} := | \tilde{d}_{j,k} |^{2}$, which will yield an analogous result to Equations \eqref{usual-exp} and \eqref{usual-var}. Hence, our estimation strategy will follow that of \cite{nason2000wavelet}. However, the correction of the raw wavelet periodogram to achieve unbiasedness will require a different correction matrix, to account for the fact that the raw wavelet periodogram is calculated on the differenced series. 

Note that, while wavelets themselves are differencing operators, coarser wavelet scales accumulate large bias due to the trend and large filter length. Bias in the coarse scales will also affect the bias in the finer scales, due to the necessity for correcting the estimate using a correction matrix. This can be particularly noticeable for trends that display cusps. Differencing first removes the trend immediately, ensuring that the wavelet transform does not accumulate large bias.

\begin{remark}
We can write the differenced process, in the mean-square sense, as
\begin{equation*}
\epsilon_{t} - \epsilon_{t-1} = \sum_{j} \sum_{k} w_{j,k} \psi_{j,k-t} \tilde{\xi}_{j,k} ,
\end{equation*}
where $\tilde{\xi}_{j,k} = \xi_{j,k} - \xi_{j,k-1}$. This process satisfies all of the required properties of a standard LSW process except one: the increments are no longer orthonormal. Instead, we have that
\begin{equation*}
\emph{Cov} (\tilde{\xi}_{j,k} , \tilde{\xi}_{l,m}) = \begin{cases}
2  & \text{ for } j=l, k = m, \\
-1  & \text{ for } j=l \text{ and } k=m+1 \text{ or } k+1=m, \\
0 & \text{otherwise}.
\end{cases}
\end{equation*}
Therefore, there is a distinction between the observed time series, which we assume to have LSW errors, and the differenced one, which we do not. This is in contrast to existing approaches utilising differencing for second-order estimation, such as ARIMA models, which instead assume that the differenced series follows the model form rather than the original series.
\end{remark}

\subsection{Asymptotic Behaviour of the Differenced Raw Wavelet Periodogram}

As motivated by the discussion in the previous section, we estimate the spectrum by calculating the raw wavelet periodogram of the first-differenced time series. We return to the case of general $n$-th differencing in Section \ref{n-order-diff}. For the purpose of theoretical results, we assume that the $\{ \xi_{j,k} \}$ in Model \eqref{lsw_rep} are Gaussian. In practice, this assumption is not required, and in the simulation study in Section \ref{sim-study} we also investigate the case where the $\{ \xi_{j,k} \}$ are exponential random variables. Before obtaining the result of the expectation of the raw wavelet periodogram, we define the following two operators, which involve the inner product of the autocorrelation wavelets at lag $1$.
\begin{definition}\label{A1_mat_def}
Let $B$ be the backshift operator such that $B \Psi_{j} (\tau) = \Psi_{j} (\tau-1)$. Define the operator $A^{1} = (A_{jl}^{1})_{j, l <0}$ by
\begin{equation*}
A_{jl}^{1} : = \langle \Psi_j  , B \Psi_l  \rangle = \sum_\tau \Psi_j (\tau) \Psi_l (\tau -1) ,
\end{equation*}
and the operator $D^{1} = (D_{jl}^{1})_{j, l <0}$ by
\begin{align*}
D_{jl}^{1} &: =  \langle (1 - B) \Psi_j , (1-B) \Psi_l  \rangle = 2 A_{jl} - 2 A^{1}_{jl} \\[1ex]
&= 2 \sum_\tau \Psi_j (\tau) \left( \Psi_{l} (\tau) - \Psi_l (\tau -1) \right) .
\end{align*}
Denote the $J$-dimensional matrices $A_J^{1} : = (A_{jl}^{1})_{j, l=-1, \ldots , -J}$ and $D_J^{1} : = (D_{jl}^{1})_{j, l=-1, \ldots , -J}$. 
\end{definition}
\begin{proposition}\label{prop1}
The matrix $D^{1}_{J}$ is invertible.
\end{proposition}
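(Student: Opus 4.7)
The plan is to recognise that $D^{1}_{J}$ is (up to the factor of $2$) a Gram matrix, so I will reduce invertibility to a statement of linear independence. More precisely, from the identity
\begin{equation*}
\langle (1-B)\Psi_{j},\, (1-B)\Psi_{l} \rangle = 2A_{jl} - A^{1}_{jl} - A^{1}_{lj}
\end{equation*}
together with the symmetry $\Psi_{j}(-\tau)=\Psi_{j}(\tau)$ (which gives $A^{1}_{jl}=A^{1}_{lj}$ after a change of variable), the entries $D^{1}_{jl}$ are exactly $2\langle (1-B)\Psi_{j},(1-B)\Psi_{l}\rangle$. Hence $D^{1}_{J}$ is symmetric positive semi-definite, and it suffices to show that the sequences $\{(1-B)\Psi_{j}\}_{j=-1,\ldots,-J}$ are linearly independent in $\ell^{2}(\mathbb{Z})$.

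Suppose therefore that $c\in\mathbb{R}^{J}$ satisfies $c^{\top}D^{1}_{J}c=0$. Then
\begin{equation*}
\Bigl\| \sum_{j=-J}^{-1} c_{j}(1-B)\Psi_{j} \Bigr\|_{\ell^{2}}^{2} = 0,
\end{equation*}
so $f(\tau):=\sum_{j=-J}^{-1} c_{j}\Psi_{j}(\tau)$ satisfies $f(\tau)=f(\tau-1)$ for every $\tau\in\mathbb{Z}$; that is, $f$ is constant. Each autocorrelation wavelet $\Psi_{j}$ has compact support (because $\psi$ does), and so the finite linear combination $f$ has compact support as well. A constant sequence of compact support must be identically zero, hence $\sum_{j=-J}^{-1} c_{j}\Psi_{j}\equiv 0$.

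Applying $\langle \cdot,\cdot\rangle$ with $\Psi_{j}$ now yields $c^{\top}A_{J}c = \|\sum_{j} c_{j}\Psi_{j}\|_{\ell^{2}}^{2} = 0$. Since $A_{J}$ is invertible for any Daubechies compactly supported wavelet by the result of \cite{cardinali2017locally}, and $A_{J}$ is itself a symmetric positive semi-definite Gram matrix, it is in fact positive definite, forcing $c=0$. This proves that $D^{1}_{J}$ has trivial kernel and is therefore invertible. The main conceptual step is the reduction to the invertibility of $A_{J}$ via the compact-support argument; everything else is routine linear algebra. There is no genuine obstacle, though one must remember to use the symmetry of $\Psi_{j}$ in order to identify $D^{1}_{J}$ with the Gram matrix in the first place, since otherwise the given formula $D^{1}_{jl}=2A_{jl}-2A^{1}_{jl}$ would only be manifestly symmetric up to that extra step.
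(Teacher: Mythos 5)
Your proof is correct and takes essentially the same route as the paper: both recognise $D^{1}_{J}$ as the Gram matrix of the differenced autocorrelation wavelets and reduce invertibility to linear independence of $\{\Psi_{j}\}_{j=-J}^{-1}$ — the paper phrases the injectivity of differencing via the invertible bidiagonal matrix $\nabla$ acting on the (compactly supported) vectors $\Psi_{j}$ and cites Theorem 1 of \cite{nason2000wavelet}, while you phrase it via the kernel of $(1-B)$ (a constant compactly supported sequence is zero) and close with the positive definiteness of $A_{J}$, which is the same underlying fact. One harmless slip: by the paper's own definition $D^{1}_{jl} = \langle (1-B)\Psi_{j}, (1-B)\Psi_{l} \rangle$ exactly, with no extra factor of $2$, but a positive scalar multiple changes nothing in your kernel argument.
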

Intuitively, it is not surprising that the quantity $D^{1}$ will appear when we calculate the expectation of the squared wavelet coefficients of the first-differenced series. Indeed, we have the following result for the asymptotic behaviour of the raw wavelet periodogram of the first-differenced time series.
\begin{proposition}\label{A1exp}
Let $\tilde{I}^{j}_{k}= | \tilde{d}_{j,k}|^{2}$ be the wavelet periodogram of the first-differenced time series. Under the assumptions of Model \eqref{lsw_rep}, 
\begin{equation*}
\mathbb{E} (\tilde{I}_{k}^{j} ) =  \sum_{l} D_{jl}^{1} S_{l} \left( \frac{k}{T} \right) + \mathcal{O} (T^{-1} ) ,
\end{equation*}
\begin{equation*}
\emph{Var} (\tilde{I}_{k}^{j} ) = 2 \left( \sum_{l} D^{1}_{jl} S_{l} \left(\frac{k}{T} \right) \right)^{2}  + \mathcal{O}(2^{-j} T^{-1}).
\end{equation*}
\end{proposition}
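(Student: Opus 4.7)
The plan is to follow the blueprint of Proposition 4 in \cite{nason2000wavelet}, adapting it to the first-differenced case, with the new matrix $D^{1}_{J}$ emerging naturally from the extra $(1-B)$ factor introduced by differencing.

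First I would split the differenced process as
\begin{equation*}
\nabla X_{t} = \bigl(\epsilon_{t} - \epsilon_{t-1}\bigr) + \bigl(\mu(t/T) - \mu((t-1)/T)\bigr),
\end{equation*}
and use the Lipschitz property of $\mu$ together with the compact support of $\psi_{j,\cdot}$ to absorb the trend contribution to $\tilde{d}_{j,k}$ into an $\mathcal{O}(T^{-1})$ deterministic term. Inserting the LSW representation of $\epsilon_{t}$ then gives
\begin{equation*}
\tilde{d}_{j,k} = \sum_{l,m} w_{l,m;T}\,\xi_{l,m}\,\bigl\langle \psi_{l,m-\cdot}-\psi_{l,m-\cdot+1},\,\psi_{j,k-\cdot}\bigr\rangle + \mathcal{O}(T^{-1}),
\end{equation*}
where the inner product is over the time index. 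Squaring and taking expectation, the orthonormality of $\{\xi_{l,m}\}$ collapses the double sum to a single sum over $(l,m)$ of $w_{l,m;T}^{2}$ times the square of the bracketed inner product.

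The algebraic heart of the proof is then to evaluate $\sum_{m}\bigl\langle (1-B)\psi_{l,m-\cdot},\psi_{j,k-\cdot}\bigr\rangle^{2}$. Swapping the order of summation turns this into $\sum_{\tau}\Psi_{j}(\tau)\bigl[2\Psi_{l}(\tau)-\Psi_{l}(\tau-1)-\Psi_{l}(\tau+1)\bigr]$; using the symmetry $\Psi_{l}(-\tau)=\Psi_{l}(\tau)$ the two shifted cross-terms coincide and yield $2A_{jl}-2A_{jl}^{1}=D_{jl}^{1}$ as in Definition \ref{A1_mat_def}. Finally, using the Lipschitz continuity of $W_{l}$ and the bound $\sup_{k}|w_{l,k;T}-W_{l}(k/T)|\le C_{l}/T$, I replace $w_{l,m;T}^{2}$ by $S_{l}(k/T)$ inside the compact autocorrelation wavelet support, the replacement error being controlled by $\sum_{l}C_{l}<\infty$ and $\sum_{l}2^{-l}L_{l}<\infty$; this produces the announced expectation $\sum_{l}D_{jl}^{1}S_{l}(k/T)+\mathcal{O}(T^{-1})$.

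For the variance, the Gaussianity of $\{\xi_{j,k}\}$ implies that $\tilde{d}_{j,k}$ is Gaussian with mean $\mathcal{O}(T^{-1})$ (inherited from the differenced trend) and variance $\sum_{l}D_{jl}^{1}S_{l}(k/T)+\mathcal{O}(T^{-1})$. Applying the identity $\mathrm{Var}(X^{2})=2\sigma^{4}+4\mu^{2}\sigma^{2}$ for a Gaussian $X\sim N(\mu,\sigma^{2})$, squaring the variance expansion, and absorbing cross-terms yields $2\bigl(\sum_{l}D^{1}_{jl}S_{l}(k/T)\bigr)^{2}$ plus a remainder. The main technical obstacle, and where I would spend most of the bookkeeping, is to verify that this remainder is $\mathcal{O}(2^{-j}T^{-1})$: the factor $2^{-j}$ arises because the effective length of the support of $\psi_{j,\cdot}$ is of order $2^{-j}$, so the $\mathcal{O}(T^{-1})$ per-summand errors accumulate across $\mathcal{O}(2^{-j})$ terms when bounding the cross products $\mathbb{E}[\tilde d_{j,k}^{2}]\cdot\mathbb{O}(T^{-1})$ and the trend-induced mean squared. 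Keeping these error terms uniformly controlled in $j$ via the summability conditions on $\{C_{j}\}$ and $\{2^{-j}L_{j}\}$ is the only delicate point; the rest is a direct, if somewhat lengthy, parallel to the zero-mean LSW computation.
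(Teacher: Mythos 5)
Your proposal is correct and arrives at the same key identity $D^{1}_{jl}=2A_{jl}-2A^{1}_{jl}$, but by a somewhat different route than the paper. The paper's proof splits $\mathbb{E}(\tilde{I}^{j}_{k})$ into three second-order terms $I+II+III$: the term $I$ is the standard periodogram expectation quoted from Proposition 4 of \cite{nason2000wavelet}; the term $II$ is identified with $\mathbb{E}(I^{j}_{k-1})$ via an index shift and evaluated the same way; and the cross term $III$ is handled through the local autocovariance approximation $\mathbb{E}(\epsilon_{t}\epsilon_{s-1})=c\bigl(\tfrac{t+s-1}{2T},\,s-t-1\bigr)+\mathcal{O}(T^{-1})$ and a change of variables following the proof of Theorem 1 of \cite{nason2013test}, producing $-2\sum_{l}A^{1}_{jl}S_{l}(k/T)$. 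You instead fold the differencing filter $(1-B)$ into the wavelet, collapse the double sum using orthonormality of $\{\xi_{l,m}\}$, and evaluate $\sum_{m}\langle(1-B)\psi_{l,m-\cdot},\psi_{j,k-\cdot}\rangle^{2}$ directly, which treats all three of the paper's terms in one pass; your symmetry observation that the $\tau-1$ and $\tau+1$ shifts give the same inner product is sound (the paper makes the equivalent remark inside its proof of Theorem \ref{invertibility1}), and your Lipschitz replacement of $w^{2}_{l,m;T}$ by $S_{l}(k/T)$, controlled by $\sum_{l}C_{l}<\infty$, is the step both arguments share. What the paper's organisation buys is brevity, since two of its three terms are known results and only the cross term needs new work; what yours buys is self-containedness, avoiding the LACV detour. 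On the variance you are actually more explicit than the paper, which simply asserts that the coefficients of the differenced series are (asymptotically) Gaussian, so the periodogram is scaled $\chi^{2}$ with variance proportional to the squared expectation. Your use of $\mathrm{Var}(X^{2})=2\sigma^{4}+4\mu^{2}\sigma^{2}$ with $\mu=\mathcal{O}(T^{-1})$ from the differenced trend (for fixed $j$) and $\sigma^{2}=\sum_{l}D^{1}_{jl}S_{l}(k/T)+\mathcal{O}(T^{-1})$ makes the role of the nonzero mean visible; note only that the cleanest way to see the $\mathcal{O}(2^{-j}T^{-1})$ remainder is the cross term $2\sigma^{2}\cdot\mathcal{O}(T^{-1})$ together with $\sum_{l}D^{1}_{jl}S_{l}(k/T)=\mathcal{O}(2^{-j})$, which is the bound your support-length heuristic about $\mathcal{O}(2^{-j})$ filter taps amounts to.
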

Therefore, for the vector of periodograms $\textbf{I} (z) := \{ I_{\lfloor zT \rfloor}^l \}_{l = -1, \ldots , -J}$, and the vector of corrected periodograms $\textbf{L} (z) := \{L_{\lfloor zT \rfloor }^j \}_{j = -1, \ldots , -J}$ with $\textbf{L} (z) = (D^{1}_{J})^{-1} \textbf{I} (z)$,
\begin{equation*}
\mathbb{E} \left( \textbf{L} (z) \right) = \mathbb{E} \left( (D^{1}_{J})^{-1} \textbf{I} (z) \right) = \textbf{S} (z) + \mathcal{O} (T^{-1} ) \qquad \forall \ z \in (0,1),
\end{equation*}
where $\textbf{S} (z) := \{ S_j (z) \}_{j = -1, \ldots , -J }$ is the EWS of the original process. Thus we can bias correct the raw wavelet periodogram of the first-differenced time series by using the inverse of $D^{1}_{J}$, instead of the inverse of $A_{J}$ from \cite{nason2000wavelet}.

\begin{remark}\label{seasonal-remark}
\tcr{Proposition \ref{A1exp} can be generalised to the lag $L$ differenced time series $\{\nabla_L X_t := X_t-X_{t-L}\}_{t=L+1}^{T}$ for any fixed integer $L$. The expectation of the raw wavelet periodogram of the lag $L$ differenced series at location $k$ is given by
\begin{equation*}
\sum_l D^L_{jl} S_l \left( \frac{k}{T} \right) + \mathcal{O} (T^{-1}),    
\end{equation*}
where $D^L_{jl} = 2 \sum_{\tau} \Psi_j (\tau) ( \Psi_l (\tau) - \Psi_l (\tau - L ) )$. Then, the matrix $D^L_J$ is invertible by the same argument used to show invertibiltiy of $D^1_J$. This enables the raw wavelet periodogram of the $L$-differenced series to be corrected using the matrix $(D^L_J)^{-1}$. This technique can be used to extend the method to allow for seasonal trends, as shown empirically in Section \ref{seasonal-extension}.}
\end{remark}

\subsection{Bounded Invertibility of Haar and Shannon Operators}

In order to achieve an \emph{asymptotically} unbiased estimator of the EWS, we require boundedness of the inverse operator used when performing the bias correction step in the estimation procedure. In the original LSW work of \cite{nason2000wavelet}, boundedness of the inverse of the operator $A$ was shown in the case of the Haar and Shannon wavelets. \tcr{The members of the Daubechies compactly supported family of wavelets are characterised by the number of vanishing moments. The Haar wavelet is equivalently the Daubechies extremal phase wavelet with one vanishing moment, while the Shannon wavelet is the limiting wavelet in the Daubechies compactly supported family \citep{chui1997wavelets} as the number of vanishing moments tends to $\infty$. As noted in \cite{von1997adaptive}, Haar and Shannon wavelets can therefore be viewed as the lower and upper ``extremes'' of the family of Daubechies compactly support wavelets}. Therefore, proving bounded invertibility in these two cases intuitively suggests that bounded invertibility should hold in the case of all Daubechies' compactly supported wavelets. It is for these two wavelets that we prove bounded invertibility here; extensions to other wavelets are left for future work.

Proposition \ref{prop1} shows that the matrix $D^{1}_{J}$ is invertible with bounded inverse. Asymptotically, however, the inverse operator is unbounded, which mirrors a result in the setting of locally stationary Fourier time series \citep[Equation 5]{roueff2011locally}. Intuitively this is expected, since the differencing operator itself is asymptotically non-invertible. We can interpret the expectation result in Proposition \ref{A1exp} as a \textcolor{black}{quantification of the effect of differencing on the spectral structure of the time series.}

We can account for this theoretical issue by proving bounded invertibility for a related, rescaled operator \textcolor{black}{$P: \ell^2 (\mathbb{N}) \to \ell^2 (\mathbb{N})$, where the entries of $P$ are given by $P_{jl} = 2^{-j/2} D^{1}_{jl} 2^{-l/2}$. The use of $P$ is due to the fact that the inverse of the operator $D^1$ is unbounded, so our approach is to work with the operator $P$, which we can show has a bounded inverse.} Showing that $P$ possesses a bounded inverse enables us to show theoretical consistency properties of the wavelet periodogram-based estimator. Hence, we show for the Haar and Shannon wavelet families that $P$ possesses a bounded inverse, which enables consistent estimation of the EWS for Model \eqref{lsw_rep}. Note that, in the practical implementation of the methodology, we still use the inverse of the $D^{1}_{J}$ matrix for correcting the raw wavelet periodogram as it is an invertible matrix (Proposition \ref{prop1}).

\begin{theorem}\label{invertibility1}
Let $\lambda_{\text{min}} (P)$ denote the smallest eigenvalue of $P$, where the entries of $P$ are given by $P_{jl} = 2^{-j/2} D^{1}_{jl} 2^{-l/2}$. Then, for the Haar and Shannon wavelet families, there exists $\delta > 0$ such that  $\lambda_{\text{min}} (P) \geq \delta$ and hence $||P^{-1}|| < \infty$. That is, $P$ is positive-definite and has a bounded inverse.
\end{theorem}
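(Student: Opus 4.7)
The plan is to pass to the frequency domain, where $D^{1}$ acquires a compact integral representation, and then handle the two wavelet families separately, in the spirit of the original treatment of the operator $A$ in \cite{nason2000wavelet}. Since $\Psi_{j}(\tau)$ is real and even in $\tau$, its DTFT $\hat{\Psi}_{j}(\omega)=\sum_{\tau}\Psi_{j}(\tau)e^{-i\omega\tau}$ is real-valued and even. Applying Parseval's identity to $D^{1}_{jl}=2\sum_{\tau}\Psi_{j}(\tau)(\Psi_{l}(\tau)-\Psi_{l}(\tau-1))$, together with $1-\cos\omega=2\sin^{2}(\omega/2)$, yields
$$
D^{1}_{jl}\;=\;\frac{2}{\pi}\int_{-\pi}^{\pi}\hat{\Psi}_{j}(\omega)\hat{\Psi}_{l}(\omega)\sin^{2}(\omega/2)\,d\omega,
$$
so that, for any $x=(x_{j})_{j<0}\in\ell^{2}$,
$$
\langle x,Px\rangle\;=\;\frac{2}{\pi}\int_{-\pi}^{\pi}\Bigl|\sum_{j}x_{j}\,2^{-j/2}\hat{\Psi}_{j}(\omega)\Bigr|^{2}\sin^{2}(\omega/2)\,d\omega\;\geq\;0.
$$
This shows that $P$ is automatically positive semi-definite; the real task is to establish the strict uniform lower bound $\langle x,Px\rangle\geq\delta\|x\|_{2}^{2}$.

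For the Shannon family, $\hat{\Psi}_{j}$ is (up to normalisation) the indicator of the dyadic frequency band $B_{j}=\{2^{j-1}\pi\leq|\omega|\leq 2^{j}\pi\}$. The bands $\{B_{j}\}_{j<0}$ are pairwise disjoint, so $P$ is diagonal. A direct calculation then gives $\hat{\Psi}_{j}^{2}\asymp 2^{2-2j}$ on $B_{j}$, $|B_{j}|\asymp 2^{j}$, and $\sin^{2}(\omega/2)\asymp 2^{2j}$ on $B_{j}$ (for very negative $j$; the finitely many small-$|j|$ cases are handled by direct numerics), yielding $D^{1}_{jj}\asymp 2^{j}$ and hence $P_{jj}=2^{-j}D^{1}_{jj}$ bounded below by a positive constant uniformly in $j$. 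This settles the Shannon case.

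The Haar case is the main obstacle, since the autocorrelation wavelets $\hat{\Psi}_{j}$ have overlapping Fourier supports and $P$ has non-trivial off-diagonal entries. Here I would exploit the closed-form product expression
$$
\hat{\Psi}_{j}(\omega)\;=\;2\sin^{2}(2^{-j-2}\omega)\prod_{k=0}^{-j-2}2\cos^{2}(2^{k-1}\omega),
$$
obtained by iterating the filter recursion $\hat{\psi}_{j-1}(\omega)=\hat{h}(\omega)\hat{\psi}_{j}(2\omega)$ with $|\hat{h}(\omega)|^{2}=2\cos^{2}(\omega/2)$ and $|\hat{g}(\omega)|^{2}=2\sin^{2}(\omega/2)$. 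I would then split the $\omega$-integration into the dyadic annuli $\Omega_{m}=\{2^{-m}\pi\leq|\omega|\leq 2^{-m+1}\pi\}$ and argue that a single scale $j=-m$ dominates $F_{x}(\omega):=\sum_{j}x_{j}\,2^{-j/2}\hat{\Psi}_{j}(\omega)$ on $\Omega_{m}$: finer-scale terms ($j>-m$) are killed by the extra $\sin^{2}$ factors in the product formula at small $\omega$, while coarser-scale terms ($j<-m$) are suppressed by extra $\cos^{2}$ factors of unit order on $\Omega_{m}$. A Cauchy--Schwarz estimate on each annulus then controls the cross-scale coupling as a small perturbation of the diagonal piece, and the $2^{-j/2}$ rescaling is exactly what makes the $m$-th annulus contribute $\asymp x_{-m}^{2}$ to $\langle x,Px\rangle$; summing over $m$ gives a uniform multiple of $\|x\|_{2}^{2}$. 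The hard part is precisely this cross-scale bookkeeping, mirroring the analysis of $A$ in Theorem~2 of \cite{nason2000wavelet} but augmented by the additional $\sin^{2}(\omega/2)$ weight.
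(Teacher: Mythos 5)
Your frequency-domain identity $D^{1}_{jl}=\frac{2}{\pi}\int_{-\pi}^{\pi}\hat{\Psi}_{j}(\omega)\hat{\Psi}_{l}(\omega)\sin^{2}(\omega/2)\,d\omega$ is correct, the positive semi-definiteness observation is clean, and your Shannon argument is sound and essentially the paper's own: the paper likewise works in the Fourier domain, uses disjointness of the dyadic bands to conclude that $P$ is diagonal, and computes $D^{1}_{jj}$ exactly, obtaining $P_{jj}\to 7\pi^{2}/3$ as $j\to-\infty$ with $P_{jj}\geq P_{-1,-1}=8+16/\pi\approx 13.09$ uniformly. The genuine gap is the Haar case, which you leave as a plan --- and the plan as stated cannot be completed, because its central mechanism fails quantitatively, not just technically.

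To see why, compute the entries, as the paper does (in the time domain, from the piecewise-linear Haar autocorrelation function): $P_{jj}=10$ and $P_{j,j\pm m}=6\times 2^{-m/2}$ for $m>0$. The absolute off-diagonal row sum is therefore $12\sum_{m\geq 1}2^{-m/2}=12(\sqrt{2}+1)\approx 29.0$, nearly three times the diagonal entry $10$. Consequently any argument that controls the cross-scale coupling by absolute values --- Cauchy--Schwarz on dyadic annuli, a Schur test, Gershgorin --- produces the vacuous bound $10-29<0$. Your own heuristic decay rates confirm this: after the $2^{-j/2}$ rescaling, coarse-scale terms are suppressed only by $2^{-r/2}$ per octave of scale separation, which is exactly the entry $6\times 2^{-m/2}$, so treating the coupling as ``a small perturbation of the diagonal piece'' is doomed with the true constants. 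The paper circumvents this by exploiting structure your annulus decomposition discards: $P_{jl}$ depends only on $j-l$, so $P$ is a symmetric Toeplitz (Laurent) operator with symbol $f(z)=10+12\,\mathrm{Re}\bigl(\sum_{n\geq 1}2^{-n/2}z^{n}\bigr)$, and by the spectral theory of such operators (\cite{reichel1992eigenvalues}, as invoked in the paper) $\lambda_{\min}(P)\geq \min_{|z|=1}f(z)=f(-1)=22-12\sqrt{2}\approx 5.03>0$. Positivity survives only because the worst-case vector forces the off-diagonal contributions into the alternating-sign sum $\sum_{n}(-2^{-1/2})^{n}$, a cancellation invisible to absolute-value bookkeeping; to rescue your route you would need to track these signs exactly, e.g.\ by recognising and using the Toeplitz structure rather than a perturbative almost-orthogonality estimate.
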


\subsection{Smoothing and Estimation Theory}\label{smooth-est-theory}

As in the original LSW model, the wavelet periodogram is not a consistent estimator and must be smoothed. Smoothing to achieve consistency can be performed, for example, via wavelet thresholding or a running mean. For brevity, we provide theoretical results for wavelet thresholding, building on known results in the literature. 


In order to utilise the result on boundedness of the operator $P^{-1}$ in Theorem \ref{invertibility1}, we rewrite the formula for the expectation of the wavelet periodogram. We can express the expectation in terms of $P$ and a scaled version of $S_{j}$, given by $\tilde{S}_{j} = 2^{j/2} S_{j}$, by rescaling the periodogram appropriately. Concretely, consider the auxiliary process $\epsilon_{t} = \sum_{j,k}  \tilde{w}_{jk} \tilde{\psi}_{j,k-t} \xi_{lm}$, where $\tilde{w}_{jk} = 2^{j/4} w_{jk}$ and $\tilde{\psi}_{j,k-t} = 2^{-j/4} \psi_{j,k-t}$. Then, the expectation of the raw wavelet periodogram (with respect to the rescaled wavelet $\tilde{\psi}_{j,k-t}$) is given by
\begin{equation*}
\mathbb{E} (\tilde{I}_{k}^{j} ) = \sum_{l} P_{jl} \tilde{S}_{l} (k/T) + \mathcal{O} (T^{-1}),
\end{equation*}
where $\tilde{S}_{j} (k/T) = 2^{j/2} S_{j} (k/T)$ and $P_{jl} = 2^{-j/2} D^{1}_{jl} 2^{-l/2}$. 

%

Then, to achieve consistency we take a similar approach to \cite{nason2000wavelet}. For each fixed scale $j$, the rescaled periodogram $\tilde{I}^j_{k}$ of a Gaussian LSW process (which is scaled $\chi^2$-distributed) is smoothed as a function of $z = k/T$ using, for example, discrete wavelet transform (DWT) shrinkage or translation invariant (TI) denoising of \cite{coifman1995translation}. \bernchange{Using the DWT, smoothing is performed with respect to an orthonormal wavelet basis $\{  \phi'_{r_{0},s} (z) , \psi'_{rs} (z)  \}$ of $L^{2}( [0,1])$. Here, $\psi'_{rs} (z) = 2^{r/2} \psi' (2^{r} z - s )$, where $r_{0}$ is the coarsest scale analysed and $s = 0, \ldots, 2^{r} - 1$. Smoothing is achieved using non-linear thresholding of the empirical wavelet coefficients $\hat{v}_{rs}^{j}$ of $\tilde{I}^j_{k}$.}

\bernchange{Explicitly, as described in \cite{von1997adaptive}, for levels $j = -1, \ldots, -J$, the wavelet expansion of the scaled periodogram can be written as 
\begin{equation*}
\tilde{I}^{j}_{\lfloor zT \rfloor} =  \sum_{r} \sum_{s} {v}^{j}_{rs} \psi_{rs}' (z),
\end{equation*}
where the ``true'' wavelet coefficients are given by $v^{j}_{rs} = \int_{0}^{1} \tilde{I}^{j}_{\lfloor zT \rfloor} \psi_{rs}'(z)\, dz$. As in \cite{von1997adaptive}, we employ a slight abuse of notation, with $\psi_{r_{0}-1,s}' = \phi'_{r_{0},s}$, in order to include the scaling coefficient at the coarsest scale $r_{0}$ of the second wavelet scheme. The empirical analogues of the wavelet coefficients are given by
\begin{equation}\label{wav-thresh}
\hat{v}^{j}_{rs} = T^{-1} \sum_{n=0}^{T-1} \tilde{I}^{j}_{n,T} \psi_{rs}' (n/T), \quad \text{for } r = r_{0}, \ldots, \log_{2} (T), \quad s =0, \ldots , 2^{r} - 1.
\end{equation}
Analogously, we can build non-decimated wavelet coefficients for TI denoising. Let $\psi_{r}' (z) = 2^{r/2} \psi' (2^{r} z)$, and let
\begin{equation*}
\hat{v}^{j}_{rs} = T^{-1} \sum_{n=0}^{T-1} \tilde{I}^{j}_{n,T} \psi_{rs}' ((n-s)/T), \quad \text{for } r = r_{0}, \ldots, \log_{2} (T), \quad s =0, \ldots , T - 1.
\end{equation*}
Then, denoising is achieved by applying non-linear hard wavelet thresholding to the wavelet coefficients $\hat{v}^{j}_{rs}$. The resulting estimator is obtained by inverting the wavelet transform using only the coefficients which remain after thresholding:}
\begin{equation*}
\hat{I}^{j}_{\lfloor zT \rfloor} = \sum_{r} \sum_{s} \tilde{v}^{j}_{rs} \psi_{rs}' (z), \quad z \in (0,1),
\end{equation*}
\bernchange{where $\tilde{v}^{j}_{rs} = \hat{v}^{j}_{rs} \mathbb{I} (|\hat{v}^{j}_{rs}| > \lambda)$ are the hard thresholded wavelet coefficients with threshold $\lambda$. The particular threshold is specified in Theorem \ref{smooth_theorem4}, as is the appropriate set of indices $r$ over which to perform the summation.}

The theoretical argument to consistently estimate $S_{j}(k/T)$ is thus as follows. First, smooth the rescaled wavelet periodogram at each scale $j$ using wavelet thresholding. Next, use the (bounded) $P$-inverse matrix to correct the smoothed, rescaled periodogram. Finally, multiply the estimate at each scale by $2^{-j/2}$, since ${S}_{j} (k/T) = 2^{-j/2} \tilde{S}_{j} (k/T)$, yielding the final estimator, $\widehat{S}_j(k/T)$, of $S_{j}(k/T)$. Using the hard threshold $\lambda(j,r,s,T)^{2} = \text{Var} ( \hat{v}_{rs}^{j} ) \log^2 (T)$ when smoothing the periodogram via wavelet thresholding, we can show that the smoothed, corrected estimate $\widehat{S}_j (z)$ is \tcr{consistent with respect to the $L_{2}$ error}.

\begin{theorem}\label{smooth_theorem4}
Let ${\psi}'$ be a wavelet of bounded variation, with $2^r = o (T)$ for wavelet coefficients $\hat{v}_{rs}^j$. For a Gaussian trend-LSW process and using the threshold $\lambda^2 (j, r, s, T) = \emph{Var} ( \hat{v}_{rs}^{j} ) \log^2 (T)$, for each fixed $j$,
\begin{equation}\label{cons-spec}
 \mathbb{E} \left[ \int_0^1  \left( \widehat{S}_j (z) - S_j(z) \right)^2 dz \right] = \mathcal{O} \left( 2^{-j} T^{-2/3} \log^2 (T)\right).
\end{equation}
\end{theorem}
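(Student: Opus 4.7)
The plan is to mirror the three-stage consistency argument from the proof of Theorem 3 of \cite{nason2000wavelet}, substituting our correction operator $P^{-1}$ for the rescaled $A^{-1}$ used there, and making essential use of Proposition \ref{A1exp} (for the first two moments of the differenced raw wavelet periodogram) together with Theorem \ref{invertibility1} (for the bounded invertibility of $P$). Writing the estimator explicitly as $\widehat{S}_j(z) = 2^{-j/2}\sum_l (P^{-1})_{jl}\,\widehat{\tilde I}^{\,l}_{\lfloor zT\rfloor}$, where $\widehat{\tilde I}^{\,l}$ is the hard-thresholded rescaled periodogram at scale $l$, and using $\sum_l (P^{-1})_{jl}\sum_m P_{lm}\tilde S_m(z) = \tilde S_j(z) = 2^{j/2}S_j(z)$, the pointwise error decomposes as
\begin{equation*}
\widehat{S}_j(z) - S_j(z) = 2^{-j/2}\sum_l (P^{-1})_{jl}\bigl[\widehat{\tilde I}^{\,l}_{\lfloor zT\rfloor} - \mathbb{E}\tilde I^{\,l}_{\lfloor zT\rfloor}\bigr] + O(2^{-j/2}T^{-1}),
\end{equation*}
where the $O(T^{-1})$ remainder is furnished by Proposition \ref{A1exp}. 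Squaring, integrating, and combining Cauchy--Schwarz with $\|P^{-1}\| < \infty$ from Theorem \ref{invertibility1} reduces the problem to controlling, at each scale $l$, the mean integrated squared error of the thresholded rescaled periodogram.

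For this per-scale MISE, the rescaling $\tilde I^{\,l}_k = 2^{-l/2} I^{\,l}_k$ combined with Proposition \ref{A1exp} shows that $\tilde I^{\,l}_k$ is a noisy observation of the target $\beta_l(z) := \sum_m P_{lm}\tilde S_m(z)$ with an $O(T^{-1})$ deterministic bias. Because each $W_m$ (hence $S_m$) is Lipschitz, $\beta_l$ is Lipschitz on $[0,1]$, and $\psi'$ has bounded variation by hypothesis -- precisely the smoothness regime in which nonlinear wavelet thresholding is rate-optimal up to logarithmic factors. Following the spectral-thresholding argument of \cite{von1997adaptive} employed in the proof of Theorem 3 of \cite{nason2000wavelet}, the empirical coefficients $\hat v^j_{rs}$ have variance of order $T^{-1}$ (tractable via Isserlis' formula applied to the quadratic-form periodogram under Gaussian $\{\xi_{j,k}\}$), so the hard threshold $\lambda^2 = \mathrm{Var}(\hat v^j_{rs})\log^2 T$ is an inflated universal threshold. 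Combined with the constraint $2^r = o(T)$, the standard bias--variance trade-off for wavelet thresholding of a Lipschitz target yields
\begin{equation*}
\mathbb{E}\int_0^1 \bigl(\widehat{\tilde I}^{\,l}_{\lfloor zT\rfloor} - \beta_l(z)\bigr)^2\,dz = O(T^{-2/3}\log^2 T),
\end{equation*}
where $T^{-2/3} = T^{-2s/(2s+1)}$ is the Lipschitz ($s=1$) rate and the $\log^2 T$ pays for the oversized threshold. Substituting back into the error decomposition, the outer factor $2^{-j}$ from the rescaling survives and produces the claimed bound \eqref{cons-spec}.

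The principal technical obstacle is establishing the per-scale thresholding MISE cleanly in the differencing setting. While the Gaussianity of $\{\xi_{j,k}\}$ keeps $\tilde I^{\,l}_k$ a quadratic form -- so the higher-moment and cumulant bounds needed to justify concentration of the $\hat v^j_{rs}$ remain computable via Isserlis' formula as in the appendix of \cite{nason2000wavelet} -- the increments $\tilde\xi_{j,k} = \xi_{j,k} - \xi_{j,k-1}$ produced by first-differencing are correlated at lag $1$ (cf.\ the remark following the differenced-process representation in Section 3), so the covariance structure of $\tilde I^{\,l}_k$ picks up additional off-diagonal cross-terms absent in the standard LSW setting. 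Verifying that these extra terms neither disturb the variance calibration $\mathrm{Var}(\hat v^j_{rs}) = O(T^{-1})$ that enters the threshold nor the tail bounds underlying the risk analysis is where the bulk of the bookkeeping lies; the argument parallels the LSW appendix proof but must carry the lag-$1$ correlation through all covariance computations.
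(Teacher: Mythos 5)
Your strategy is essentially the paper's: rescale so that the correction operator becomes $P$, decompose the error of $\widehat{S}_j$ into corrected thresholding error plus the $\mathcal{O}(T^{-1})$ bias from Proposition \ref{A1exp}, invoke Theorem \ref{invertibility1}, and import the per-scale thresholding MISE rate $\mathcal{O}(T^{-2/3}\log^2 T)$ for a Lipschitz target (the paper gets this from Theorem 3.1 A of \cite{neumann1995wavelet}; your derivation via the $T^{-2s/(2s+1)}$ rate with $s=1$ is the same substance, with the moment calibration supplied by Lemma \ref{smooth_theorem1}). However, two concrete steps in your write-up do not go through as stated.

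First, your identity $\sum_l (P^{-1})_{jl}\sum_m P_{lm}\tilde{S}_m(z) = \tilde{S}_j(z)$ treats the inversion as exact over all scales, but the estimator corrects only over the $J=\log_2 T$ observable scales, so the decomposition must carry a truncation term $2^{-j+1}\int_0^1 \bigl( \sum_{l<-J} \Lambda_l(z) P^{-1}_{jl} \bigr)^2 dz$, where $\Lambda_l(z) = \sum_n P_{nl}\tilde{S}_n(z)$. The paper bounds this separately as $\mathcal{O}(2^{-j}T^{-1})$, using $D^1_{ln} = \mathcal{O}(2^l)$, $\sum_n S_n(z) < \infty$ and $T = 2^J$; it is asymptotically dominated, but it cannot be absorbed into an exact-inverse identity and must appear in the decomposition.

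Second, reducing to per-scale MISEs via ``Cauchy--Schwarz with $\|P^{-1}\|<\infty$'' loses a logarithmic factor. Writing $e_l(z) = \hat{I}^l_{\lfloor zT\rfloor} - \Lambda_l(z)$, the operator-norm route gives $\mathbb{E}\int_0^1 \bigl(\sum_l P^{-1}_{jl} e_l(z)\bigr)^2 dz \leq \|P^{-1}\|^2 \sum_{l=-J}^{-1} \mathbb{E}\int_0^1 e_l(z)^2\, dz$, and since $J \asymp \log_2 T$ scales each contribute $\mathcal{O}(T^{-2/3}\log^2 T)$, you obtain only $\mathcal{O}(2^{-j}T^{-2/3}\log^3 T)$, weaker than \eqref{cons-spec}. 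The paper instead applies a generalized-Minkowski bound row-wise, $I \leq 2^{-j+1}\bigl( \sum_{l=-J}^{-1} P^{-1}_{jl} \bigl( \mathbb{E}\int_0^1 e_l(z)^2\, dz\bigr)^{1/2} \bigr)^2$, and uses that the entries of $P^{-1}$ decay exponentially --- absolute row-summability, a strictly stronger property than $\|P^{-1}\|<\infty$ --- so the row sum is $\mathcal{O}(1)$ and the rate $\log^2 T$ is preserved. Your closing concern about the lag-$1$ correlation of $\tilde{\xi}_{j,k}$ is legitimate but is resolved in the paper upstream of this theorem: the differenced series remains Gaussian, so its periodogram is asymptotically scaled $\chi^2$ with variance proportional to the squared mean (Proposition \ref{A1exp}), and Lemma \ref{smooth_theorem1} already encodes the resulting coefficient variances, so no additional cumulant bookkeeping is needed in the proof of Theorem \ref{smooth_theorem4} itself.
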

The rate obtained in Equation (\ref{cons-spec}) is a consequence of known results on wavelet thresholding estimators, utilised in \cite{neumann1995wavelet}, and the multiplication of $2^{-j/2}$ that occurs in the estimation procedure. The rate highlights the fact that differencing the time series has resulted in an ``information loss", with spectral estimation in coarser scales resulting in slower rates of convergence. As a consequence, we can only consistently estimate the wavelet spectrum for a proportion of the finest scales $j$. \tcr{In particular, we have that we need $2^{-j} = \mathcal{O} (T^{-2/3-\delta} )$ for some $\delta>0$ in order for the mean-squared error of the EWS estimator to be $o(1)$}. Next, we tackle estimation of the local autocovariance via the EWS estimate. \begin{proposition}\label{A1-c-est}
Define $\hat{c} (z,\tau)$ by replacing $S_{j} (z)$ by $\widehat{S}_{j} (z)$ in the equation for the local autocovariance and replacing the lower limit in the sum from $j=-\infty$ to $j=-J_{0}$, i.e.
\begin{equation*}
\hat{c} (z,\tau) = \sum_{j=-J_{0}}^{-1} \widehat{S}_{j} (z) \Psi_{j} ( \tau) .
\end{equation*}
Let $T \rightarrow \infty$ and let $J_{0} = \alpha \log_{2} T$ for $\alpha \in (0,1)$. Assume that $S_{j} (z) \leq D 2^{\gamma j}$ for some positive constant $D$, where $\gamma = (3\alpha)^{-1} - 1/2$. Then,
\begin{equation*}
\mathbb{E} \left[  \int_{0}^{1} \left( \hat{c} (z,\tau) - c(z,\tau) \right)^{2} dz   \right] = \mathcal{O} \left(T^{\alpha -2/3} \log^{2}(T)\right),
\end{equation*}
i.e. $\hat{c} (z,\tau)$ is a consistent estimator of $c(z,\tau)$ for each fixed $\tau \in \mathbb{Z}$, provided that $T^{\alpha-2/3} \log^{2} (T) \rightarrow 0$.
\end{proposition}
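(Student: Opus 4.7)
The plan is a standard bias--variance decomposition: split the $L^2$ error of $\hat{c}(z,\tau)$ into a truncation piece (bias) and an estimation piece (variance), then bound each via Theorem \ref{smooth_theorem4} and the decay assumption on $S_j$. Writing
\begin{equation*}
\hat{c}(z,\tau) - c(z,\tau) = \sum_{j=-J_0}^{-1} \bigl(\widehat{S}_j(z) - S_j(z)\bigr)\Psi_j(\tau) \;-\; \sum_{j=-\infty}^{-J_0-1} S_j(z)\Psi_j(\tau),
\end{equation*}
and applying $(a+b)^2 \leq 2a^2+2b^2$ reduces the problem to bounding the two terms separately. Throughout I would use the elementary bound $|\Psi_j(\tau)| \leq \Psi_j(0) = 1$, which follows from Cauchy--Schwarz applied to the definition of the autocorrelation wavelet.

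For the estimation term, Minkowski in $L^2([0,1])$ followed by Cauchy--Schwarz across the $J_0$ scales gives
\begin{equation*}
\mathbb{E}\int_0^1 \Bigl(\sum_{j=-J_0}^{-1}(\widehat{S}_j - S_j)\Psi_j(\tau)\Bigr)^2 dz \;\leq\; J_0 \sum_{j=-J_0}^{-1}\mathbb{E}\int_0^1 \bigl(\widehat{S}_j(z) - S_j(z)\bigr)^2 dz.
\end{equation*}
By Theorem \ref{smooth_theorem4} each summand is $\mathcal{O}(2^{-j} T^{-2/3}\log^2 T)$, so the geometric sum is dominated by its largest term at $j = -J_0$, yielding a bound of order $J_0 \cdot 2^{J_0} T^{-2/3} \log^2 T = \mathcal{O}(T^{\alpha - 2/3}\log^2 T)$ once the $J_0 \sim \log T$ prefactor is absorbed into the logarithm. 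For the truncation term, the hypothesis $S_j(z) \leq D\cdot 2^{\gamma j}$ together with $|\Psi_j(\tau)| \leq 1$ produces the deterministic bound
\begin{equation*}
\Bigl|\sum_{j=-\infty}^{-J_0-1} S_j(z)\Psi_j(\tau)\Bigr| \;\leq\; D\sum_{j=-\infty}^{-J_0-1} 2^{\gamma j} \;=\; \mathcal{O}\bigl(2^{-\gamma J_0}\bigr) \;=\; \mathcal{O}\bigl(T^{-\alpha\gamma}\bigr),
\end{equation*}
whose squared $L^2([0,1])$ norm is $\mathcal{O}(T^{-2\alpha\gamma})$. The calibration $\gamma = (3\alpha)^{-1} - 1/2$ is precisely what yields $2\alpha\gamma = 2/3 - \alpha$, so the truncation contribution is also $\mathcal{O}(T^{\alpha - 2/3})$, matching the estimation rate and producing the claimed $\mathcal{O}(T^{\alpha - 2/3}\log^2 T)$ after summing.

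The main obstacle is the bias--variance trade-off itself: Theorem \ref{smooth_theorem4} carries a $2^{-j}$ penalty that blows up at coarse scales, so retaining more scales to suppress the truncation bias inflates the variance, and the decay exponent $\gamma$ must therefore be tied to $\alpha$ to make the two rates balance --- which is exactly the role of the hypothesis $\gamma = (3\alpha)^{-1} - 1/2$. Everything else is routine: summing a geometric series, exchanging expectation and integration via Fubini, and invoking the uniform bound $|\Psi_j(\tau)| \leq 1$. Consistency in the $L^2$ sense then follows immediately whenever $T^{\alpha - 2/3}\log^2 T \to 0$, i.e.\ whenever $\alpha < 2/3$.
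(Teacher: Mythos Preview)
Your decomposition and overall strategy match the paper's proof exactly: split into an estimation term and a truncation term, bound the truncation term via the decay assumption $S_j(z) \leq D\,2^{\gamma j}$ and $|\Psi_j(\tau)| \leq 1$, and bound the estimation term using Theorem~\ref{smooth_theorem4}. The calibration $\gamma = (3\alpha)^{-1} - 1/2$ is handled identically.

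There is one small slip in the estimation term. Your Cauchy--Schwarz step yields
\[
J_0 \sum_{j=-J_0}^{-1} \mathcal{O}\bigl(2^{-j} T^{-2/3}\log^2 T\bigr) = \mathcal{O}\bigl(J_0\, 2^{J_0}\, T^{-2/3}\log^2 T\bigr) = \mathcal{O}\bigl(T^{\alpha-2/3}\log^3 T\bigr),
\]
so the $J_0 \asymp \log T$ prefactor does \emph{not} absorb into the existing $\log^2 T$; it produces an extra logarithmic factor and you end up with $\log^3 T$ rather than the claimed $\log^2 T$. The paper avoids this by applying Minkowski's inequality for the norm $\|X\| = (\mathbb{E}\int_0^1 X^2\,dz)^{1/2}$ directly, giving
\[
\Bigl(\sum_{j=-J_0}^{-1} \Psi_j(\tau)\,\bigl(\mathbb{E}\!\int_0^1(\widehat{S}_j - S_j)^2\,dz\bigr)^{1/2}\Bigr)^2
\leq \Bigl(\sum_{j=-J_0}^{-1} \mathcal{O}\bigl(2^{-j/2} T^{-1/3}\log T\bigr)\Bigr)^2,
\]
and since $\sum_{j=-J_0}^{-1} 2^{-j/2} = \mathcal{O}(2^{J_0/2})$, the square recovers exactly $\mathcal{O}(T^{\alpha-2/3}\log^2 T)$ with no additional logarithm. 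The distinction does not affect consistency (the condition $\alpha<2/3$ is unchanged), but it does affect the stated rate, so you should replace the Cauchy--Schwarz bound by the direct Minkowski bound to recover the proposition as written.
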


We reiterate that the rescaling argument used to achieve consistency in Theorem \ref{smooth_theorem4} and Proposition \ref{A1-c-est} is performed purely for theoretical reasons; practical considerations for estimation are discussed at the end of the section. The results in Theorem \ref{smooth_theorem4} and Proposition \ref{A1-c-est} show that in the case where the trend is Lipschitz continuous, we can consistently estimate both the EWS and LACV of the original process using first-order differences and a modified bias correction.

The assumption placed on the decay rate of the EWS in Proposition \ref{A1-c-est} is a purely technical one, utilised in order to ensure mean square consistency of the estimator. The assumption controls for the fact that the local autocovariance is estimated using the finest $J_{0}$ scales, instead of across infinite scales which are not available in practice. The specific form of the decay rate is calculated in order to balance with the error rate of the wavelet thresholding procedure.


\subsection{$n$-th Order Differencing}\label{n-order-diff}

In some cases, a first-difference may not be enough to remove a trend. Further differencing can be performed, although it is usually only necessary to at most second-difference a time series \citep{brockwell1991time}. If we assume that the $(n-1)$-th derivative of $\mu$ is Lipschitz, then the $n$-th difference of the time series will be (asymptotically) free of trend. We denote the $n$-th difference of a time series as $\{ \nabla^{n} X_{t} \}$. 

To calculate the expectation of the squared non-decimated wavelet coefficients of the $n$-th differenced series, we can argue in a similar fashion to the case of first-differencing. Denote the operator $A^{n} = (A_{jl}^{n})_{j, l <0}$ by $A_{jl}^{n} : = \langle \Psi_j  , B^{n} \Psi_l  \rangle = \sum_\tau \Psi_j (\tau) \Psi_l (\tau -n)$, and the $J$-dimensional matrix $A_J^{n} : = (A_{jl}^{n})_{j, l=-1, \ldots , -J}$. The entries of the matrix $A^{n}_{J}$ are given by the inner product of the autocorrelation wavelet with the autocorrelation wavelet at lag $n$. If we difference a time series $n$ times, then the expectation of the squared wavelet coefficients will involve linear combinations of inner product autocorrelation wavelet matrices from lag 0 (i.e. the standard $A$-matrix) up to lag $n$ (the matrix $A^{n}$). The result for the expectation of the wavelet periodogram of the $n$-th differenced time series mirrors that of the first-differenced series, and is given by the following proposition.
\begin{proposition}\label{nth-diff-exp}
Let $\tilde{d}_{j,k} = \sum_{t} \nabla^{n} X_{t} \psi_{j,k-t}$ be the non-decimated wavelet coefficients of $\Delta^{n} X_{t}$, and let $\tilde{I}^{j}_{k}:= | \tilde{d}_{j,k}|^{2}$. If the $(n-1)$-th derivative of $\mu$ is Lipschitz, then
\begin{equation*}
\mathbb{E} (\tilde{I}^{j}_{k}) = \sum_{l} D^{n}_{jl} S_{l} (k/T) + \mathcal{O}(T^{-1}) ,
\end{equation*} 
where
\begin{equation*}
D^{n}_{jl} = {2n \choose n} A_{jl} + 2 \sum_{\tau=1}^{n} (-1)^{\tau} {2n \choose n+\tau}A_{jl}^{\tau}.
\end{equation*} 
\end{proposition}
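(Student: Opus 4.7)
The plan is to directly adapt the computation in Proposition \ref{A1exp} (the $n=1$ case), following essentially the same strategy as in \cite{nason2000wavelet}: split off the trend, reduce to a quadratic form in the $\xi_{l,m}$, then identify the resulting sum of inner products of autocorrelation wavelets.

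First, decompose
\begin{equation*}
\tilde{d}_{j,k} = \sum_t \nabla^n \mu(t/T)\, \psi_{j,k-t} + \sum_t \nabla^n \epsilon_t\, \psi_{j,k-t}.
\end{equation*}
Because $\mu^{(n-1)}$ is Lipschitz, repeated application of the mean value theorem gives $|\nabla^n \mu(t/T)| = \mathcal{O}(T^{-n}) = \mathcal{O}(T^{-1})$ uniformly in $t$. Combined with the compact support of $\psi_{j,\cdot}$, this forces the trend contribution to $\mathbb{E}(\tilde I^j_k)$ to be $\mathcal{O}(T^{-1})$, in parallel with how the trend was handled in the $n=1$ proof. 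So it suffices to analyse the stochastic part, i.e.\ to compute the expectation when the series is the pure LSW process $\nabla^n \epsilon_t = \sum_{r=0}^n (-1)^r \binom{n}{r} \epsilon_{t-r}$.

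Substituting the LSW representation $\epsilon_t = \sum_{l,m} w_{l,m} \psi_{l,m-t}\xi_{l,m}$, swapping sums, and using orthonormality of $\{\xi_{l,m}\}$, one obtains
\begin{equation*}
\mathbb{E}\bigl(\tilde I^j_k\bigr) = \sum_{l,m} w_{l,m}^2 \left( \sum_{r=0}^n (-1)^r \binom{n}{r} \sum_t \psi_{l,m-t+r}\, \psi_{j,k-t} \right)^{\!2}.
\end{equation*}
Using $w_{l,m} = W_l(m/T)+\mathcal{O}(C_l/T)$ together with Lipschitz continuity of $W_l$ (which lets us replace $W_l(m/T)^2$ by $S_l(k/T)$ on the support of the wavelet, at the price of $\mathcal{O}(T^{-1})$ errors, exactly as in Proposition~4 of \cite{nason2000wavelet}), the leading term is $\sum_l S_l(k/T) \cdot \sum_u f_{jl}(u)^2$, where $f_{jl}(u) = \sum_{r=0}^n (-1)^r \binom{n}{r} g_{jl}(u+r)$ and $g_{jl}(v) = \sum_s \psi_{l,s+v}\psi_{j,s}$.

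The final step is the combinatorial identification of $\sum_u f_{jl}(u)^2$ with $D^n_{jl}$. Expanding the square and introducing $\tau = r_2 - r_1$, the inner sum telescopes to $\sum_v g_{jl}(v) g_{jl}(v+\tau)$, which by the same change-of-variables argument that produced $A_{jl}$ in the classical case equals $A^{\tau}_{jl}$. The binomial prefactor becomes $(-1)^\tau \sum_{r_1} \binom{n}{r_1}\binom{n}{r_1+\tau}$, and Vandermonde's identity collapses this to $(-1)^\tau \binom{2n}{n+\tau}$. Using symmetry $A^{-\tau}_{jl} = A^{\tau}_{jl}$ (inherited from $\Psi_j(\tau) = \Psi_j(-\tau)$) to pair up the $\pm\tau$ terms yields precisely $D^n_{jl} = \binom{2n}{n}A_{jl} + 2\sum_{\tau=1}^n (-1)^\tau \binom{2n}{n+\tau} A^{\tau}_{jl}$.

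The main obstacle, and the only place real care is required, is controlling all the approximation errors uniformly so that they collapse into a single $\mathcal{O}(T^{-1})$ remainder: namely, the bound on $\nabla^n \mu$, the replacement of $w_{l,m}$ by $W_l(m/T)$, and the Lipschitz replacement of $W_l(m/T)^2$ by $S_l(k/T)$ must be combined with the summability assumptions $\sum_j C_j < \infty$ and $\sum_j 2^{-j}L_j < \infty$. This is essentially the same bookkeeping as in the $n=1$ proof, with the $n$-dependent (but $T$-independent) combinatorial coefficients $\binom{n}{r}$ playing no role in the rates; everything else is the clean algebraic manipulation sketched above.
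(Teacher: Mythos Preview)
Your proposal is correct and follows essentially the same approach as the paper: expand $\nabla^n \epsilon_t$ via the binomial formula, group the resulting cross terms by lag $\tau$, apply Vandermonde's identity to collapse the binomial sums into $\binom{2n}{n+\tau}$, and invoke the $n=1$ argument for the $\mathcal{O}(T^{-1})$ remainder. Your write-up is somewhat more explicit about the quadratic-form manipulation (introducing $f_{jl}$, $g_{jl}$ and the symmetry $A^{-\tau}_{jl}=A^{\tau}_{jl}$), whereas the paper argues more informally by pattern-matching against the $n=1$ case, but the content is the same.
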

\begin{corollary}
For second differences,
\begin{equation*}
\mathbb{E} (\tilde{I}^{j}_{k}) = \sum_{l} \left( 6 A_{jl} - 8 A_{jl}^{1} + 2 A_{jl}^{2}  \right) S_{l} \left( \frac{k}{T} \right) + \mathcal{O}(T^{-1}) .
\end{equation*} 
\end{corollary}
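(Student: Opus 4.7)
The corollary is a direct specialization of Proposition \ref{nth-diff-exp} to the case $n = 2$, so the plan is simply to evaluate the coefficients in the general formula
\begin{equation*}
D^{n}_{jl} = \binom{2n}{n} A_{jl} + 2 \sum_{\tau=1}^{n} (-1)^{\tau} \binom{2n}{n+\tau} A_{jl}^{\tau}
\end{equation*}
at $n = 2$. First I would note that the hypothesis of Proposition \ref{nth-diff-exp} (that $\mu^{(n-1)}$ is Lipschitz) is assumed to hold for $n=2$, so the proposition applies. Then the only work is the combinatorial computation: $\binom{4}{2} = 6$, $\binom{4}{3} = 4$, and $\binom{4}{4} = 1$. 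Substituting gives
\begin{equation*}
D^{2}_{jl} = 6 A_{jl} + 2\bigl( - 4 A_{jl}^{1} + 1 \cdot A_{jl}^{2}\bigr) = 6 A_{jl} - 8 A_{jl}^{1} + 2 A_{jl}^{2},
\end{equation*}
and plugging this into the statement of Proposition \ref{nth-diff-exp} yields the claimed expression for $\mathbb{E}(\tilde I^{j}_{k})$, with the $\mathcal{O}(T^{-1})$ remainder term carried over unchanged.

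As a sanity check I would also verify the result by a direct expansion, since $n=2$ is small enough that this is essentially costless and confirms the binomial identities underlying the general formula. Explicitly, $\nabla^{2} X_{t} = X_{t} - 2 X_{t-1} + X_{t-2}$, so that the non-decimated wavelet coefficients satisfy $\tilde{d}_{j,k} = \sum_{t} (X_t - 2X_{t-1} + X_{t-2}) \psi_{j,k-t}$. Squaring and using the Lipschitz assumption on $\mu'$ to absorb the trend into an $\mathcal{O}(T^{-1})$ term (as done in the first-difference case), the cross-terms in the expectation produce contributions from $\langle \Psi_j, \Psi_l \rangle$, $\langle \Psi_j, B \Psi_l \rangle$, and $\langle \Psi_j, B^{2} \Psi_l \rangle$, weighted by $1 + 4 + 1 = 6$, $-2 - 2 = -4$ (appearing twice via symmetry, giving $-8$), and $1 + 1 = 2$ respectively. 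This reproduces the coefficients $6$, $-8$, $2$ above.

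There is no substantive obstacle: Proposition \ref{nth-diff-exp} does all the analytical work (handling the trend bias, the locally stationary structure, and the inner-product matrices $A^{\tau}_{jl}$), and the corollary is obtained by evaluating three binomial coefficients. The only point worth flagging is the implicit smoothness hypothesis: unlike the first-difference case which required $\mu$ Lipschitz, the second-difference result requires $\mu'$ Lipschitz, and this should be stated explicitly if the corollary is to be applied in practice.
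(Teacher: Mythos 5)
Your proposal is correct and matches the paper's (implicit) argument: the paper offers no separate proof of the corollary precisely because it is the immediate specialization of Proposition \ref{nth-diff-exp} to $n=2$, with $\binom{4}{2}=6$, $\binom{4}{3}=4$, $\binom{4}{4}=1$ giving $D^{2}_{jl} = 6A_{jl} - 8A^{1}_{jl} + 2A^{2}_{jl}$ exactly as you compute. Your direct expansion of $\nabla^{2}X_{t} = X_{t} - 2X_{t-1} + X_{t-2}$ is a sound sanity check consistent with the counting argument in the paper's proof of the proposition, and your flag about the hypothesis ($\mu'$ Lipschitz rather than $\mu$) is exactly the condition the proposition states.
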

Note that Proposition \ref{nth-diff-exp} generalises Proposition 4 in \cite{nason2000wavelet}, in which $n=0$. As in the case of first-differences, the bias operator $D^{n}$ does not possess a bounded inverse. Intuitively, for higher order differences, the eigenvalues of $D^{n}$ decay to 0 at increasingly faster rates. As such, correcting the estimation in a similar fashion as was described in Theorem \ref{smooth_theorem4} yields much slower rates of convergence. 

For the remainder of the section, we focus on the situation where a second-difference suffices to remove the trend. We can show that using first-differences suffices to obtain a consistent spectral estimate, even though the trend has not been fully removed. The first-differences of the trend are Lipschitz continuous, and as such the magnitude of the wavelet coefficients can be bounded using the wavelet characterisation of H{\"o}lder spaces \citep{daubechies1992ten}. Using this bound, we can bound the error of the first-differenced estimator in terms of the error due to estimation and error due to the squared wavelet coefficients of the first-differenced trend. Using this argument we are able to show consistency of both the spectrum and local autocovariance estimator as follows.
\begin{theorem}\label{2nd-diff-spec}
Assume that the first derivative of $\mu$ is Lipschitz, and let $J_{1} = \beta \log_{2} T$ for $\beta \in (0,1)$. Further assume that the smoothed raw wavelet periodogram is corrected across the finest $J_{1}$ scales only. Under the same conditions as Theorem \ref{smooth_theorem4}, for each fixed $j$, $\widehat{S}_{j} (z)$ is a consistent estimator of $S_{j} (z)$, provided that $T^{7\beta-4} \rightarrow 0$ as $T \rightarrow \infty$, since
\begin{equation*}
 \mathbb{E} \left[ \int_0^1  \left( \widehat{S}_j (z) - S_j(z) \right)^2 dz \right] = \mathcal{O} \left( 2^{-j} T^{7\beta-4}  \right)   + \mathcal{O} \left( 2^{-j} T^{-2/3} \log^2 (T) \right) + \mathcal{O} \left( 2^{-j} T^{-\beta} \right).
\end{equation*}
Define $\hat{c} (z,\tau)$ by replacing $S_{j} (z)$ by $\widehat{S}_{j} (z)$ in the equation for the local autocovariance and replacing the lower limit in the sum from $j=-\infty$ to $j=-J_{0}$, i.e.
\begin{equation*}
\hat{c} (z,\tau) = \sum_{j=-J_{0}}^{-1} \widehat{S}_{j} (z) \Psi_{j} ( \tau) ,
\end{equation*}
where $J_{0} = \alpha \log_{2} T$ for $\alpha < \beta \in (0,1)$. Under the assumptions of Proposition \ref{A1-c-est}, provided that $T^{\alpha + 7\beta-4} \rightarrow 0$ and $T^{\alpha-2/3} \log^{2} (T) \rightarrow 0$, $\hat{c} (z,\tau)$ is a consistent estimator of $c(z,\tau)$, since for each fixed $\tau \in \mathbb{Z}$,
\begin{equation*}
\mathbb{E} \left[  \int_{0}^{1} \left( \hat{c} (z,\tau) - c(z,\tau) \right)^{2} dz   \right] = \mathcal{O} \left(T^{\alpha + 7 \beta -4} \right) + \mathcal{O} \left( T^{\alpha-2/3} \log^{2}(T)\right) + \mathcal{O}(T^{\alpha - \beta}). 
\end{equation*}
\end{theorem}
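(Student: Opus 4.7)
The plan is to build on Theorem \ref{smooth_theorem4} and Proposition \ref{A1-c-est}, handling the additional complication that first-differencing no longer fully removes the trend when only $\mu'$ (rather than $\mu$ itself) is Lipschitz. By the mean value theorem $\nabla\mu(t/T) = \mu'(\xi_t)/T$, so the residual trend has amplitude $\mathcal{O}(T^{-1})$ and Lipschitz constant $\mathcal{O}(T^{-1})$ as a function of the rescaled variable $t/T$. I therefore split the empirical wavelet coefficient of the first-differenced series as
\begin{equation*}
\tilde d_{j,k} \;=\; R_{j,k} \;+\; \tilde d^{\,\epsilon}_{j,k}, \qquad R_{j,k} := \sum_t \nabla\mu(t/T)\,\psi_{j,k-t},
\end{equation*}
where $R_{j,k}$ is a deterministic residual-trend coefficient and $\tilde d^{\,\epsilon}_{j,k}$ is precisely the stochastic part governed by Proposition \ref{A1exp} (as if no trend were present). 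The raw periodogram then expands as
\begin{equation*}
\tilde I^{j}_{k} \;=\; R_{j,k}^{2} \;+\; 2\,R_{j,k}\,\tilde d^{\,\epsilon}_{j,k} \;+\; \bigl(\tilde d^{\,\epsilon}_{j,k}\bigr)^{2},
\end{equation*}
isolating three sources of error in the corrected, smoothed spectral estimator $\widehat S_j$ that correspond to the three terms in the stated rate.

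The pivotal lemma is a scale-dependent bound on $R_{j,k}$, obtained from the wavelet characterisation of H\"older regularity \citep{daubechies1992ten}. Since the analysing wavelet has at least one vanishing moment, subtracting $\nabla\mu(k/T)$ inside the sum and applying the Lipschitz bound on $\nabla\mu$ gives
\begin{equation*}
|R_{j,k}| \;\lesssim\; T^{-2}\sum_{n}|n|\,|\psi_{j,n}| \;=\; \mathcal{O}\bigl(T^{-2}\,2^{-3j/2}\bigr).
\end{equation*}
For the squared-residual piece, the induced bias of $\widehat S_j$ is $\sum_l (D^1_{J_1})^{-1}_{jl} R_{l,k}^{2}$; using $(D^1_{J_1})^{-1}_{jl} = 2^{-j/2}2^{-l/2}\,P^{-1}_{jl}$ together with the uniform bound on $P^{-1}$ from Theorem \ref{invertibility1}, the sum over the finest $J_1 = \beta\log_2 T$ scales is dominated by the coarsest retained scale $l=-J_1$ and produces the $\mathcal{O}(2^{-j}T^{7\beta-4})$ contribution after integration in $z$. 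The $(\tilde d^{\,\epsilon}_{j,k})^2$ piece reduces to the setting of Theorem \ref{smooth_theorem4} and inherits the $\mathcal{O}(2^{-j}T^{-2/3}\log^2 T)$ rate unchanged. The cross term $2R_{j,k}\tilde d^{\,\epsilon}_{j,k}$ has mean zero, and Cauchy--Schwarz combined with $\mathbb{E}[(\tilde d^{\,\epsilon}_{j,k})^{2}]=\mathcal{O}(1)$ per scale yields the residual $\mathcal{O}(2^{-j}T^{-\beta})$ contribution.

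The LACV statement follows by inserting the three-term EWS rate into the standard decomposition
\begin{equation*}
\mathbb{E}\!\left[\int_0^1 \bigl(\hat c(z,\tau) - c(z,\tau)\bigr)^{2} dz\right] \;\lesssim\; \sum_{j=-J_0}^{-1} \Psi_j(\tau)^{2}\,\mathbb{E}\!\int_0^1 \bigl(\widehat S_j(z)-S_j(z)\bigr)^{2} dz \;+\; \Bigl(\sum_{j<-J_0} S_j(z)\,\Psi_j(\tau)\Bigr)^{\!2},
\end{equation*}
controlling the truncation tail via the decay assumption $S_j\le D\,2^{\gamma j}$ as in Proposition \ref{A1-c-est}, and summing $2^{-j}$ geometrically from $j=-1$ to $j=-J_0=-\alpha\log_2 T$. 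Each of the three EWS rates then acquires an extra factor of $2^{J_0}=T^{\alpha}$, matching the three rates stated for $\hat c$; the requirement $\alpha<\beta$ is imposed precisely so that every scale used in $\hat c$ lies within the $J_1$ scales over which $\widehat S_j$ has been bias-corrected.

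The main obstacle is the first step: obtaining a bound on $R_{j,k}$ sharp enough, after propagation through the rescaled inverse operator, to produce the exponent $7\beta-4$ rather than anything slacker. This requires careful book-keeping of the $2^{-j/2}$ factors coming both from the rescaling to $P$ and from the final multiplication by $2^{-j/2}$ that converts the rescaled-spectrum estimator back to $\widehat S_j$, together with the explicit computation of $\sum_n |n|\,|\psi_{j,n}|$ for the two wavelet families (Haar and Shannon) to which Theorem \ref{invertibility1} applies.
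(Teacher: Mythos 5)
Your overall architecture matches the paper's proof: bound the wavelet coefficients of the residual (differenced) trend via the H\"older characterisation and a vanishing moment, propagate through the bounded rescaled inverse $P^{-1}$ of Theorem \ref{invertibility1}, reuse the machinery of Theorem \ref{smooth_theorem4} for the stochastic part, then obtain the LACV rate as in Proposition \ref{A1-c-est} with the geometric sum over $2^{-j/2}$ supplying the extra factor $T^{\alpha}$. However, there is one genuine gap: the term $\mathcal{O}(2^{-j}T^{-\beta})$ does not come from the cross term. In the paper it is the truncation error from performing the bias correction over the finest $J_{1}$ scales only: the corrected estimator omits the coarse-scale content $2^{-j+1}\int_0^1 \bigl(\sum_{l<-J_{1}}\Lambda_{l}(z)P^{-1}_{jl}\bigr)^{2}dz$, where $\Lambda_{l}(z)=\sum_{n}P_{nl}\tilde{S}_{n}(z)=\mathcal{O}(2^{l/2})$; with $P^{-1}$ bounded this tail is $\mathcal{O}(2^{-j}2^{-J_{1}})=\mathcal{O}(2^{-j}T^{-\beta})$, exactly the bound on term $II$ in the proof of Theorem \ref{smooth_theorem4} with $J$ replaced by $J_{1}$. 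Your cross term $2R_{j,k}\tilde{d}^{\,\epsilon}_{j,k}$ cannot generate this rate: its mean-square contribution is proportional to $R_{l,k}^{2}$, hence carries powers of $T^{-1}$ amplified by at most $2^{cJ_{1}}$ at the coarsest retained scale --- a $T^{7\beta/2-4}$-type quantity absorbed into the first error term, with no mechanism to produce $T^{-\beta}$. Since you never bound the uncorrected tail $\sum_{l<-J_{1}}\Lambda_{l}(z)P^{-1}_{jl}$ --- which is the entire role of the hypothesis, which you quote, that correction is carried out across the finest $J_{1}$ scales only --- your proof as written fails to account for one of the three stated error terms.

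Two smaller points. First, your residual bound $|R_{j,k}|=\mathcal{O}(T^{-2}2^{-3j/2})$ is a factor of $T^{-1}$ sharper than what the paper effectively uses: its proof asserts a bias of $\mathcal{O}(2^{-7j/2}T^{-2})$ in $\mathbb{E}(\tilde{I}^{j}_{k})$, corresponding to $|R_{j,k}|=\mathcal{O}(T^{-1}2^{-3j/2})$ before rescaling. With your bound the squared-bias term propagates to $\mathcal{O}(2^{-j}T^{7\beta-8})$, not to $\mathcal{O}(2^{-j}T^{7\beta-4})$ as you assert; this is harmless for the stated upper bound, but your claim that your bookkeeping ``produces the exponent $7\beta-4$'' is arithmetically off. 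Second, your displayed LACV inequality $\mathbb{E}[(\sum_{j}X_{j}\Psi_{j})^{2}]\lesssim\sum_{j}\Psi_{j}^{2}\,\mathbb{E}[X_{j}^{2}]$ discards cross terms and is false in general; the paper instead uses the $L^{2}$ triangle inequality $\mathbb{E}[(\sum_{j}X_{j}\Psi_{j}(\tau))^{2}]\leq\bigl(\sum_{j}|\Psi_{j}(\tau)|\,(\mathbb{E}X_{j}^{2})^{1/2}\bigr)^{2}$ together with $|\Psi_{j}(\tau)|\leq 1$, and it is the square of $\sum_{j=-J_{0}}^{-1}2^{-j/2}$ that supplies $T^{\alpha}$ --- your rates coincide, but the step needs repairing. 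Relatedly, decomposing $\tilde{I}^{j}_{k}$ into three pieces and smoothing each separately is not licensed by the nonlinear hard threshold; the paper avoids this by folding the residual-trend bias into the expectation of the full periodogram via the H\"older characterisation and then invoking the thresholding MSE bound of \cite{neumann1995wavelet} wholesale.
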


Hence, we can still consistently estimate the EWS and LACV via first-order differences when the trend of the time series has a Lipschitz continuous first derivative. Therefore, we argue that in most practical scenarios, it is sufficient to only perform one difference in order to estimate the evolutionary wavelet spectrum and local autocovariance in the presence of a trend. 

\textcolor{black}{In practice, we set $J_0=J_1$ when performing EWS and LACV estimation. That is, we use $J_1$ scales for estimating the EWS, and the same number of scales for estimating the LACV. Performing estimation in this fashion ensures that the spectral estimate is well-behaved, and is an approach commonly adopted in the LSW literature. We suggest using $J_{1} = \lfloor \beta \log_{2} (T) \rfloor$, with $\beta = 7/10$, motivated by extensive numerical results given in Appendix \ref{alpha-beta-parameter-choice}.} Furthermore, this choice is in agreement with other discussion in the literature, see for example \cite{sanderson2010estimating}. \textcolor{black}{Finally, an algorithmic description of the spectral estimation procedure, where smoothing is carried out either via wavelet thresholding or using a running mean, is detailed in Algorithm \ref{alg-method-spec}. }

\begin{algorithm}[t]
\caption{\textcolor{black}{Spectral Estimation Procedure}}
\label{alg-method-spec}
\DontPrintSemicolon
\SetAlgoLined
\SetKwData{return}{return}

\SetKwProg{Fn}{Function}{:}{}

\KwIn{Data $\{X_t\}_{t = 1}^n$, spectral estimation wavelet $\psi^0$, maximum scale $J_1$, smoothing wavelet $\psi'$ or smoothing bin width parameter $W$}

1. Compute wavelet periodogram of first-differenced time series:
\begin{equation*}
\tilde{I}_k^j  = \left( \sum_t \nabla X_t \psi_{j,k-t}^0 \right)^2, \ j = -1, \ldots , -J_1. 
\end{equation*}

2. Compute the smoothed wavelet periodogram $\hat{I}^j_k$ for $j = -1, \ldots , -J_1$, either by
\begin{enumerate}[(A)]
\item Wavelet thresholding:
\begin{equation*}
\hat{I}^j_{k} = \sum_{r} \sum_s \tilde{v}_{rs}^j \psi'_{rs} (k/T),
\end{equation*}
where $\tilde{v}^j_{rs} = \hat{v}^j_{rs} \mathbb{I} ( \hat{v}^j_{rs} > \lambda(j,r,s,T))$, with $\tilde{v}^j_{rs}$ given in Equation~\eqref{wav-thresh}, and $\lambda(j,r,s,T) = \hat{\sigma}_j \log (T)$ with $\hat{\sigma}_j$ computed as the median absolute deviation estimator of the $\hat{v}_{rs}^{j}$.
\item Running mean :
\begin{equation*}
    \hat{I}^j_k = \frac{1}{2W+1} \sum_{w=-W}^{W} \tilde{I}^j_{k+w}.
\end{equation*}
\end{enumerate}
3. Compute the corrected, smoothed wavelet periodogram as
\begin{equation*}
\hat{\bm{S}}(k/T) = (D^{1}_{J_1} )^{-1} \hat{\bm{I}}_k,
\end{equation*}
where $\hat{\bm{S}}(k/T) = \{ \hat{S}_j (k/T) \}_{j=-1}^{-J_1}$, $\hat{\bm{I}}_k = \{ \hat{I}^l_k \}_{l=-1}^{-J_1}$, and $D^1_{J_1}$ is defined in Definition \ref{A1_mat_def}.

\KwOut{Spectrum estimate $\{ \hat{S}_j (k/T)\}_{j=-1}^{-J_1}$ for $k=0, \ldots, T-1$.}
\end{algorithm}

\section{Trend Estimation Using the Spectral Estimate}

In this section, we discuss a wavelet thresholding approach for the estimation of the trend component of Model (\ref{lsw_rep}). If a first (or second) difference is capable of removing the trend from a time series, we have shown that we can consistently estimate the time-varying evolutionary wavelet spectrum using the smoothed, corrected raw wavelet periodogram of the differenced time series. We now wish to use this estimate in order to estimate the trend of the time series.

\subsection{Wavelet Thresholding Estimator}

The approach we take is to use the spectral estimate directly within a wavelet thresholding estimation procedure. In \cite{von2000non}, the authors describe a wavelet thresholding methodology for consistent curve estimation in the presence of locally stationary errors, subject to mild regularity conditions. The authors propose to use a local median absolute deviation pre-estimate for the variance of the wavelet coefficients, which is used in the threshold. We instead incorporate the consistent spectral estimate into the thresholding procedure. This yields an analogous version of Theorem 1 in \cite{von2000non}, for the specific case of Lipschitz continuous trend functions. 

With a slight abuse of notation, denote the estimated wavelet coefficients of the time series by $\hat{v}_{rs}$. The results in this section apply to the commonly used soft and hard thresholding rules, given respectively by
\begin{align*}
 \hat{v}_{rs}^{S} &= \text{sgn} (d_{r,s}) \left( |d_{r,s} | - \lambda ) \mathbb{I}(|d_{r,s} | > \lambda \right), \\
\hat{v}_{rs}^{H} &= d_{r,s} \mathbb{I}(|d_{r,s} | > \lambda ),
\end{align*}
where $\lambda = \lambda(r,s,T)$ is the threshold, and $\mathbb{I}$ is the indicator function. Asymptotic normality of the empirical wavelet coefficients, established in \cite{von2000non}, permits the use of a coefficient-dependent universal threshold $\lambda(r,s,T) = \sigma_{rs} \sqrt{2 \log(T)}$, where $\sigma_{rs}^{2}$ is the variance of the wavelet coefficients. This yields the following result for the wavelet thresholding estimator $\hat{\mu}$ obtained using the \tcr{DWT and the} threshold $\lambda(r,s,T)$, with either soft or hard thresholding, \bernchange{calculated in the same way as described in Section \ref{smooth-est-theory}.}  \tcr{We note here however, that the result also holds for TI wavelet denoising (outlined in Section 3.5) since the non-decimated wavelet coefficients can be seen as a set of DWT coefficients computed from cyclic shifts of the data.}

\begin{proposition}\label{smooth_theorem5}
Let $\tilde{\psi}$ be a wavelet of bounded variation, with $2^r = o (T)$ for wavelet coefficients $\hat{v}_{rs}$. For a trend-LSW process with Lipschitz continuous trend, and using the threshold $\lambda (r, s, T) = \sigma_{rs} \sqrt{2 \log(T)}$, the estimator $\hat{\mu}$ satisfies
\begin{equation*}
 \mathbb{E}\left[  \int_0^1  \left( \hat{\mu}(z) - \mu(z) \right)^2  dz \right] = \mathcal{O} \left( \left( \frac{\log (T) }{T} \right)^{2/3} \right).
\end{equation*}
\end{proposition}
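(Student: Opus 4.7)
The plan is to follow closely the argument of Theorem 1 in \cite{von2000non}, specialised to the case where the trend function $\mu$ has Lipschitz regularity (H\"older exponent $\alpha = 1$). First I would decompose the empirical wavelet coefficients into a deterministic signal part and a stochastic noise part,
\begin{equation*}
\hat{v}_{rs} = v^{\mu}_{rs} + v^{\epsilon}_{rs},
\end{equation*}
where $v^{\mu}_{rs} = T^{-1}\sum_{n} \mu(n/T) \tilde{\psi}_{rs}(n/T)$ comes from the trend and $v^{\epsilon}_{rs} = T^{-1}\sum_{n} \epsilon_{n,T}\tilde{\psi}_{rs}(n/T)$ comes from the LSW error. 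Since $\{\xi_{j,k}\}$ are uncorrelated zero-mean, $\mathbb{E}(\hat{v}_{rs}) = v^{\mu}_{rs}$ and $\mathrm{Var}(\hat{v}_{rs}) = \sigma_{rs}^{2}$, with $\sigma_{rs}^{2}$ a linear combination of the EWS values $S_{j}(z)$ weighted by the autocorrelation wavelets evaluated near the support of $\tilde{\psi}_{rs}$.

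Second, I would invoke (asymptotic) normality of $v^{\epsilon}_{rs}$: for Gaussian LSW noise it is exact, and in the general case it follows from the CLT for locally stationary wavelet processes established in \cite{von2000non} under moment conditions on $\xi_{j,k}$. Gaussian tail bounds then give
\begin{equation*}
\mathbb{P}\bigl(|v^{\epsilon}_{rs}| > \sigma_{rs}\sqrt{2\log T}\bigr) = \mathcal{O}\bigl(T^{-1}(\log T)^{-1/2}\bigr),
\end{equation*}
so with overwhelming probability the noise contribution to any coefficient does not trigger the threshold.

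Third, I would exploit the smoothness of $\mu$. A Lipschitz continuous function on $[0,1]$ lies in the Besov space $B^{1}_{\infty,\infty}$, so for $\tilde{\psi}$ of bounded variation the signal coefficients satisfy $|v^{\mu}_{rs}| \leq C 2^{-3r/2}$. This bound simultaneously controls the tail approximation error beyond a cutoff scale $r^{\ast}$ and the number of signal coefficients surviving thresholding at scales $r \leq r^{\ast}$. Performing the standard bias--variance decomposition across (i) scales $r > r^{\ast}$ (approximation error, bounded via Besov decay), (ii) kept coefficients at scales $r \leq r^{\ast}$ (thresholding bias plus variance $\sigma_{rs}^{2}$), and (iii) the negligible event that a noise coefficient is miss-classified relative to $\lambda(r,s,T)$, and then optimising over $r^{\ast}$ in the Donoho--Johnstone fashion with smoothness $\alpha = 1$, yields the rate $(\log T / T)^{2\alpha/(2\alpha+1)} = (\log T / T)^{2/3}$.

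The main obstacle is handling the dependence structure of the noise coefficients: the $v^{\epsilon}_{rs}$ are correlated across $(r,s)$ and their variances $\sigma_{rs}^{2}$ vary with location and scale, so the classical Donoho--Johnstone analysis for i.i.d.\ Gaussian noise does not apply directly. This is precisely why the coefficient-dependent threshold $\sigma_{rs}\sqrt{2\log T}$ is used: it normalises each coefficient by its own standard deviation. The key technical work, which I would import from \cite{von2000non}, is the asymptotic normality of $v^{\epsilon}_{rs}$ together with sharp bounds on cross-covariances $\mathrm{Cov}(v^{\epsilon}_{rs}, v^{\epsilon}_{r's'})$ that let one sum the squared-error contributions as if the coefficients were independent up to logarithmic factors. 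Once these ingredients are in place, the rate follows from a routine thresholding risk calculation over the Lipschitz class.
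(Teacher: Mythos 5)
Your proposal is correct and takes essentially the same route as the paper: the paper's proof simply verifies the bias and variance expansions of the empirical wavelet coefficients (via the analogue of the periodogram-smoothing lemma) to justify the coefficient-dependent threshold $\sigma_{rs}\sqrt{2\log T}$, and then invokes Theorem 1 of \cite{von2000non} specialised to Lipschitz ($\alpha=1$) trends, which is precisely the Donoho--Johnstone-style risk calculation over the smoothness class that you sketch in detail (signal/noise decomposition, Gaussian tail bounds, the $2^{-3r/2}$ coefficient decay, and optimisation over the cutoff scale yielding $(\log T/T)^{2/3}$). The only minor difference is expository --- you unpack the internals of the cited theorem, including the asymptotic normality needed for non-Gaussian innovations, where the paper treats it as a black box --- so there is no substantive gap.
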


Note that this result is subject to mild regularity assumptions on the LSW component of the model, all of which are satisfied. The innovations $\{ \xi_{j,k} \}$ are not restricted to be Gaussian, and can for example be exponential, gamma, or inverse Gaussian distributed. To estimate the variance $\sigma_{rs}^{2}$, which is necessary to choose the threshold $\lambda(r,s,T)$, we use an estimate of the variance of the empirical wavelet coefficients. If the LSW process is generated by wavelet $\psi^{0}$, and the wavelet used for thresholding is denoted $\psi^{1}$, the variance of the empirical wavelet coefficients $d_{j,k} := \sum_{t} X_{t} \psi^{1}_{j,k-t}$ is given by
\begin{equation}\label{emp-wav-var}
\text{Var} (d_{j,k}) = \sum_{l} C^{1,0}_{jl} S_{l} (k/T) + \mathcal{O}(T^{-1}),
\end{equation}
where $C^{(1,0)}_{jl} = \sum_{\tau} \Psi^{0}_{j} (\tau) \Psi^{1}_{l} (\tau)$, and where $\Psi_{j}^{0} (\tau)$ and $\Psi_{j}^{1} (\tau)$ are autocorrelation wavelets with respect to wavelets $\psi^{0}$ and $\psi^{1}$. By plugging in the estimate $\widehat{S}_{j} (z)$, obtained in Section \ref{smooth-est-theory}, into the expression \eqref{emp-wav-var}, this yields the universal-type threshold $\lambda(r,s,T) = \hat{\sigma}_{r,s} \sqrt{2 \log (T)}$, where $\hat{\sigma}_{r,s}^{2} = \sum_{l} C^{1,0}_{rl} \widehat{S}_{l} (s/T)$. 

\subsection{Practical Considerations} 

In alignment with discussion in \cite{von2000non}, in practice we analyse approximately the finest $7/10$ scales of the time series, the same as in the spectral estimation procedure. In practice, we have found that applying hard thresholding yields better performance. We recommend the use of translation invariant (TI) thresholding over a standard discrete wavelet transform. We have found that it offers stronger practical performance, in terms of the mean squared error of the estimator. As noted in \cite{nason2010wavelet}, use of a non-decimated transform ensures that there is ``more chance'' of the wavelet coefficients picking up the signal of the time series. 

Note that it is possible to obtain negative estimates of the variance of the wavelet coefficients, although based upon our empirical analyses this is rare. In this case we replace the negative values by the nearest neighbouring positive value, which was found to have no discernible impact on the trend estimation procedure. We recommend the use of the Daubechies Least Asymmetric wavelet with 4 vanishing moments, as we have found empirically that it works well for estimation purposes and also helps to minimise the number of negative variance estimates. Note that Proposition \ref{smooth_theorem5} holds for non-Gaussian time series, while theoretical results concerning the second-order estimation require an assumption of normality. In practice, our approach still performs well in the presence of non-Gaussian noise, as we show in the simulation study.

To complete the discussion, we provide algorithmic pseudo code for the trend estimation procedure using TI thresholding in Algorithm \ref{alg-method-trend}.

\begin{algorithm}[htp]
\caption{\textcolor{black}{Trend Estimation Procedure}}
\label{alg-method-trend}
\DontPrintSemicolon
\SetAlgoLined
\SetKwData{return}{return}

\SetKwProg{Fn}{Function}{:}{}

\KwIn{Data $\{X_t\}_{t = 1}^n$, spectrum estimate $\{ \hat{S}_j (k/T)\}_{j=-1}^{-J_1}$ for $k=0, \ldots, T-1$, spectrum estimation wavelet $\psi^0$, trend estimation wavelet $\psi^1$}

1. Compute non-decimated wavelet transform of data:
\begin{equation*}
v_{r,s} = \sum_t X_t \psi_{r,s-t}^1, \ j=-1, \ldots , - \log_2 (T) . 
\end{equation*}
2. Compute variance of wavelet coefficients as 
\begin{equation*}
\hat{\sigma}_{r,s}^2 = \text{Var} (v_{r,s}) = \sum_{l=-1}^{-J_1} C^{1,0}  \widehat{S}_l (s/T) .
\end{equation*}
3. Obtain the thresholded wavelet coefficients $\hat{v}_{r,s}$ using the soft or hard thresholding rules
\begin{align*}
 \hat{v}_{rs}^{S} &= \text{sgn} (d_{r,s}) \left( |d_{r,s} | - \lambda (r,s,T) ) \mathbb{I}(|d_{r,s} | > \lambda (r,s,T) \right), \\
\hat{v}_{rs}^{H} &= d_{r,s} \mathbb{I}(|d_{r,s} | > \lambda (r,s,T) ),
\end{align*}
where
\begin{equation*}
\lambda (r,s, T) =   \hat{\sigma}_{r,s} \sqrt{2 \log (T)}, s= -1, \ldots , -J_1.
\end{equation*}
4. Invert the thresholded wavelet coefficients $\hat{v}_{r,s}$ using basis averaging to obtain $\hat{\mu}(t/T)$.

\KwOut{Trend estimate $\hat{\mu}(t/T)$}
\end{algorithm}

\section{Simulation Study}\label{sim-study}

In this section we illustrate the ability of our proposed methodology to jointly estimate the mean and EWS of a trend-LSW process by performing a simulation study. For each set of simulations, we use the three EWS shown in Figure \ref{all-specs-plot}, which represent spectra with distinct characteristics. The spectra are explicitly defined in the supplementary material. Spectrum $S^{1}$, studied in \cite{nason2010wavelet}, displays coarse-scale, slowly-evolving sinusoidal behaviour with a fine-scale burst in power at time point 800. Spectrum $S^{2}$ is a concatenation of moving average processes and contains power moving from fine to coarser scales, and was examined in \cite{nason2000wavelet}. Spectrum $S^{3}$ contains slowly-evolving power at fine scales. 

\begin{figure}[]
\centering
\includegraphics[width = 0.9\textwidth]{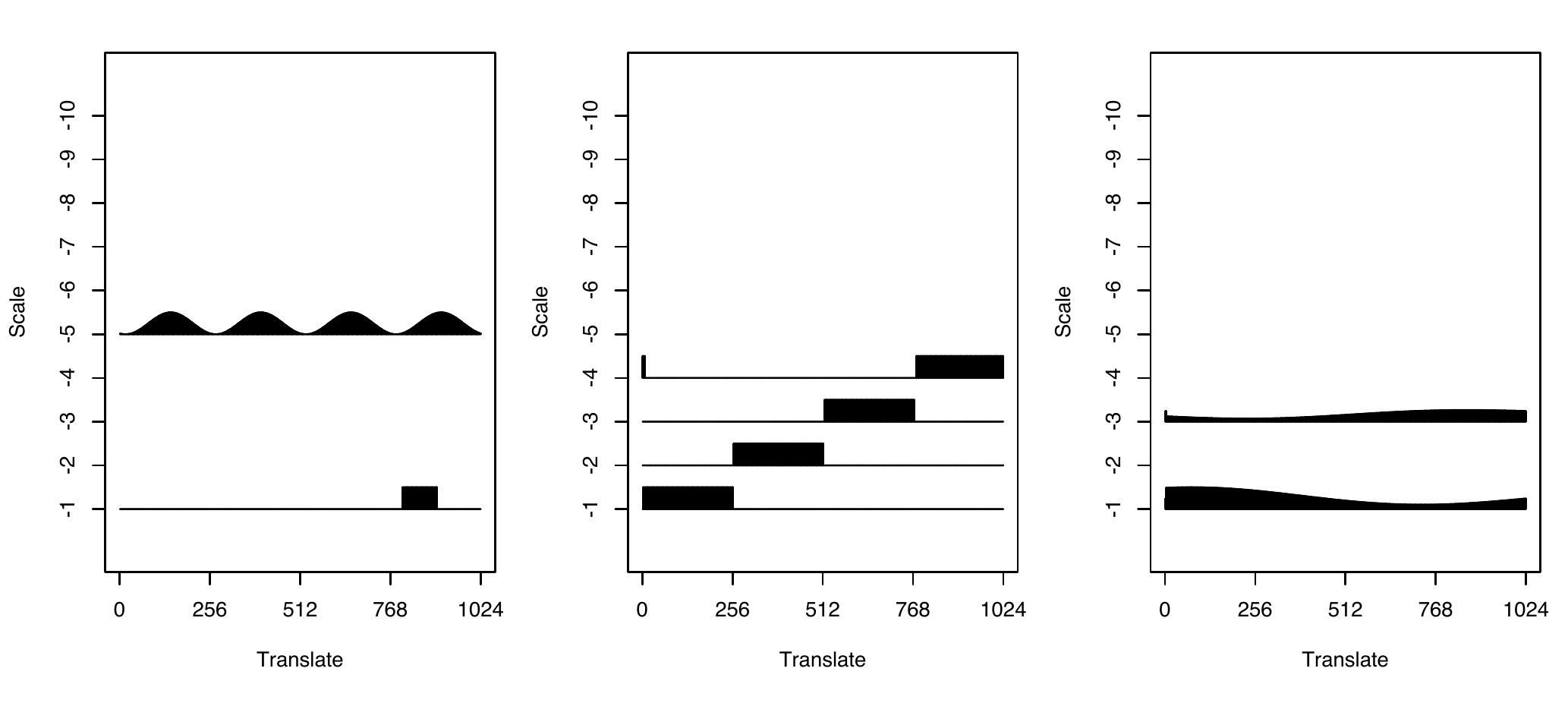}
\caption{Left: spectrum $S^{1}$, sinusoid with ``burst''. Centre: $S^{2}$, concatenated moving average process. Right: $S^{3}$, slowly-evolving fine-scale power.}
\label{all-specs-plot}
\end{figure}

We simulate 100 realisations of time series $\{ X_t \}_{t=0}^{T-1}$ of length $T = 2^{10} = 1024$ from LSW processes with those spectra, with different additive trend functions. The trends used in the simulation study are a linear, sinusoidal, logistic and piecewise quadratic trend, denoted in Figure \ref{reals} by $\mu_{li}$, $\mu_{s}$, $\mu_{lo}$ and $\mu_{q}$ respectively, and defined explicitly in the supplementary material. These functions are Lipschitz continuous with varying degrees of smoothness, with $\mu_{q}$ also being non-differentiable at two time-points. All LSW processes were simulated using the Daubechies Extremal Phase wavelet with 4 vanishing moments. Similar results, which can be found in the supplementary material, were reported for other wavelets. Example realisations from the simulation study are shown in Figure \ref{reals}. All simulations were performed in \verb!R! with estimation procedures implemented using modifications to code in the \verb!locits! \citep{locits} and \verb!wavethresh! \citep{nason2010wavethresh} packages. 

\begin{figure}[]
\centering
\includegraphics[width = 0.84\textwidth]{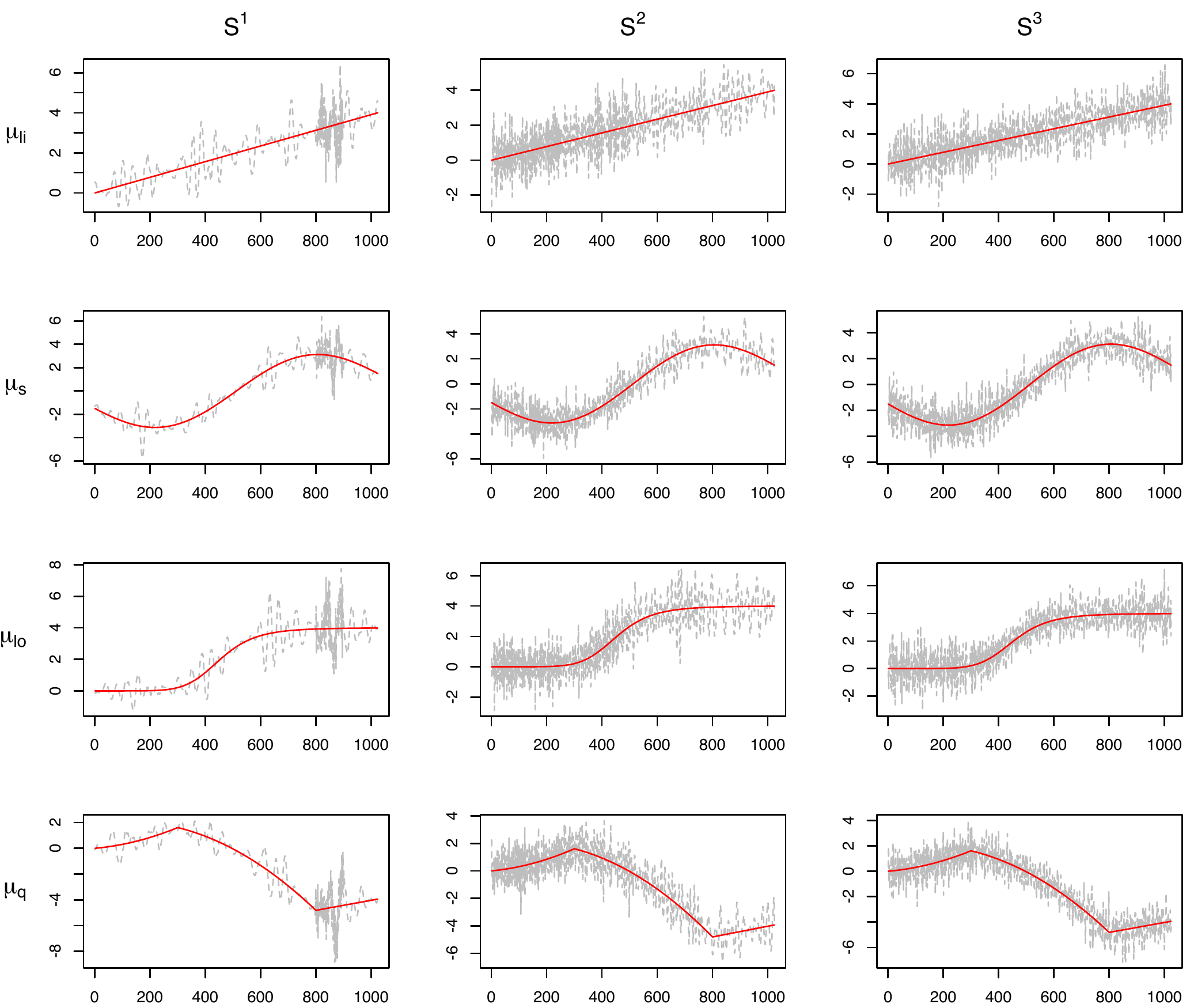}
\caption{Example realisations from each trend and spectrum scenario. Dashed line shows time series with true underlying trend shown in solid line. Left: spectrum $S^{1}$, sinusoid with ``burst'' . Centre: $S^{2}$, concatenated moving average process. Right: $S^{3}$, slowly-evolving fine-scale power.}
\label{reals}
\end{figure}

\subsection{Spectral Estimation Performance}\label{spec-est-sims}

In this simulation we show that by first-differencing to remove the trend, we can obtain an unbiased EWS estimate, which in turn can be used to obtain a trend estimate that performs well. For each realisation, the un-smoothed estimate of the EWS was calculated, which was then used to obtain an averaged estimate for the EWS across the 100 realisations. In alignment with the discussion in Section \ref{smooth-est-theory}, we correct the raw wavelet periodogram across the finest 7 scales.

We report the mean squared error of the averaged spectrum compared with the true spectrum in Table \ref{table1}. We compare these values with the mean squared error obtained by using the standard LSW estimation procedure of \cite{nason2000wavelet}. In this case, no trend is present, the estimate is performed using the original $A$ matrix inverse for bias correction, and no differencing is performed. This is calculated using the \verb!ewspec3! command in the \verb!locits! \verb!R! package. This is reported in the ``None'' row in Table \ref{table1}, and represents the `best-case' performance with which to compare. We see that despite the presence of a trend, differencing enables us to approximately remove it in order to accurately estimate the spectrum. The estimation error is smaller using our methodology than the ``None'' case for spectrum 2, while it is marginally worse for the other two spectra. \tcr{Visual inspection of the resulting estimators shows they are generally satisfactory, with examples given in Figure \ref{trend-spec-plot}. Note that, while the spectral estimation could also have been performed using a second difference instead of first, this over-differencing results in higher estimation error. A numerical comparison between first and second differences, showing the effect of over-differencing, is given in Appendix \ref{over-diff-sec}.}

\begin{table}[H]
\centering
\begin{tabular}{c|c|c|c}
Trend               & Spectrum 1                 & Spectrum 2                 & Spectrum 3                 \\ \hline
None                & 3.13                      & 4.88                         & 1.87                 \\ 
Linear              & 3.32                   & 4.63                     & 2.76                     \\ 
Sine                & 3.32                      & 4.63                     & 2.76                     \\ 
Logistic            & 3.32                   & 4.63                    & 2.76                     \\ 
Piece. Quad. & 3.32                     & 4.67                      & 2.79                   \\ 
\end{tabular}
\caption{Mean squared error comparison for the averaged spectrum estimate, multiplied by $10^{3}$, across the spectrum and trend scenarios.}
\label{table1}
\end{table}

%

 \begin{figure}[]
\centering
\includegraphics[width=0.84\textwidth]{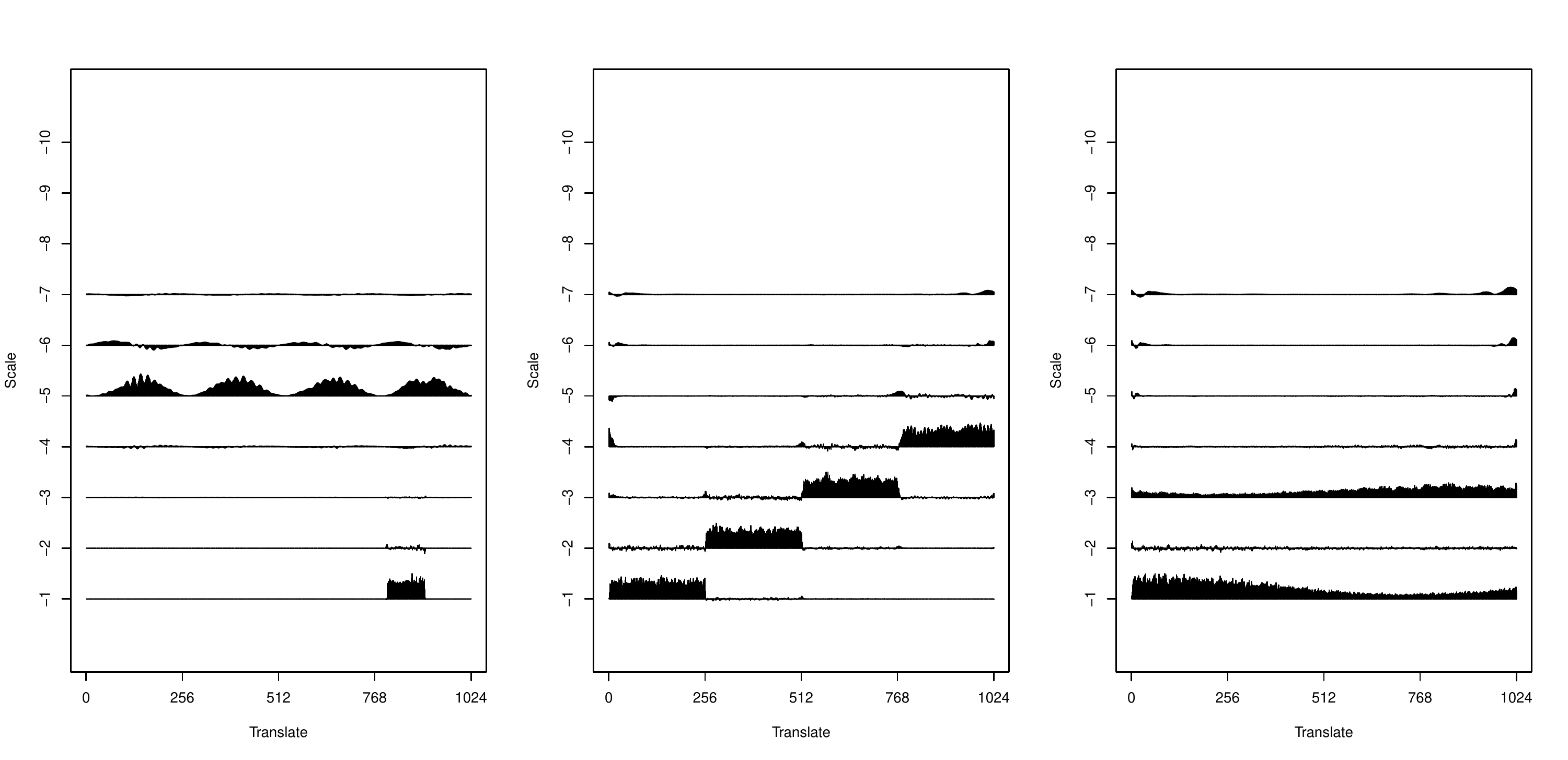}
\caption{\tcr{Example averaged spectral estimates. Left: spectrum 1, linear trend. Middle: spectrum 2, logistic trend. Right: spectrum 3, piecewise quadratic trend.}}
\label{trend-spec-plot}
\end{figure}

\subsection{Trend Estimation Performance}

We next assess the trend estimation procedure. The trend estimate is computed using the Daubechies Least Asymmetric wavelet with 4 vanishing moments, analysing the finest 6 scales. The spectrum estimate used in the thresholding procedure is smoothed with a running mean of bin width size 128. We compute the average mean squared error across the 100 realisations, as well as the standard deviation of the errors. For all simulations, a hard threshold of the form $\hat{\sigma}_{r,s} \sqrt{2 \log (1024)}$ is used.

We compare our method to two other wavelet-based trend estimation methods. In order to make a fair comparison, we use the same hyper-parameters where applicable. Firstly, we compare to the standard wavelet thresholding method based upon a \tcr{global, time-independent universal threshold $\hat{\sigma}_{r} \sqrt{2 \log (1024)}$, computed using the \texttt{wavethresh} package in R}. \tcr{This estimator assumes that the error process is second-order stationary, and is referred to as stationary wavelet thresholding (SWT) in the tables.} Secondly, we compare to the wavelet-based trend estimation procedure in \cite{von2000non}. No code is publicly available for this method, and so we have implemented the method utilising \verb!wavethresh! and following the description of the computation of the threshold in Section 2.5 of \cite{von2000non}. In the tables, the abbreviation LSWT refers to our method of locally stationary wavelet thresholding, and MVSWT refers to the wavelet-based method of \cite{von2000non}.

We repeat this simulation twice; firstly using Gaussian innovations in the LSW process, and secondly using exponentially distributed innovations. Further results examining the performance in other, non-LSW error scenarios can be found in the supplementary material. The results of the simulation for Gaussian LSW processes are reported in Table \ref{tableEP4t1}, while the results for exponentially distributed innovations are shown in Table \ref{tableEP4t2}. Values in bold are the lowest values for each trend and spectrum value across the three methods. 

From the tables we observe that the performance of our estimator in the presence of exponentially distributed random innovations is comparable to the Gaussian case. The error is consistent across the various time series scenarios, and the standard deviation is low. We see that our method outperforms the two other wavelet-based methods across all trend and spectrum scenarios.

\begin{table}[H]
\centering
\begin{tabular}{c|c|c|c|c}
\multirow{2}{*}{Trend} & \multirow{2}{*}{Spectrum} & \multicolumn{3}{c}{Method}    \\ \cline{3-5}
      &        & LSWT      & SWT        & VSMWT                \\ \hline
\multirow{3}{*}{Linear}     &  1      &  \textbf{0.024} (0.012)     &   0.519 (0.112)     &  0.303   (0.100)            \\
&2        & \textbf{0.030} (0.019)      & 0.762 (0.078)       & 0.225   (0.068)             \\
      &  3      & \textbf{0.028} (0.024)      & 0.441 (0.040)         &   0.160 (0.038)        \\ \hline
\multirow{3}{*}{Sine}     &  1      & \textbf{0.022} (0.010)         &  0.526 (0.110)       &   0.286 (0.109)             \\
&  2        &  \textbf{0.026} (0.014)    & 0.744 (0.075)         &  0.215  (0.068)             \\
      & 3      &   \textbf{0.022} (0.017)     &   0.447 (0.043)       &  0.156  (0.037)              \\ \hline
\multirow{3}{*}{Logistic}     &  1      &  \textbf{0.023} (0.015)        &  0.494 (0.109)      & 0.261 (0.093)    \\ 
&  2        & \textbf{0.033} (0.019)     & 0.753 (0.083)         & 0.226 (0.071)               \\
      &  3      & \textbf{0.027} (0.024)     & 0.449 (0.037)       & 0.164 (0.035)                \\ \hline
 \multirow{3}{*} {Piece. quad.}    &  1      & \textbf{0.022} (0.012)      &  0.517 (0.127)        &  0.290 (0.105)           \\
& 2        &  \textbf{0.032} ( 0.018) &   0.720 (0.087)       &  0.216  (0.072)              \\
      &  3      & \textbf{0.028} (0.024)      & 0.366 (0.046)       &   0.167 (0.039)           
\end{tabular}
\caption{Average mean squared error and standard deviation in brackets of trend estimate over 100 realisations generated using Gaussian innovations.}
\label{tableEP4t1}
\end{table}

\begin{table}[H]
\centering
\begin{tabular}{c|c|c|c|c}
\multirow{2}{*}{Trend} & \multirow{2}{*}{Spectrum} & \multicolumn{3}{c}{Method}    \\ \cline{3-5}
      &        & LSWT      & SWT        & VSMWT              \\ \hline
\multirow{3}{*}{Linear}     &  1      &  \textbf{0.030} (0.018)       &    0.521 (0.148)      &  0.305 (0.127)             \\
&2        & \textbf{0.035} (0.024)     & 0.779 (0.108)         &  0.328 (0.092)              \\
      &  3      & \textbf{0.040} (0.026)       & 0.497 (0.052)        &  0.318 (0.057)              \\ \hline
\multirow{3}{*}{Sine}     &  1      &\textbf{0.027} (0.025)         &   0.520 (0.119)    & 0.308 (0.102)            \\
&  2        &  \textbf{0.033} (0.020)    &  0.782 (0.110)        &  0.322  (0.085)             \\
      & 3      &   \textbf{0.037} (0.022)     & 0.495 (0.060)         &  0.309 (0.064)                  \\ \hline
\multirow{3}{*}{Logistic}     &  1      &  \textbf{0.030} (0.018)        &  0.508 (0.126)     &0.291 (0.108)                 \\ 
&  2        & \textbf{0.036} (0.022)    &   0.788 (0.098)       &    0.332  (0.085)           \\
      &  3      & \textbf{0.044} (0.031)      &  0.506 (0.061)        &   0.327 (0.058)             \\ \hline
 \multirow{3}{*} {Piece. quad.}    &  1      & \textbf{0.031} (0.016)      & 0.524 (0.116)       & 0.319 (0.106)               \\
& 2        & \textbf{0.038} (0.022) &       0.748 (0.098)     &0.326 (0.100)            \\
      &  3      & \textbf{0.045} (0.031)      &  0.432 (0.055)        &  0.318 (0.055)          
\end{tabular}
\caption{Average mean squared error and standard deviation in brackets of trend estimate over 100 realisations generated using exponential innovations.}
\label{tableEP4t2}
\end{table}

\subsection{Extension to Seasonal Time Series}\label{seasonal-extension}

\tcr{Our approach can be extended empirically to time series that display an additive deterministic seasonal component. From the discussion in Remark \ref{seasonal-remark}, if the time series possesses a (known) number of seasons $K$, lag $K$-differencing can be performed to remove the seasonality in order to estimate the EWS of the original time series. To illustrate this, we investigated the performance of the method in the presence of seasonal and smooth trends.}

\tcr{We simulated 100 realisations of time series according to the various trend and spectrum scenarios described in Section \ref{spec-est-sims}. To each of these scenarios, we added a (monthly) stationary seasonal component with $K=12$, where the value of each of the 12 seasonal components $K_1, \ldots, K_{12}$ is generated from a uniform random variable on the interval $[0,10]$. That is, the time series is generated by
\begin{equation*}
X_t = \mu \left(  \frac tT \right) +s_t +\epsilon_t,   
\end{equation*}
where $\mu$ is the smooth trend function, the seasonal component $s_t =K_1$ for $t \mod 12 =1$, $\ldots , s_t = K_{12}$ for $t \mod 12 = 0$, and $\epsilon_t$ is the LSW component.}

\tcr{Lastly, we considered the case where the seasonal component can be time-varying, with the smooth trend component $\mu_t = 0$. In this case, the seasonal component is given by a linear trend where the initial values are simulated from the uniform distribution on the interval $[0,10]$, with slope parameters generated as a uniform random variable on the interval $[-0.05,0.05]$. }

\begin{table}[H]
\centering
\begin{tabular}{c|c|c|c}
Trend               & Spectrum 1                 & Spectrum 2                 & Spectrum 3                 \\ \hline
Seasonal + No Trend & 4.76  & 8.44  &  2.54         \\ 
Seasonal + Linear  &  4.76   &  8.44 &   2.54   \\ 
Seasonal + Sine &   4.76 &  8.43 &     2.54    \\ 
Seasonal + Logistic & 4.76   &    8.44  & 2.54  \\ 
Seasonal + Piece. Quad. & 4.79 & 8.46  &  2.55  \\ 
Time-Varying Seasonal + No Trend & 4.76 & 8.43 & 2.53
\end{tabular}
\caption{Mean squared error comparison for the averaged spectrum estimate, multiplied by $10^{3}$, across the spectrum, trend, and seasonality scenarios.}
\label{seasonal-table}
\end{table} 
 
 \tcr{The results of these experiments are given in Table \ref{seasonal-table}, which can be compared to the results for the lag 1 differences in Table \ref{table1}. In particular, the performance is actually slightly better for spectrum 3; the smoothest spectrum. The other 2 spectra yield worse results, however: the performance drop-off is most noticeable for spectrum 2, which contains piecewise stationary components. This is due to the increased bias caused in estimating the time-varying spectra using lag $12$ differences instead of lag $1$ differences. Examples of the averaged spectrum estimates are given in Figure \ref{seasonal-spec-plot}, qualitatively showing strong similarities to the examples in Figure \ref{trend-spec-plot}.}
 
 \begin{figure}[]
\centering
\includegraphics[width=0.84\textwidth]{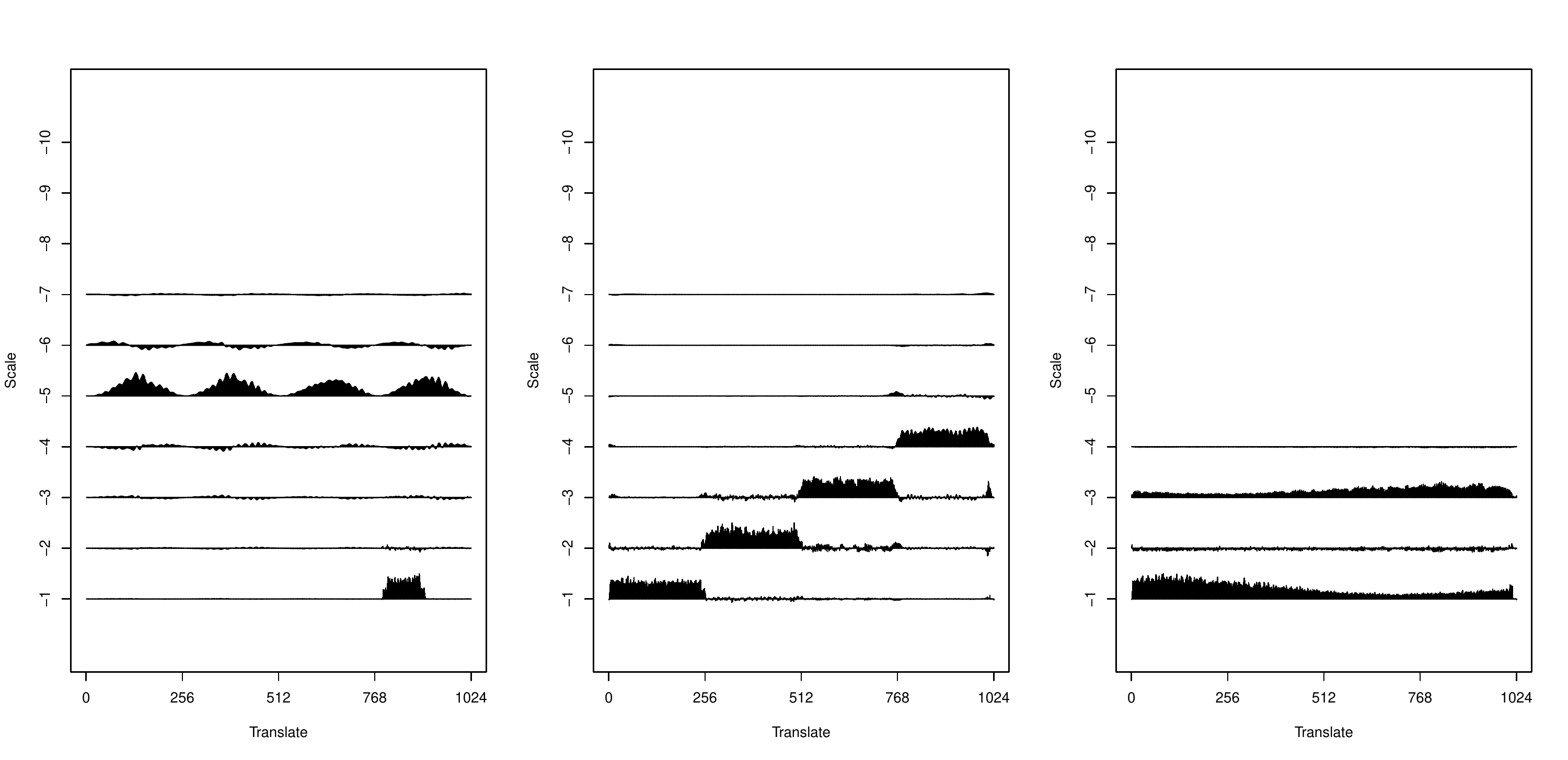}
\caption{\tcr{Example averaged spectral estimates. Left: spectrum 1, seasonal trend only. Middle: spectrum 2, seasonal and sinusoidal trend. Right: spectrum 3 and time-varying seasonal trend only.}}
\label{seasonal-spec-plot}
\end{figure}

\section{Data Application: Baby Electrocardiogram Data}

Next, we discuss a data application that highlights the benefits of our proposed methodology. We re-analyse the baby electrocardiogram (ECG) data set first examined in \cite{nason2000wavelet}. Furthermore, in the supplementary material, we analyse a wave height data set collected from a buoy in the Atlantic Ocean that was studied in \cite{killick2012optimal}. 

In Figure \ref{baby-ecg} left, we see a time series for the ECG reading of a 66-day-old infant. The series is available in the \verb!wavethresh! package and was collected by the Institute of Child Health at the University of Bristol. The series can be seen to exhibit both nonstationary trend and second-order behaviour. As \cite{nason2000wavelet} note, one reason for the nonstationarity is that the ECG varies considerably over time and changes significantly between periods of sleep and waking. In \cite{nason2000wavelet}, the authors first-difference the series to remove the trend, as shown in Figure \ref{baby-ecg} right, resulting in an approximately zero-mean time series. Then, LSW analysis is performed on the differenced series.

\begin{figure}[H]
\centering
\includegraphics[width = 0.95\textwidth]{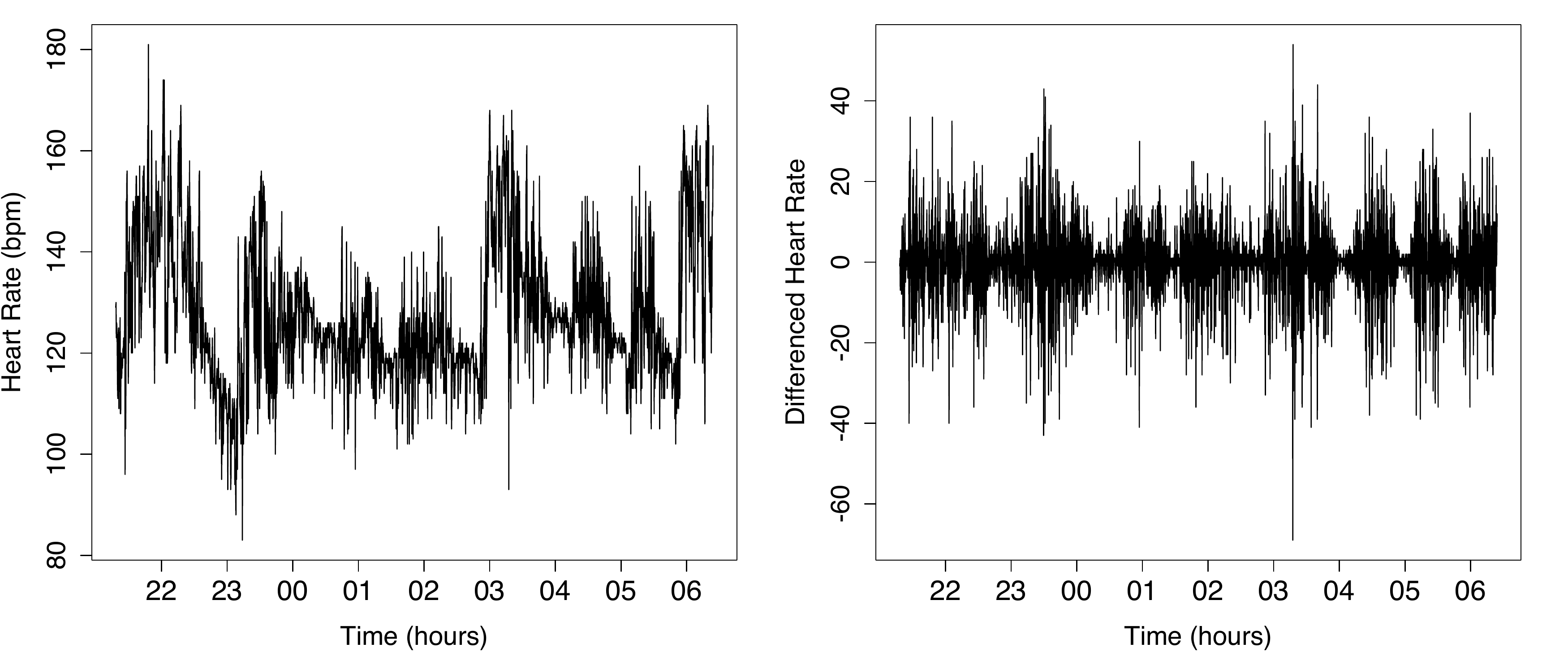}
\caption{Left: original ECG data. Right: first-differenced ECG data.}
\label{baby-ecg}
\end{figure}

We instead perform analysis on the original series, and note the similarities and differences to the original analysis in \cite{nason2000wavelet}. We use the Daubechies Least Asymmetric wavelet with 10 vanishing moments to analyse both series and the same bin width of size 128 for the running mean smoother, but as discussed we use a different matrix to correct for bias. In Figure \ref{baby-ecg-spec-comp} left, we see our analysis on the original series. The top plot is not scaled, while the bottom plot is individually scaled at each level. On the right, we see (our version of) the original analysis of \cite{nason2000wavelet}, again with top plot unscaled and bottom plot scaled individually. In line with previous discussion, only the finest 7 scales were used for analysis. From the top plots, we can see the effect that differencing has had on the resulting spectral estimate. In our estimate based on the original time series, we see that the power is spread fairly evenly across scales. However, in the spectral estimate of the differenced series, power is concentrated in the finest scales. From the bottom plots, we see that both estimates exhibit similar overall behaviour, with both evolving over time in a similar fashion.

Due to the difference in magnitude between the estimated spectra at different scales, some features of the original series can be identified more easily in our analysis. In \cite{nason2000wavelet}, an accompanying data set, the sleep state of the observed infant, is also used in the analysis. The sleep state is judged by a trained human observer, and is measured as either quiet (1), between quiet and active (2), active (3), or awake (4). A strong association between sleep state and spectral value is observed for scales $-1$ to $-5$ in \cite{nason2000wavelet}. However, using the original series instead of the differenced series can allow this association to become more apparent: in particular, consider the spectrum estimate at scale $-4$, enlarged in Figure \ref{sleep-state-plot}. In solid line is the spectral estimate of the original series, while the dashed line shows the spectral estimate for the differenced series. The dotted line shows the sleep state of the infant. Our estimate correlates strongly with the sleep state, especially in highlighting periods of being awake (sleep state 4). In general, the ``signal'' of variability appears stronger in our analysis, which is due to the differencing lowering the level of autocorrelation within the series. Furthermore, our estimate contains fewer negative spectral values, perhaps suggesting that the original time series is better represented as an LSW process as opposed to the first-differenced series.
\begin{figure}[h!]
\centering
\includegraphics[width = \textwidth]{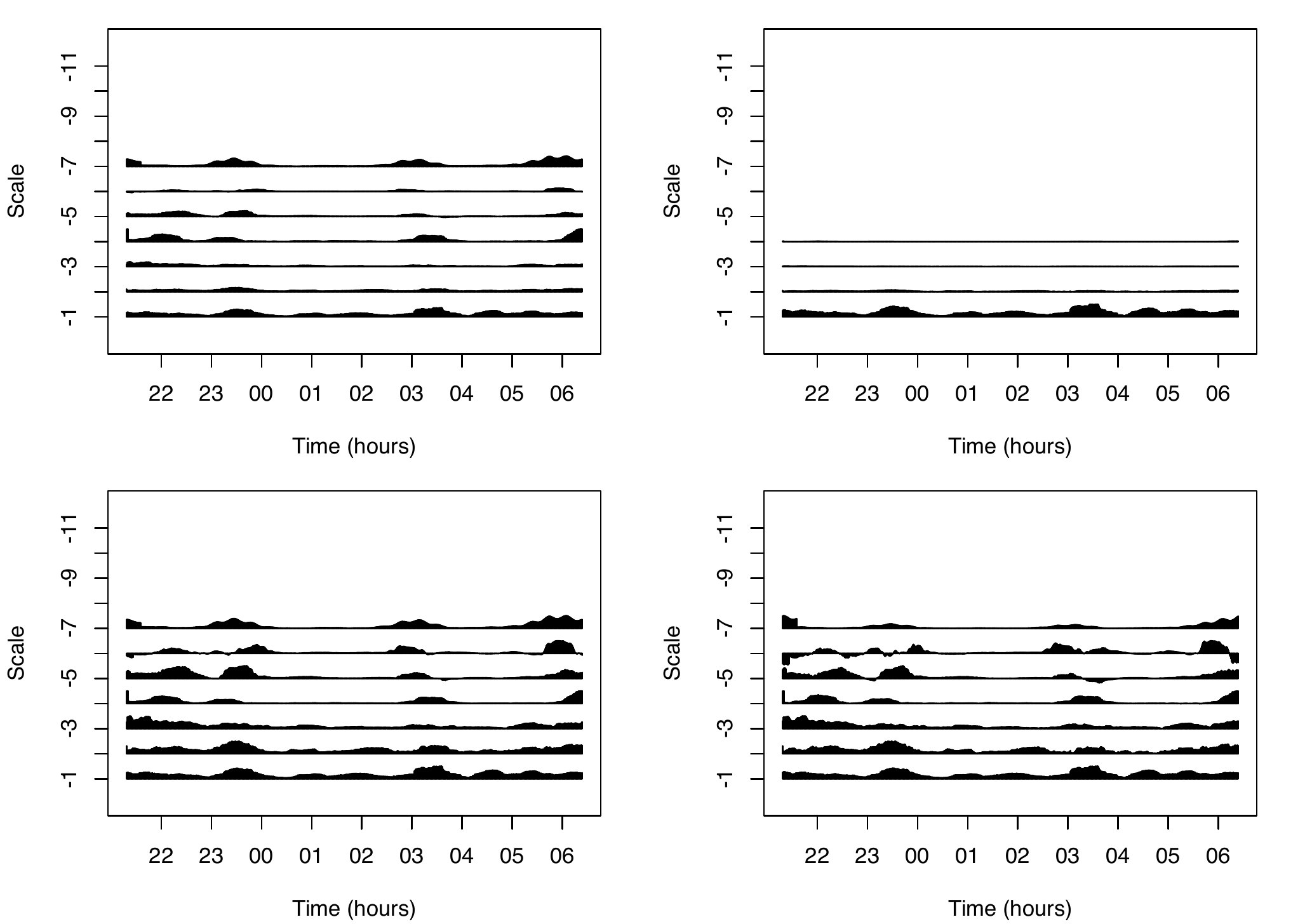}
\caption{Comparison of the spectral estimates. Top left: spectrum estimate using our methodology, unscaled, bottom left: scaled individually. Top right: spectrum estimate of the differenced series, unscaled, bottom right: scaled individually.}
\label{baby-ecg-spec-comp}
\end{figure}

\begin{figure}[h!]
\centering
\includegraphics[width = 0.85\textwidth]{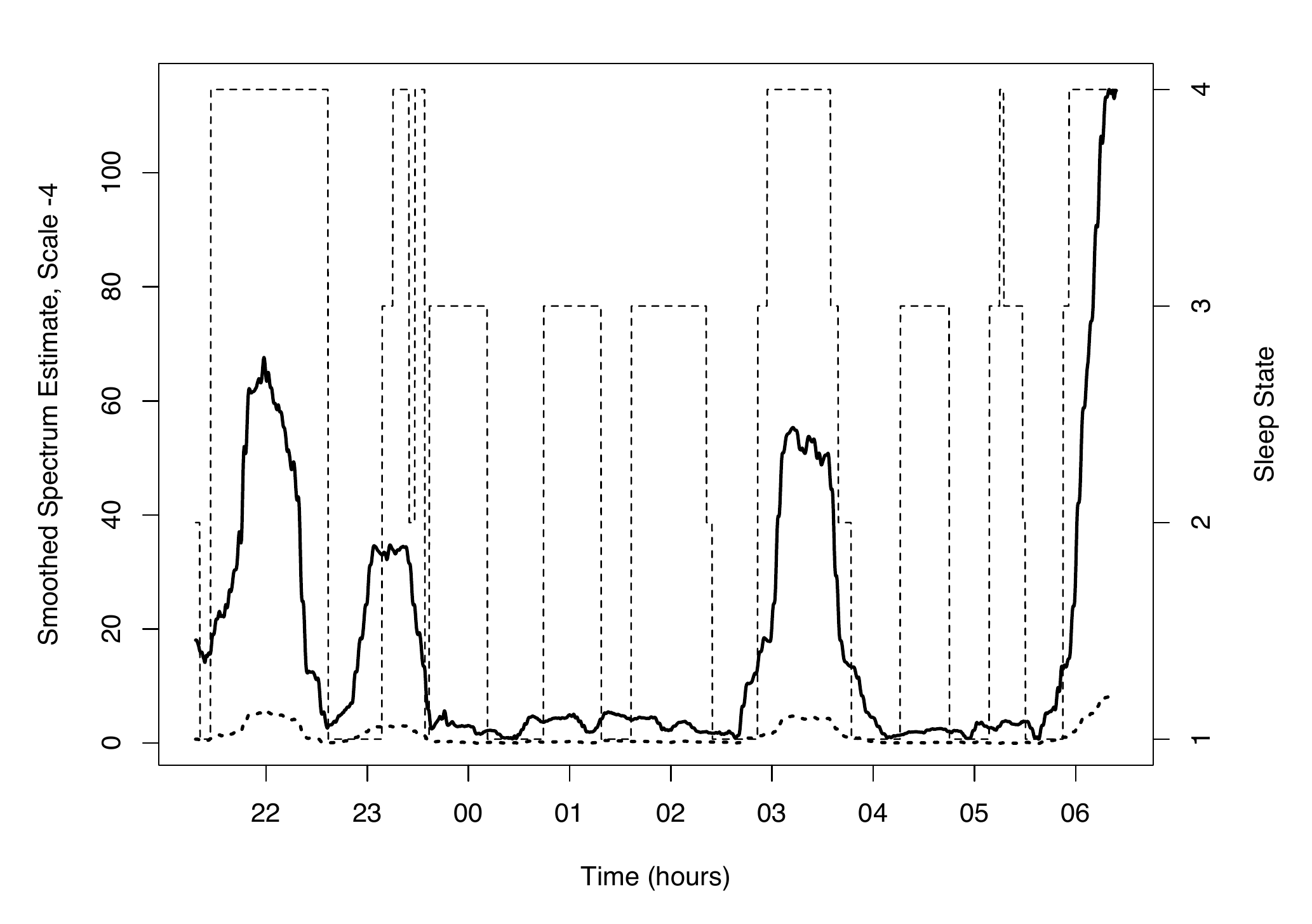}
\caption{Estimate of the EWS at scale $-4$: solid line is for the original series, while the dashed line is for the differenced series. Sleep state shown in dotted line.}
\label{sleep-state-plot}
\end{figure}

We perform TI wavelet thresholding using the Daubechies Least Asymmetric wavelet with 4 vanishing moments to estimate the trend of the Baby ECG series. We analyse the finest 7 scales of the series, using a hard universal threshold of $\hat{\sigma}_{r,s} \sqrt{2 \log (2048)}$, where $\hat{\sigma}_{r,s}$ is calculated using the spectral estimate in Figure \ref{baby-ecg-spec-comp} left. The trend estimate is shown in Figure \ref{ecg-trend-plot} in the solid line. We see that in general the estimate is quite smooth, with more rapid changes in mean occurring at approximately 23:00, 03:00, and 06:00, corresponding to changes in sleep state. The estimate again correlates with the underlying sleep state, highlighting the benefit of performing a joint first and second-order analysis of the data.

\begin{figure}[H]
\centering
\includegraphics[width = 0.7\textwidth]{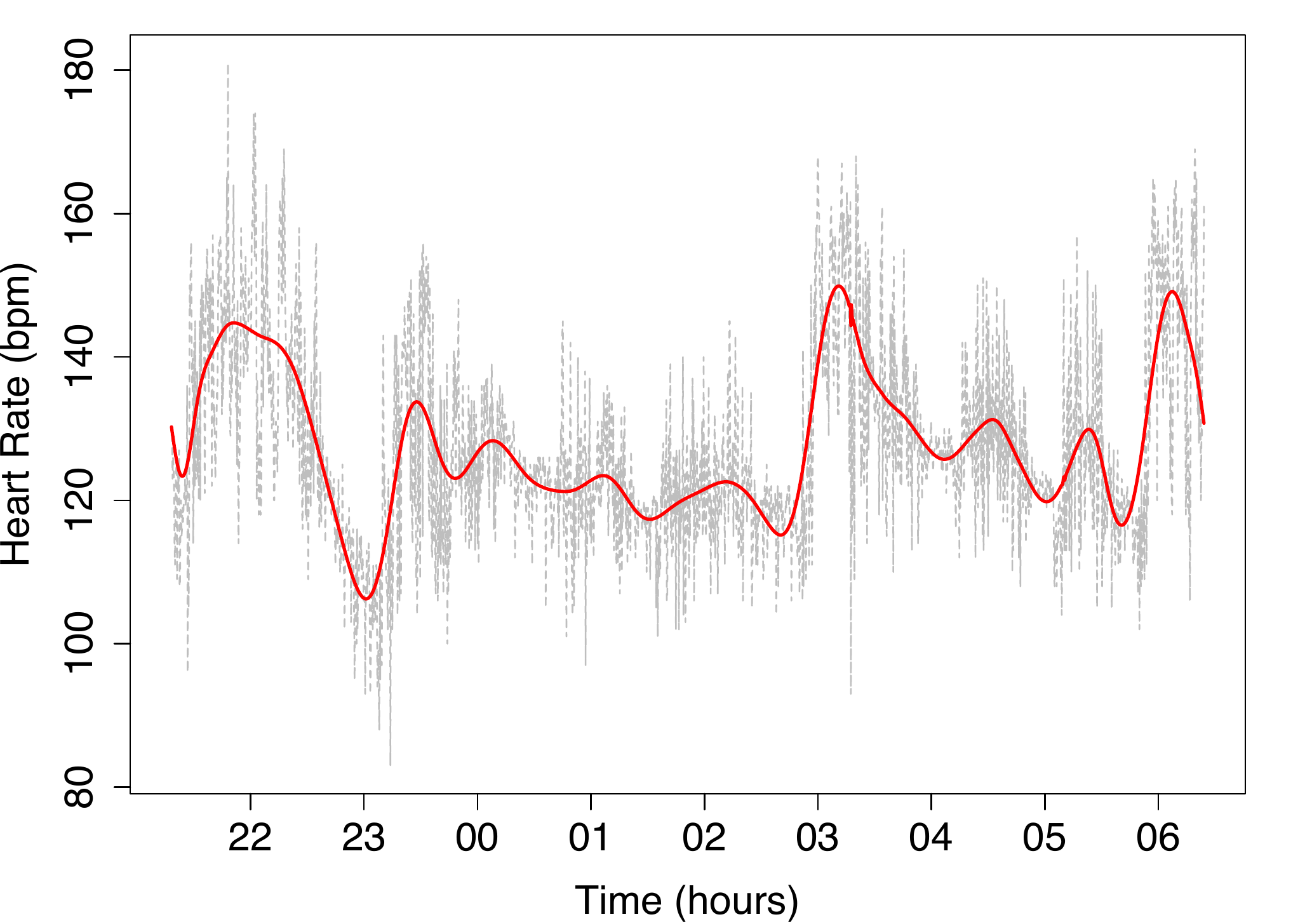}
\caption{Trend estimate for the Baby ECG data shown in solid line, with data shown in dashed line.}
\label{ecg-trend-plot}
\end{figure}


\section{Concluding Remarks}

In this article, we have considered the problem of jointly modelling time-varying first and second-order properties of nonstationary time series. Our model employs the locally stationary wavelet model of \cite{nason2000wavelet}, used for modelling second-order nonstationarity, and adapts it to incorporate a nonstationary trend component. Using the common statistical technique of differencing, we have shown that we can consistently estimate the evolutionary wavelet spectrum and local autocovariance of the original series. Using these results, we have proposed a wavelet thresholding approach for the nonparametric estimation of the trend of the time series. The trend estimation methodology benefits from the information provided by the ability to consistently estimate the second-order structure of the time series, despite the presence of the trend component. 

Using our methodology we have analysed two time series that highlight the strength of the method to provide information about the original time series, which can become masked when only considering the differenced time series. In particular, our analysis of the Baby ECG data set highlights the benefit of modelling the stochastic component of the original series for the purpose of identifying sleep states more readily.

\bibliographystyle{apalike} 
\bibliography{EJSBib}

\begin{appendices}

\section{Additional Numerical Results}\label{appA}

\subsection{Spectra and Trend Functions Used}

In this section we provide additional information for the discussion in Section 5 of the main text. The three evolutionary wavelet spectra used in the simulation study in Section 5 are defined as

\begin{align*}
S^{1}_j(z) &= \begin{cases}
\sin^2 (4 \pi z) & \text{for } j = -5, z \in(0,1), \\
1 & \text{for } j = -1, z \in(800/1024, 900/1024), \\
0 & \text{otherwise},
\end{cases} \\
S^{2}_j(z) &= \begin{cases}
1 & \text{for } j = -1, z \in(0/1024, 256/1024), \\
1 & \text{for } j = -2, z \in(256/1024, 512/1024), \\
1 & \text{for } j = -3, z \in(512/1024, 768/1024), \\
1 & \text{for } j = -4, z \in(768/1024, 1), \\
0 & \text{otherwise},
\end{cases} \\
S^{3}_j(z) &= \begin{cases}
\frac{1}{2} + \frac{1}{4} \sin( \pi z) - \frac{1}{2} \cos( 3\pi z/2) & \text{for } j = -1, z \in(0,1), \\
\frac{1}{2} -\frac{1}{8} \sin(2 \pi z)- \frac{1}{4} \cos( \pi z/2) & \text{for } j = -3, z \in(0,1), \\
0 & \text{otherwise}.
\end{cases}
\end{align*}
The four trend functions used in the simulation study are defined as
\begin{align*}
\mu_{li}(z)& = 4z,\\
\mu_{s}(z)& = -2 \sin(2 \pi z) -\frac{3}{2} \cos (\pi z)  ,\\
\mu_{lo} (z) &=  \frac{4}{1 + \exp(4-7\log 4z)},\\[1ex]
\mu_{q}(z)& = \begin{cases}
12z^{2} +2z  & \text{for } z \in(0,300/1024), \\
1.81 - 16z^{2} +4z & \text{for } z \in(300/1024, 800/1024), \\
4z-7.94 & \text{for } z \in(800/1024, 1),
\end{cases}
\end{align*}
where $z = t/T$.

\subsection{Results for Daubechies LA10 Wavelet LSW Processes}

In this section we report the results of the simulations when the Daubechies Least Asymmetric wavelet with 10 vanishing moments (LA10) is used to simulate the LSW processes. Table \ref{table-sup} reports the mean squared error comparison for the averaged spectral estimates, where the LSW process has been generated using the LA10 wavelet. The estimation error for spectrum 1 is almost identical comparing to ``None'' using our methodology with a trend present, while for spectrum 2 and 3 there is only a marginal increase. The results show that the method performs well irrespective of the wavelet that generates the LSW process.

\begin{table}[]
\centering
\begin{tabular}{c|c|c|c}
Trend               & Spectrum 1                 & Spectrum 2                 & Spectrum 3                 \\  \hline 
None                & $1.25  $                     & $3.32  $                       &$ 1.30 $               \\ 
Linear              & $1.29       $                & $4.09    $                  &$ 2.34   $               \\ 
Sine                & $1.29       $            & $4.09    $                 & $ 2.34 $                      \\ 
Logistic            & $1.29        $           & $4.09 $                    &$ 2.34   $                      \\ 
Piece. Quad. & $1.29             $       & $4.11   $                   &$ 2.34  $                     \\ 
\end{tabular}
\caption{Mean squared error, multiplied by $10^{3}$, comparing across the spectrum and trend scenarios.}
\label{table-sup}
\end{table}

Table \ref{tableLA10t1} reports the average mean squared error and standard deviation for the trend estimate where the 100 LSW processes have been generated using Gaussian innovations and the LA10 wavelet. Lastly, Table \ref{tableLA10t2} reports the average mean squared error and standard deviation for the trend estimate where the 100 LSW processes have been generated using exponential innovations and the LA10 wavelet. For both cases, a hard threshold of $\hat{\sigma}_{r,s} \sqrt{2 \log (1024)}$ is used. Again, the error is consistent across the various time series scenarios. We see that the trend estimate performs well no matter the generating wavelet of the process.

\begin{table}[]
\centering
\begin{tabular}{c|c|c|c|c}
\multirow{2}{*}{Trend} & \multirow{2}{*}{Spectrum} & \multicolumn{3}{c}{Method}    \\ \cline{3-5}
      &        & LSWT      & SWT        & VSMWT             \\ \hline
\multirow{3}{*}{Linear}     &  1      &  \textbf{0.019}   (0.013)       &   0.516 (0.117)       &   0.275 (0.091)                    \\
&2        & \textbf{0.038}  (0.024)      &   0.746 (0.083)       & 0.207 (0.0644)                   \\
      &  3      & \textbf{0.029}    (0.027)       & 0.494 (0.042)          &   0.171 (0.039)                \\ \hline
\multirow{3}{*}{Sine}     &  1      & \textbf{0.021}    (0.012)          & 0.547 (0.130)       &  0.287 (0.111)                  \\
&  2        &  \textbf{0.031}  (0.016)    & 0.752 (0.077)         &  0.214  (0.067)               \\
      & 3      &   \textbf{0.024}    (0.018)      &   0.485 (0.045)       & 0.156 (0.040)                    \\ \hline
\multirow{3}{*}{Logistic}     &  1      &  \textbf{0.022}    (0.013)        &  0.544 (0.112)      &  0.306 (0.107)                \\ 
&  2        & \textbf{0.042} (0.028)     & 0.764 (0.088)          & 0.214 (0.066)                \\
      &  3      & \textbf{0.028}  (0.025)       &  0.491 (0.038)       & 0.170 (0.041)                       \\ \hline
 \multirow{3}{*} {Piece. quad.}    &  1      & \textbf{0.025}   (0.014)      &  0.537 (0.127)        & 0.280 (0.110)                      \\
& 2        & \textbf{0.041}  (0.024) &  0.736 (0.076)        &  0.211 (0.080)                 \\
      &  3      & \textbf{0.034}    (0.028)         &   0.414 (0.043)     &  0.162 (0.042)            
\end{tabular}
\caption{Average mean squared error and standard deviation in brackets of trend estimate over 100 realisations generated using Gaussian innovations and the Daubechies LA10 wavelet.}
\label{tableLA10t1}
\end{table}

\begin{table}[]
\centering
\begin{tabular}{c|c|c|c|c}
\multirow{2}{*}{Trend} & \multirow{2}{*}{Spectrum} & \multicolumn{3}{c}{Method}    \\ \cline{3-5}
      &        & LSWT      & SWT        & VSMWT                \\ \hline
\multirow{3}{*}{Linear}     &  1      &  \textbf{0.027} (0.016)       &  0.547(0.140)   & 0.312 (0.107)                   \\
&2        & \textbf{0.050} (0.028)      & 0.793 (0.113)     & 0.337(0.096)                    \\
      &  3      & \textbf{0.040} (0.029)       & 0.529 (0.066)   &  0.311 (0.068                            \\ \hline
\multirow{3}{*}{Sine}     &  1      & \textbf{0.027} (0.014)         &   0.540 (0.138)     & 0.311  (0.121)                \\
&  2        &  \textbf{0.041} (0.021)    & 0.788 (0.118)         &  0.340  (0.103)                \\
      & 3      &   \textbf{0.036} (0.019)     &   0.542 (0.059)       &  0.308 (0.067)                    \\ \hline
\multirow{3}{*}{Logistic}     &  1      &  \textbf{0.026} (0.017)        & 0.544 (0.143)          &  0.315  (0.118)                 \\ 
&  2        & \textbf{0.046} (0.026)     &   0.801 (0.112)        & 0.254  (0.091)                      \\
      &  3      & \textbf{0.045} (0.030)      &  0.541 (0.053)       &  0.327 (0.059)                 \\ \hline
 \multirow{3}{*} {Piece. quad.}    &  1      & \textbf{0.032} (0.020)      &   0.515 (0.122)       &   0.296  (0.102)            \\
& 2        & \textbf{0.044} (0.028) &   0.774 (0.111)       & 0.347 (0.100)               \\
      &  3      & \textbf{0.042} (0.029)      &   0.473 (0.057)     &  0.319 (0.061)     
\end{tabular}
\caption{Average mean squared error and standard deviation in brackets of trend estimate over 100 realisations generated using exponential innovations and the Daubechies LA10 wavelet.}
\label{tableLA10t2}
\end{table}

\subsection{Effect of Over-Differencing}\label{over-diff-sec}

\tcr{In the main text, we recommended using a first difference for removing the trend. Here, we compare the estimation performance of the spectral estimate computed using first- and second-differences, to highlight the effect of over-differencing the time series. We perform the same simulations as in Section \ref{spec-est-sims} of the main text, and compute the averaged spectral estimate using both first- and second-differences. To compare the two results, we compute the relative mean squared error of the first-differenced estimate and second-differenced estimate, in each of the trend and spectrum scenarios. To control for boundary effects, we only compute the RMSE on the non-boundary wavelet coefficients. The results are reported in Table \ref{rel-mse}.  We see that over-differencing yields a higher MSE for spectral estimation. The RMSE is above 1 in all scenarios, showing that the first-difference estimator achieves a uniformly stronger performance.} 

\begin{table}[H]
\centering
\begin{tabular}{c|c|c|c}
Trend               & Spectrum 1                 & Spectrum 2                 & Spectrum 3                 \\ \hline
Linear              & 1.562                   & 1.251                    & 1.461        \\ 
Sine                & 1.562                     & 1.251                     & 1.461     \\ 
Logistic            & 1.561                  & 1.251                    &   1.461  \\ 
Piece. Quad. & 1.562                     & 1.251                      & 1.459   \\ 
\end{tabular}
\caption{Relative mean squared error comparison for the averaged spectrum estimate, across the spectrum and trend scenarios.}
\label{rel-mse}
\end{table}

\subsection{Further Trend Estimation Simulations}

Next, we assess the performance of the trend estimation procedure in the presence of non LSW-type error structures to highlight the versatility of our methodology. In particular, we simulate 100 realisations of time series using the previously defined trends, with (Gaussian) errors simulated from the following models:
\begin{enumerate}[(A)] 
\item Time-varying AR(2) model with parameters $\phi_{1} (z)  = 0.8\cos(1.5-\cos(4\pi z))$, $\phi_{2} (z) = -0.2 + 0.4z$, $z \in (0,1)$.
\item AR(1) model with parameter $\phi = 0.6$.
\item Time-varying AR(1) model with  parameter $\phi (z) = 0.7$ for $z \in (0,600/1024)$, $\phi (z) = -0.3$ for $z \in (600/1024,1)$.
\item ARMA(1,3) model with AR parameter $\phi = 0.4$ and MA parameters $\theta_{1} = 0.8$, $\theta_{2} = -0.3$, and $\theta_{3} = 0.4$.
\end{enumerate}
These scenarios represent some common stationary and nonstationary error structures. Model A is an AR(2) process with slowly-evolving autoregressive parameters, and is a variant of a process studied in \cite{von2000non}. Model C represents a piecewise stationary AR(1) process with a single changepoint in the autoregressive parameter. The results of these simulations are given in Table \ref{new-trend-sims1}, again with bold values showing the lowest reported mean squared error. We see that our methodology is able to perform well across the four scenarios, working well in both stationary and nonstationary second-order settings. It outperforms the other three methods in all of the trend and error scenarios. Overall, we see that our proposed methodology offers strong practical performance across a variety of time series models.

\begin{table}[]
\centering
\begin{tabular}{c|c|c|c|c}
\multirow{2}{*}{Trend} & \multirow{2}{*}{Model} & \multicolumn{3}{c}{Method}    \\ \cline{3-5}
      &        & LSWT      & SWT        & VSMWT             \\ \hline
\multirow{4}{*}{Linear}     &  A      & \textbf{0.140} (0.061)     &   0.376 (0.108)      &   0.482 (0.139)                   \\
&B        & \textbf{0.147} (0.040)     & 0.265 (0.070)       &   0.500 (0.086)         \\
      &  C      & \textbf{0.129} (0.048)      & 0.435 (0.080)      & 0.532 (0.095)      \\
      &D       & \textbf{0.216} (0.059)            & 0.349 (0.160)             &   0.840 (0.129)                    \\ \hline
\multirow{4}{*}{Sine}     &  A      &  \textbf{0.124} (0.069)     &  0.389 (0.134)    &   0.494  (0.130)             \\
&  B        & \textbf{0.134} (0.033)   & 0.241 (0.091)        & 0.499 (0.087)               \\
      & C      & \textbf{0.122}  (0.040)      & 0.421 (0.083)        & 0.516 (0.100)               \\ 
      &D       & \textbf{0.174} (0.045)            &  0.256 (0.146)            &  0.806 (0.135)                     \\ \hline
\multirow{4}{*}{Logistic}     &  A      & \textbf{0.129} (0.048)      & 0.370 (0.091)      &   0.474 (0.100)        \\ 
&  B        & \textbf{0.150}  (0.033)  &  0.267 (0.077)       & 0.495 (0.092)                \\
      &  C      & \textbf{0.129} (0.047)       & 0.431 (0.093)    & 0.531 (0.098)                 \\
      &D       &    \textbf{0.214} (0.051)         & 0.357 (0.147)             &       0.837 (0.143)                 \\ \hline
 \multirow{4}{*} {Piece. quad.}    &  A      &   \textbf{0.141}  (0.051) &  0.362 (0.094)     & 0.467 (0.115)              \\
& B        &  \textbf{0.141}  (0.040) &  0.275 (0.073)     & 0.500 (0.099)           \\
      &  C      & \textbf{0.131}  (0.046)       &  0.420 (0.084)    &  0.525 (0.098)      \\    
      &D       &  \textbf{0.221} (0.057)          & 0.332 (0.138)             &    0.834 (0.145)                  
\end{tabular}
\caption{Average mean squared error and standard deviation in brackets of trend estimate over 100 realisations generated from Models A -- D.}
\label{new-trend-sims1}
\end{table}

\subsection{Choice of $\alpha$ and $\beta$ Parameters}\label{alpha-beta-parameter-choice}

\tcr{In this section we investigate the effect that the choice of the parameters $\alpha$ and $\beta$ have on estimation quality, in order to inform the best choice of the parameters in practice. Recall that $\beta$ is the proportion of scales used when estimating the spectrum, whilst $\alpha$ is the number of scales used for estimating the LACV.}

\tcr{Firstly, the choice of $\beta$ was investigated. We simulated 1000 realisations of LSW processes, for three wavelets: the Haar, Daubechies EP4, and Daubechies EP10 wavelet. For each realisation the spectrum was given by $S_j = U_j 2^{\gamma j}$, where $U_j \sim \text{Uniform}(0,5)$ and $\gamma>0$ is varied. This provides a more complete picture of the impact of the choice of $\beta$ than only using the three previously defined spectra. The decaying spectral structure is used in order to control for the fact that estimation of the EWS is more difficult in coarse scales, and mirrors the assumption in Proposition \ref{A1-c-est} in the main text. In addition, we use a global mean to smooth the raw wavelet periodogram at each level, in order to focus on the effect of $\beta$, rather than the bin width parameter.}

\tcr{We calculated the smoothed EWS estimate, while varying the number of scales used to perform EWS correction, for $\gamma = 1/2, 1/4, 1/8$,  and time series length $T=512, 1024, 2048$. We computed the average mean squared error over the 1000 realisations between the estimated and true spectra. For sake of brevity, the results presented here are for the case where the time series has no trend: further simulations in each of the trend scenarios considered in the main text produced almost identical results. The results for $T=1024$ are shown in Figure \ref{sensitivity-plot1}. Note that the results in Figure \ref{sensitivity-plot1} suggest that the choice of $\beta$ is dependent on the wavelet used: the higher the number of vanishing moments, the higher the optimal choice of $\beta$. Across all (27) scenarios, the mean value of $\beta$ that produces the lowest average mean squared error in each scenario is 0.721. This motivates setting $J_1 =  \lfloor \frac{7}{10} \log_2 (T) \rfloor$ as a flexible rule-of-thumb. In practice, different values of $\beta$ can also be compared visually to assess the fit.}

\begin{figure}[]
\centering
\includegraphics[width =\textwidth]{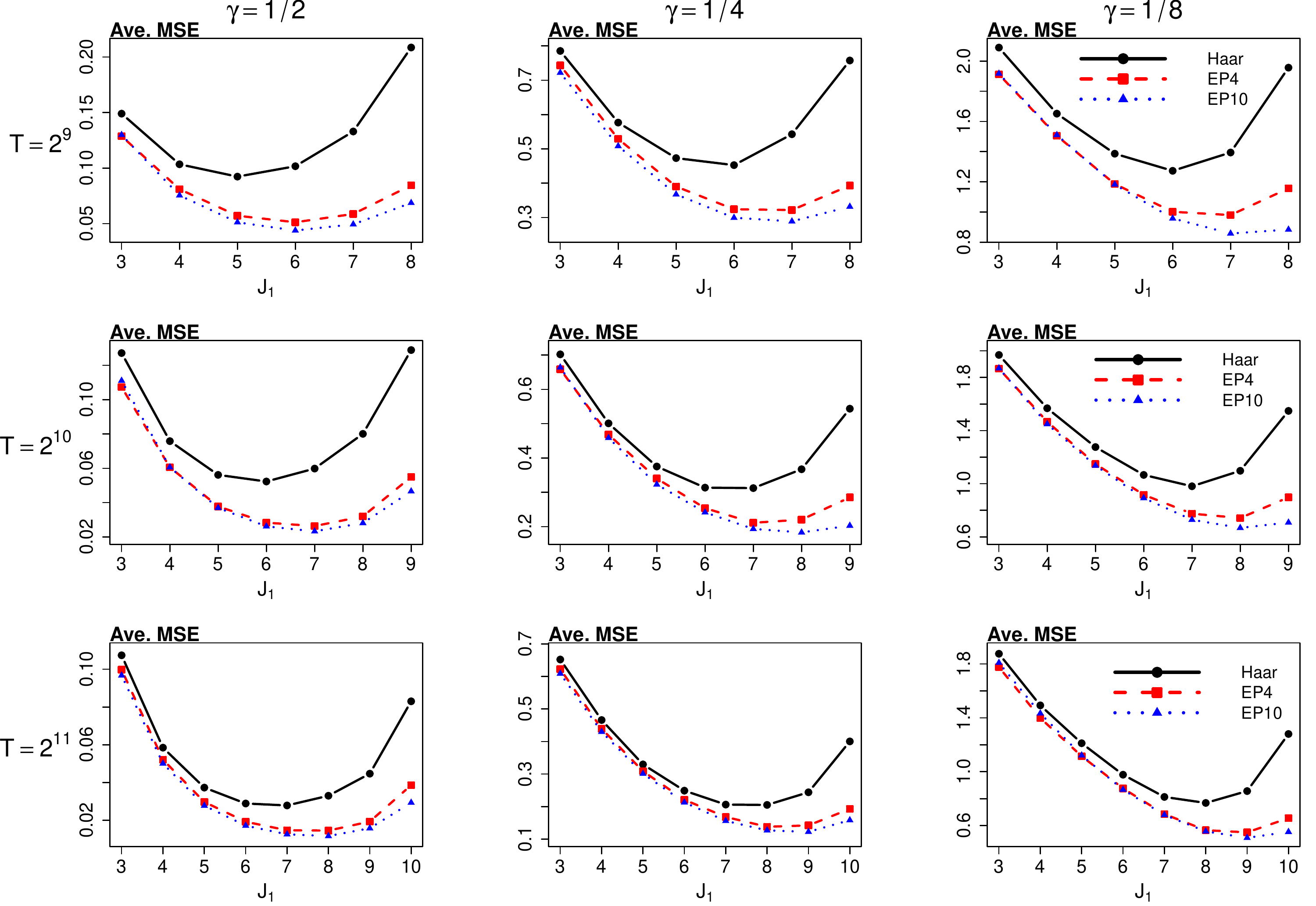}
\caption{Average mean squared error (MSE) for different choices of $J_1$, for varying values of $\gamma$ and $T$, using the Haar, EP4, and EP10 wavelets.}
\label{sensitivity-plot1}
\end{figure}

\tcr{Next, given the above choice of $\beta =  \lfloor \frac{7}{10} \log_2 (T) \rfloor$, we perform the same set of simulations as above, this time with varying $\alpha$, the number of scales used to estimate the LACV function. Note that necessarily $\alpha \leq \beta$. We compute the average MSE (averaged across lags) of the estimated autocovariance function, computed at the first $2^{J-3}$ lags where $J = \log_2 (T)$. The results of the simulation are shown in Figure \ref{sensitivity-plot2}. These show that taking $\alpha=\beta$ yields the best practical performance in all scenarios.}

\begin{figure}[]
\centering
\includegraphics[width =\textwidth]{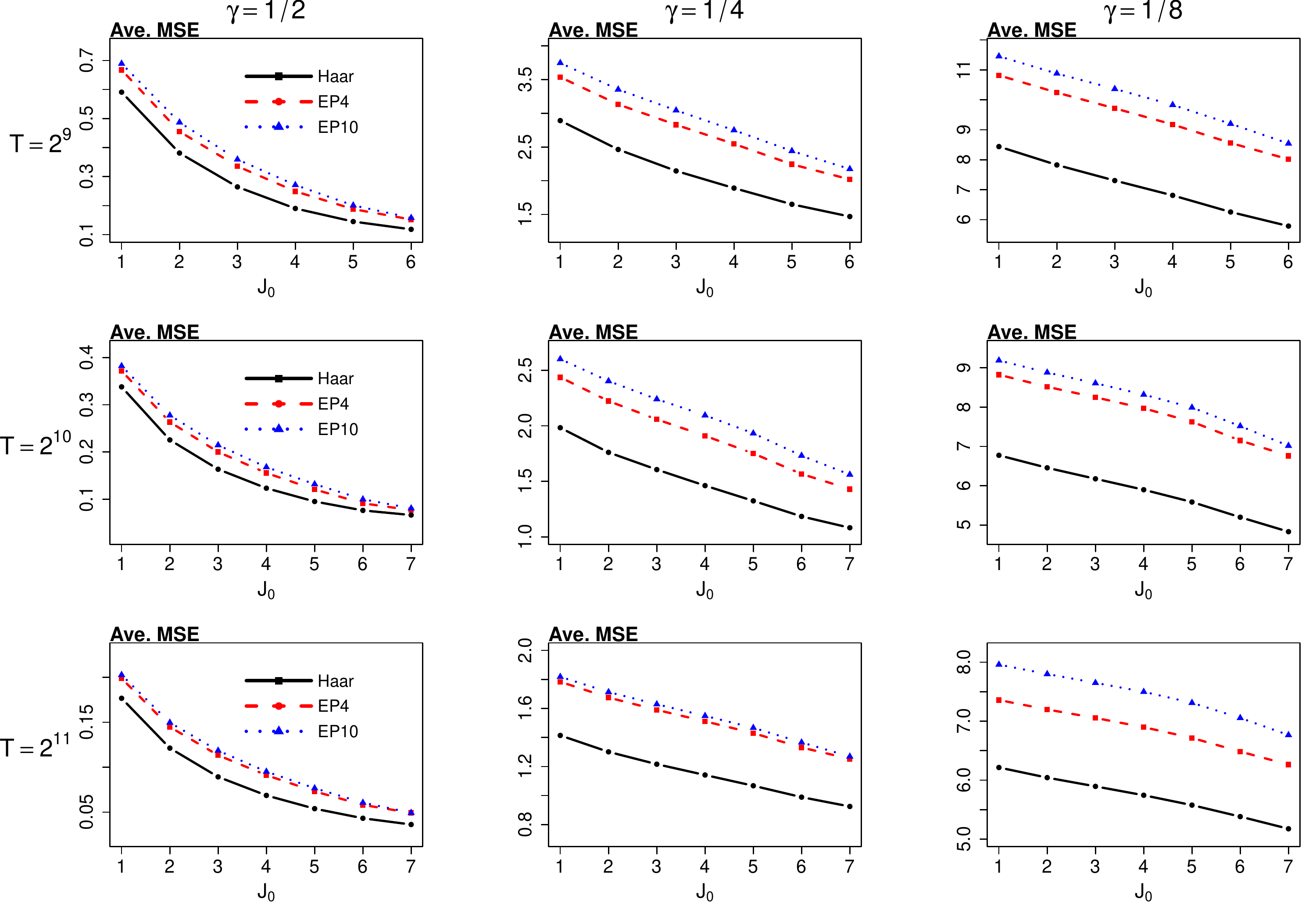}
\caption{Average mean squared error (MSE) for different choices of $J_0$, for varying values of $\gamma$ and $T$, using the Haar, EP4, and EP10 wavelets.}
\label{sensitivity-plot2}
\end{figure}

\subsection{Canadian Wave Height Data}

Here we examine publicly available wave height data for a location in the North Atlantic. The data we use are obtained from Fisheries and Oceans Canada, East Scotian Slop buoy ID C44137. The data measures wave heights collected at hourly intervals, from approximately mid-June 2005 to mid-May 2006, giving a time series of length $2^{13}=8192$, and is available in the \verb!changepoint! package in \verb!R! \citep{killick2014changepoint}. \cite{killick2012optimal} also perform nonstationary analysis on the series, using the first-differences of a longer version of the series to detect changepoints in variance only. The data are plotted in Figure \ref{wave-data} left, from which we see larger wave heights in the winter and smaller wave heights in the summer, as well as increased variability in the winter. Figure \ref{wave-data} right shows the first-differenced time series. \cite{killick2012optimal} take first-differences to remove the trend, and we also find this is sufficient.

For the spectral estimate, we use the Daubechies Least Asymmetric wavelet with 10 vanishing moments. For simplicity, the periodogram is smoothed using a running mean with bin width 512, corresponding to a time length of roughly 3 weeks. The estimate is shown in Figure \ref{wave-spec-plot}, where each level in the plot is scaled individually for clarity. We can see strong nonstationarity in the spectrum, with more variability during the winter months and less in the summer, as found in \cite{killick2012optimal} using a changepoint approach. 

\begin{figure}[]
\centering
\includegraphics[width =\textwidth]{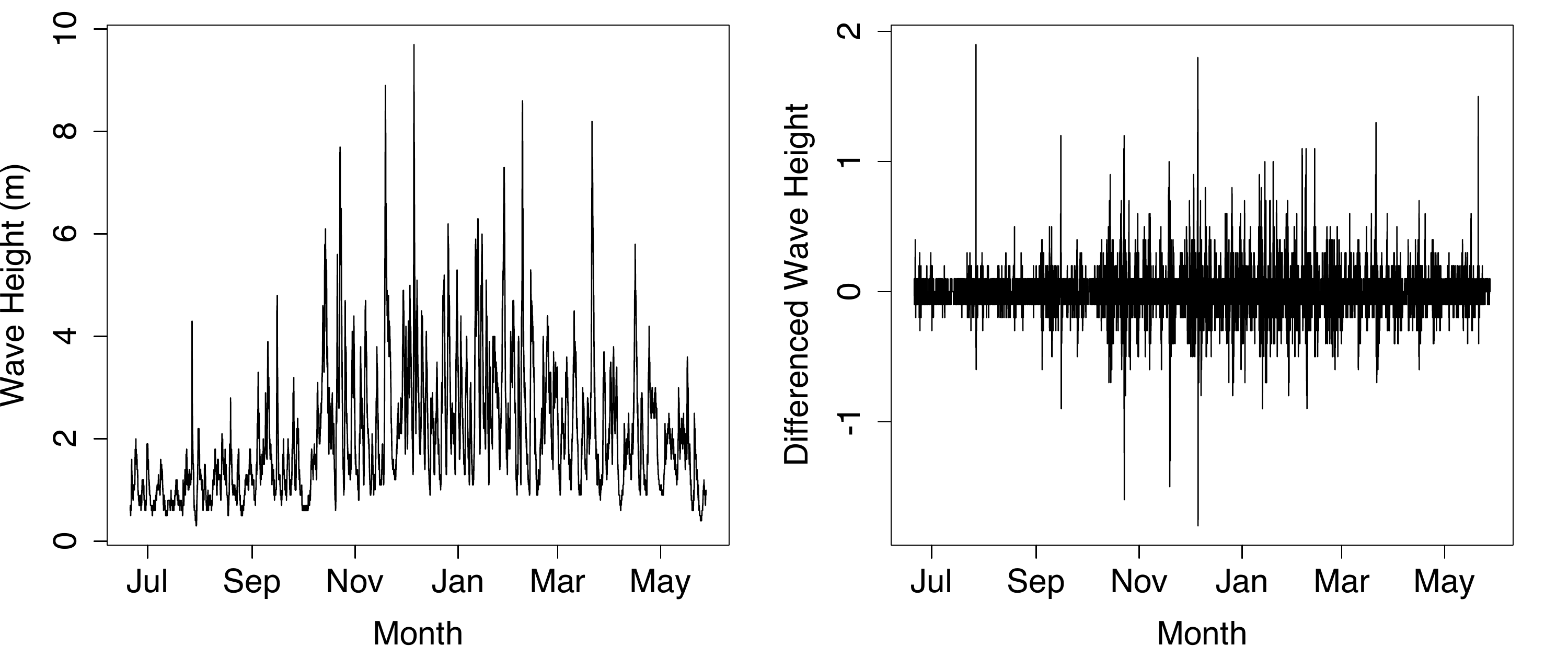}
\caption{Left: North Atlantic wave heights recorded hourly between June 2005 and May 2006. Right: first-differenced wave heights.}
\label{wave-data}
\end{figure}

\begin{figure}[]
\centering
\includegraphics[width = 0.9\textwidth]{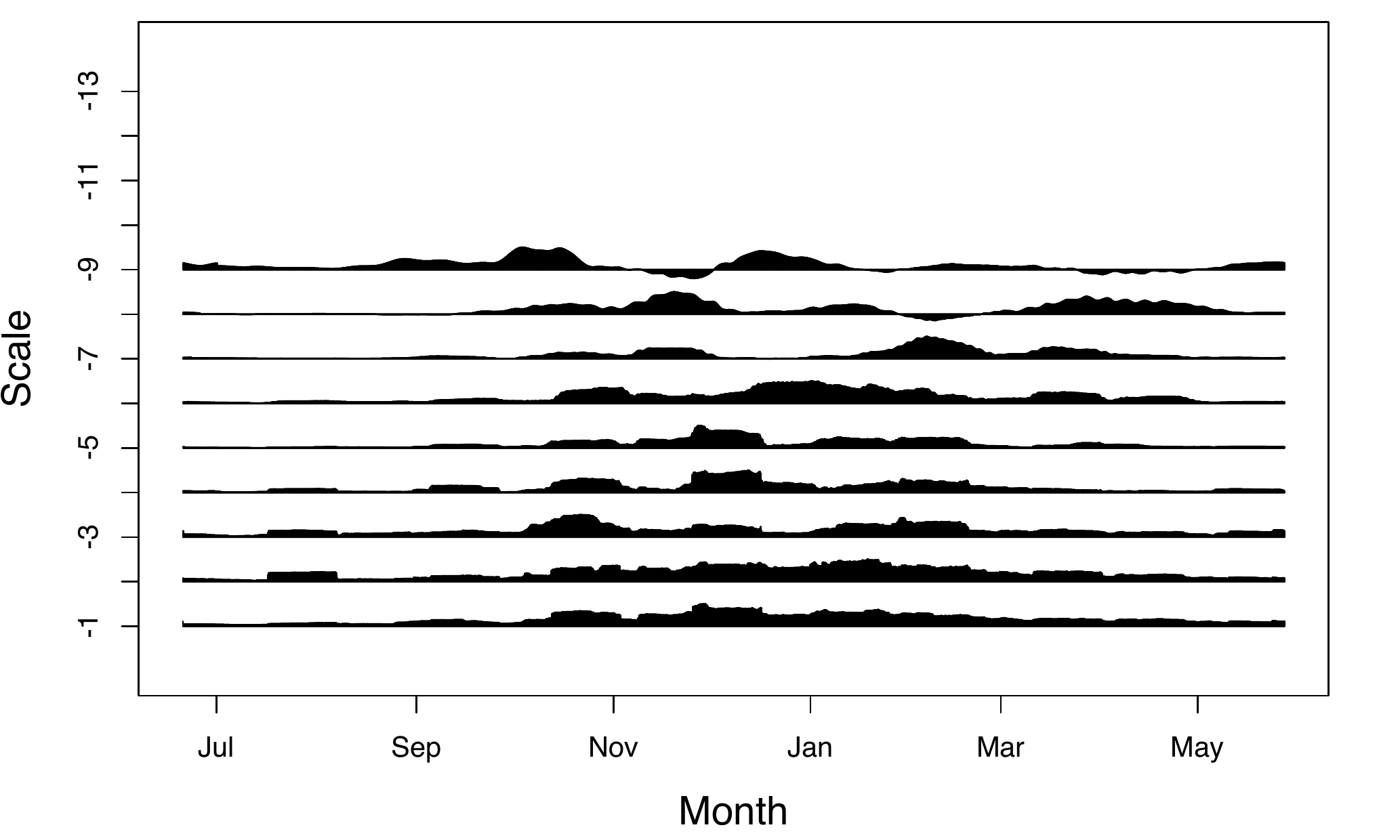}
\caption{Smoothed spectral estimate for the wave height data in Figure \ref{wave-data}, performed using first-differencing of the data.}
\label{wave-spec-plot}
\end{figure}

To estimate the trend of the series, we perform TI wavelet thresholding using the Daubechies Least Asymmetric wavelet with 4 vanishing moments. In line with previous discussion, we analyse the finest 9 scales of the series. We use a soft universal threshold of $\hat{\sigma}_{r,s} \sqrt{ 2 \log (8192)}$ to account for likely non-Gaussianity, where $\hat{\sigma}_{r,s}$ is calculated using the spectral estimate in Figure \ref{wave-spec-plot}. The trend estimate is shown in the solid line in Figure \ref{wave-trend-plot}. We see that the estimated trend function is relatively smooth with occasional sharp changes, with the mean wave height larger during the winter and smaller in the summer. There are several locations where perhaps the wavelet coefficients were not thresholded correctly, leading to the sharp changes. Using a scale-dependent smoothing of the raw wavelet periodogram is one possible way to remedy this.

\begin{figure}[]
\centering
\includegraphics[width = 0.9\textwidth]{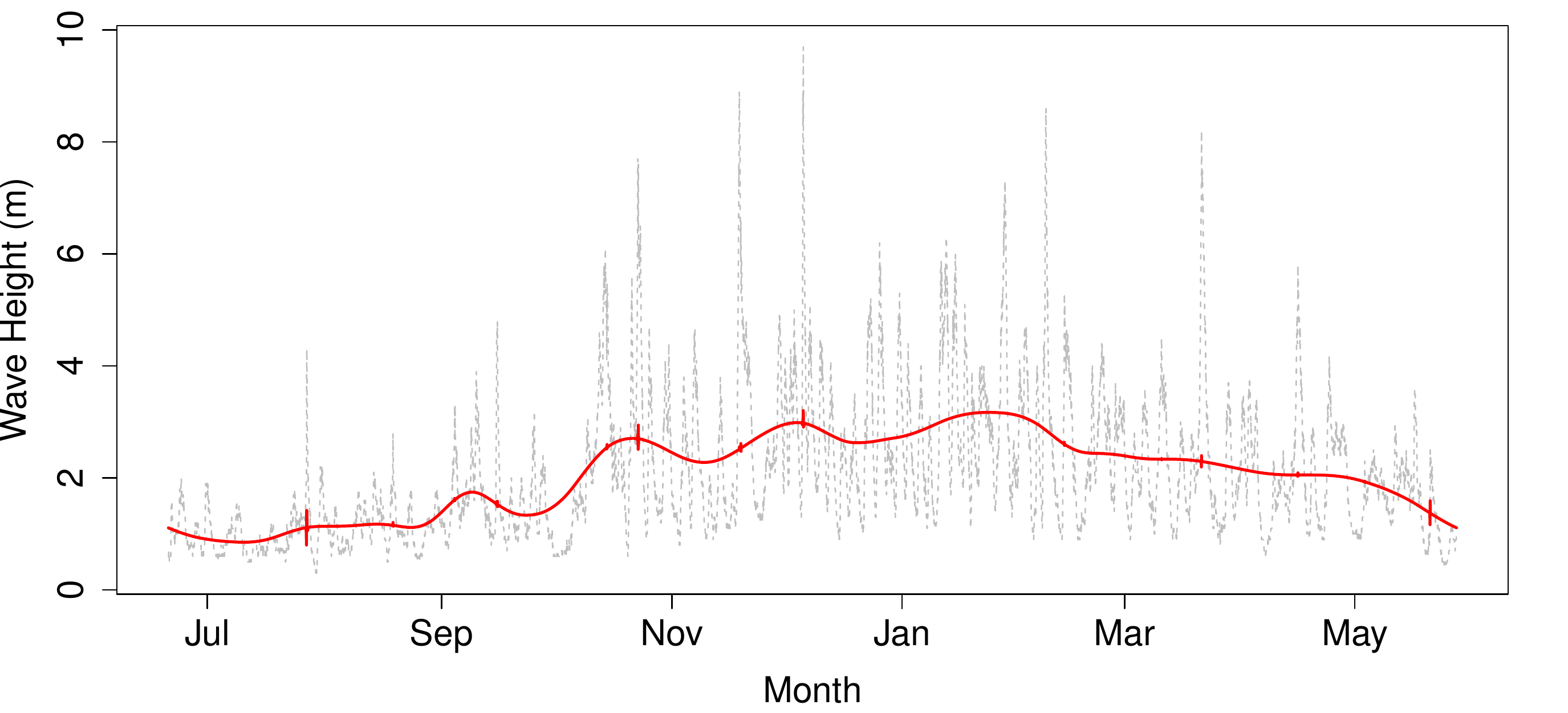}
\caption{Trend estimate for the North Atlantic wave data shown in solid line, with data shown in dashed line.}
\label{wave-trend-plot}
\end{figure}

Using the spectral estimate in Figure \ref{wave-spec-plot}, we calculated the local autocovariance estimate of the series. In Figure \ref{wave-var-plot}, we see in the solid line the estimate using our methodology for the local variance function of the time series. The nonstationary nature of the variance is clear to see, with the summer months containing low variability, while winter months display much higher variability. As one way of determining the performance of our method, we can compare the variance estimate with the estimate obtained by using the detrended wave height data. The detrended wave height data is obtained by subtracting the wavelet thresholding trend estimate from the data. We then perform standard LSW inference on this series, using the same parameters as in our approach. In dashed line, we see the local variance estimate obtained in this way. We see that the two estimates agree, which is reassuring on two counts: first that our spectral estimate obtained using the differenced data is accurate, and secondly that the trend estimate obtained was accurate.

\begin{figure}[]
\centering
\includegraphics[width = 0.9\textwidth]{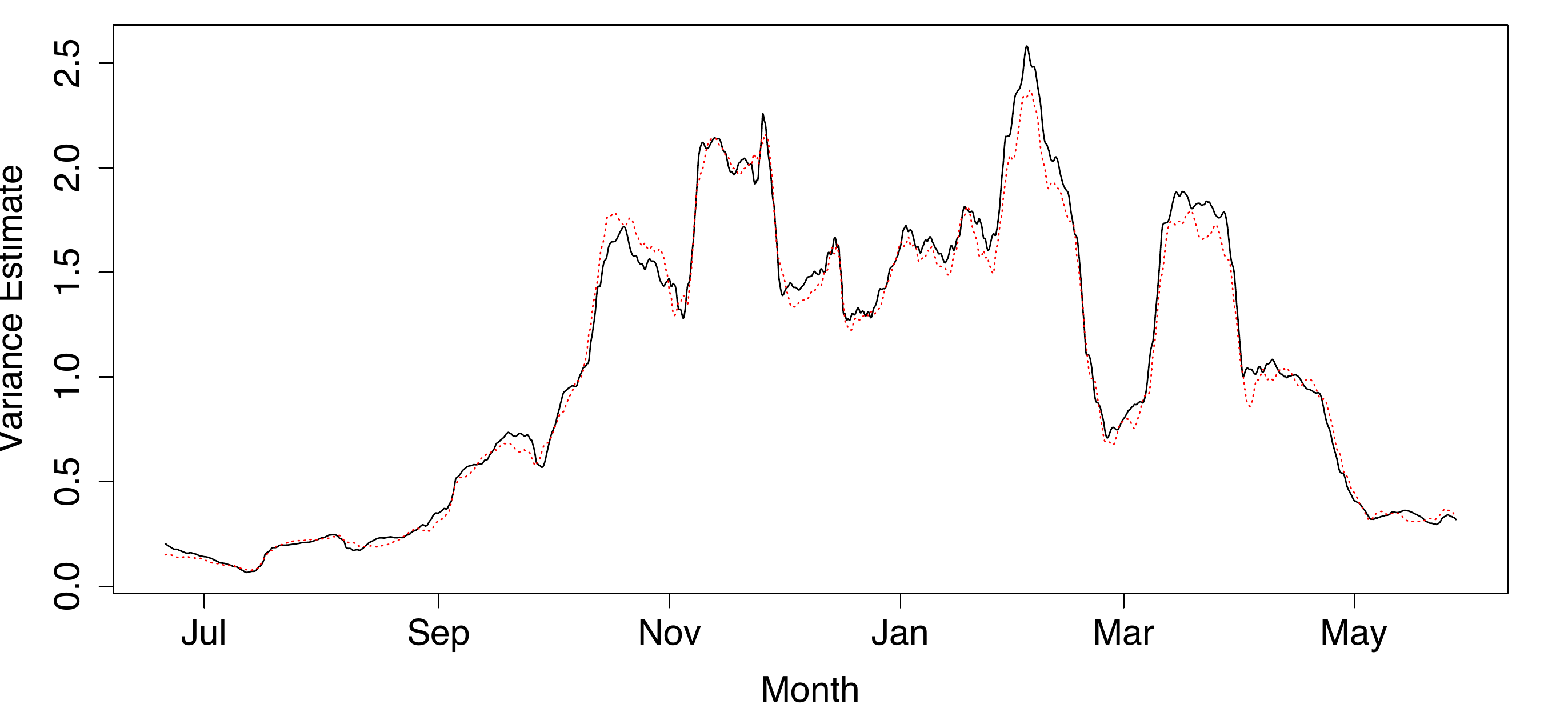}
\caption{Local variance estimate for the wave height data. Solid line: obtained using our methodology. Dashed line: obtained from the detrended data.}
\label{wave-var-plot}
\end{figure}

Finally, we plot the autocorrelation function across 4 time points, which highlights the second-order nonstationary nature of the data, and showing how the structure of the autocorrelation varies considerably over time. In Figure \ref{wave-acf-plot}, we see that estimate for the local autocorrelation function for the first observation recorded in the months of July 2005, October 2005, January 2006 and April 2006. These are shown in the solid, dashed, dotted, and dashed and dotted lines respectively. We see that in general the series exhibits strong autocorrelation, which we may expect due to the observations being recorded at hourly intervals. Furthermore, we note that the shape of the autocorrelation function changes across the four months plotted.

\begin{figure}[]
\centering
\includegraphics[width = 0.9\textwidth]{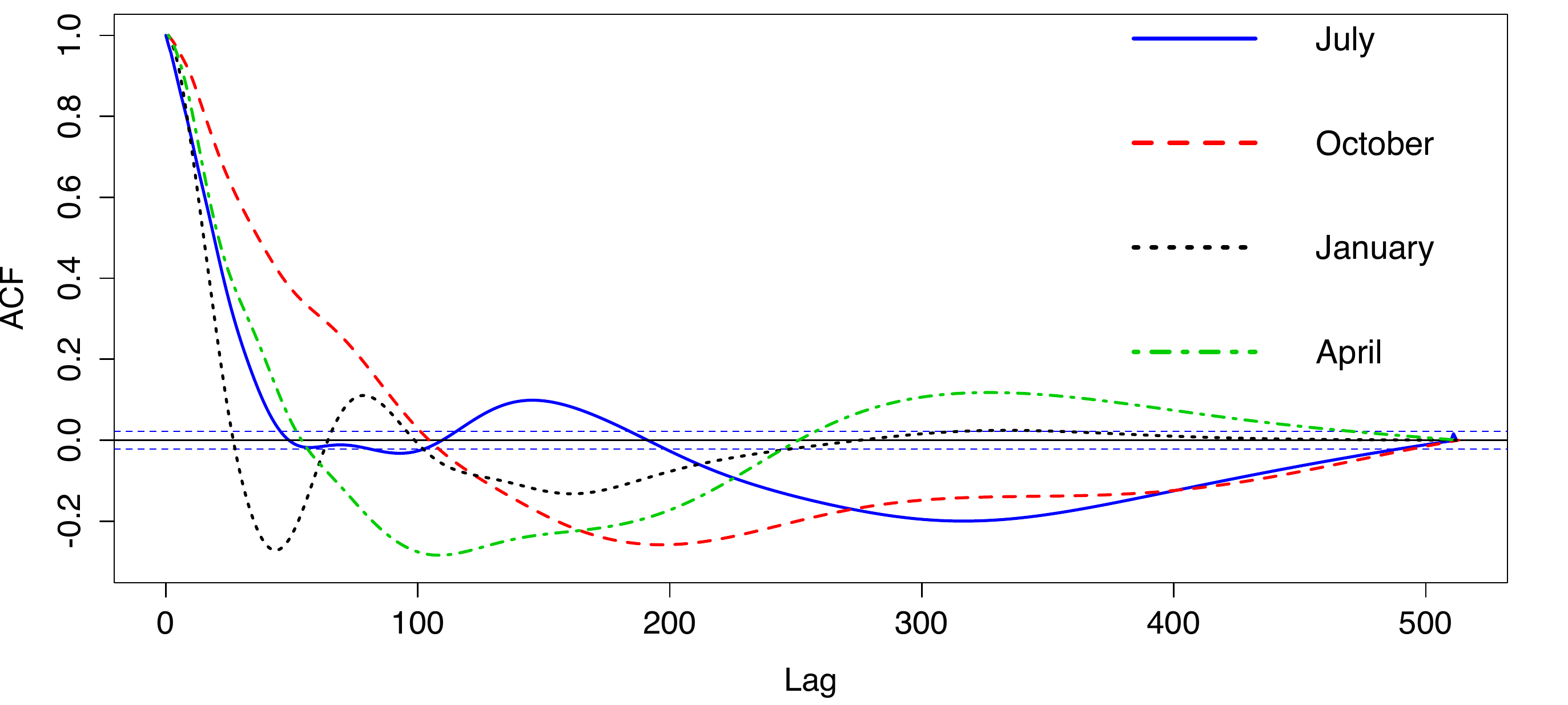}
\caption{Local autocorrelation function estimate for the wave height data, at 4 different time points. Solid: July, dashed: October, dotted: January, dotted and dashed: April.}
\label{wave-acf-plot}
\end{figure}

\section{Proofs of Results}\label{appB}

\subsection{Proof of Proposition \protect\ref{prop1}}

\tcr{The matrix $D^{1}_{J}$ is symmetric and positive semi-definite, since it can be expressed as a Gram matrix of vector inner products:
\begin{equation*}
D^1_J =  \langle \nabla \Psi_j , \nabla \Psi_l \rangle,   
\end{equation*}
where $\Psi_j$ is the autocorrelation wavelet vector at scale $j$, and $\nabla$ is the first-differencing matrix with diagonal entries $1$, above the main diagonal entries equal to $-1$, and all other entries equal to zero. The matrix $\nabla$ is invertible (with inverse given by the upper triangular matrix with non-zero entries equal to 1).  By \cite{nason2000wavelet} Theorem 1, the family of vectors $\{ \Psi_{j}\}_{j=-J}^{-1}$ is linearly independent. This implies that $\{ \nabla \Psi_{j} \}_{j=-J}^{-1}$ is also linearly independent, since this family is given by the invertible matrix transform of a family of linearly independent vectors. Therefore, $D^1_J$ is invertible since it is the Gram matrix of a set of linearly independent vectors.}

\subsection{Proof of Proposition \protect\ref{A1exp}} 
The expectation is given by
\begin{align*}
\mathbb{E} (\tilde{I}_{k}^{j}) &= \mathbb{E} \left[ \left( \sum_{t} \left[ \epsilon_{t} - \epsilon_{t-1}  + \mathcal{O} ( T^{-1}) \right]\psi_{j,k-t} \right)^{2} \right] \\[1ex]
&= \mathbb{E}  \left[ \left( \sum_{t} \epsilon_{t} \psi_{j,k-t} \right)^{2} \right] + \mathbb{E}  \left[ \left( \sum_{t} \epsilon_{t-1} \psi_{j,k-t} \right)^{2} \right] \\[1ex]
&\qquad - 2 \mathbb{E} \left( \sum_{s} \sum_{t} \epsilon_{t} \epsilon_{s-1} \psi_{j,k-t} \psi_{j,k-s}  \right)  + \mathcal{O} ( T^{-1}) \\[1ex]
&:= I + II + III  + \mathcal{O} ( T^{-1}),
\end{align*}
where the remainder term can come out of the inner bracket since, for fixed $j$, the sum is finite as the wavelet is compactly supported. Now we evaluate each expectation individually. The term $I$ is simply the expectation of the raw wavelet periodogram of the original LSW model. Hence,
\begin{equation*}
I = \sum_{l} A_{jl}S_{l} \left( \frac{k}{T} \right) + \mathcal{O} ( T^{-1}) .
\end{equation*}
Next, observe that the term $II$ is equal to $\mathbb{E} (I_{k-1}^{j})$, since $d_{j,k-1} = \sum_{s} X_{s} \psi_{j,k-1-s}$, and by substituting $s = t-1$, we obtain $\sum_{t} X_{t-1} \psi_{j,k-t} = \sum_{s} X_{s} \psi_{j,k-1-s}$. Next, setting $u = k-1$, we obtain
\begin{align*}
\mathbb{E} (I^{j}_{u}) = \sum_{l} \sum_{m} w_{lm}^{2} \left( \sum_{t} \psi_{l,m-t} \psi_{j,u-t}\right)^{2}.
\end{align*}
Now, following a similar argument to \cite{nason2000wavelet}, we obtain 
\begin{align*}
II &=  \sum_{l} S_{l} \left( \frac{u+1}{T} \right) \sum_{n} \sum_{t} \sum_{s} \psi_{j,-s} \psi_{j,-t} \psi_{l,n-s} \psi_{l,n-t} + \mathcal{O} (T^{-1})\\[1ex]
&= \sum_{l} A_{jl} S_{l} \left( \frac{k}{T} \right) + \mathcal{O} (T^{-1}).
\end{align*}
Finally,

\begin{align*}
III &= -2 \mathbb{E} \left( \sum_{s} \sum_{t} \epsilon_{t} \epsilon_{s-1} \psi_{j,k-t} \psi_{j,k-s}  \right)\\[1ex]
&= -2 \sum_{s} \sum_{t} \psi_{j,k-t} \psi_{j,k-s} \mathbb{E} (\epsilon_{t} \epsilon_{s-1} )\\[1ex] 
&= -2 \sum_{s} \sum_{t} \psi_{j,k-t} \psi_{j,k-s} \sum_{l} S_{l} \left( \frac{t+s-1}{2T} \right) \Psi_{l}(s-t-1) + \mathcal{O} (T^{-1}),
\end{align*}
since 
\begin{equation*}
\mathbb{E} (\epsilon_{t} \epsilon_{s-1}) = c \left( \frac{t+s-1}{2T}, s-t-1 \right) + \mathcal{O}(T^{-1}).
\end{equation*}
The remainder of the proof is similar to that of the proof of Theorem 1 of \cite{nason2013test}. Let $u = k-t$ and $v = k-s$, substituting into the above to obtain
\begin{align*}
III &= -2 \sum_{l} \sum_{u} \sum_{v} \psi_{j,u} \psi_{j,v} S_{l} \left(\frac{k}{T} - \frac{u+v-1}{2T} \right) \Psi_{l}(u-v-1) + \mathcal{O} (T^{-1}) \\[1ex]
&= -2 \sum_{l} \sum_{u} \sum_{v} \psi_{j,u} \psi_{j,v}\left[ S_{l} \left(\frac{k}{T} \right) + \mathcal{O} \left( \frac{ |u+v-1|}{2T} \right) \right] \Psi_{l}(u-v-1) + \mathcal{O} (T^{-1}) .
\end{align*}
The remainder term in the above expression can be shown to be $\mathcal{O}(T^{-1})$ (A.2.1 in \cite{nason2013test}). Hence, we are left with 
\begin{align*}
III &= -2 \sum_{l} S_{l} \left(\frac{k}{T} \right) \sum_{u} \sum_{v} \psi_{j,u} \psi_{j,v} \Psi_{l}(u-v-1) + \mathcal{O} (T^{-1}).
\end{align*}
Finally, substituting $r= u-v$, we obtain
\begin{align*}
C &= -2 \sum_{l} S_{l} \left(\frac{k}{T} \right) \sum_{u} \sum_{r} \psi_{j,u} \psi_{j,u-r} \Psi_{l}(r-1) + \mathcal{O} (T^{-1})\\[1ex]
&= -2 \sum_{l} S_{l} \left(\frac{k}{T} \right) \sum_{r} \Psi_{j} (r) \Psi_{l}(r-1) + \mathcal{O} (T^{-1}) \\[1ex]
&= -2 \sum_{l} A^{1}_{jl} S_{l} \left(\frac{k}{T} \right)  + \mathcal{O} (T^{-1}).
\end{align*}
Hence, we obtain
\begin{align*}
\mathbb{E} ( \tilde{I}_{k}^{j}  ) &= I + II + III \\
&= \sum_{l} A_{jl} S_{l} \left(\frac{k}{T} \right)  + \sum_{l} A_{jl} S_{l} \left(\frac{k}{T} \right)  -2 \sum_{l} A^{1}_{jl} S_{l} \left(\frac{k}{T} \right)  + \mathcal{O} (T^{-1}) \\[1ex]
&= 2 \sum_{l} \left(A_{jl} - A^{1}_{jl} \right) S_{l} \left(\frac{k}{T} \right) + \mathcal{O} (T^{-1}) .
\end{align*}
For the variance part, we follow the same argument as in the proof of Proposition 4 of \cite{nason2000wavelet}. The wavelet coefficients of the differenced series are asymptotically Gaussian. Hence, the wavelet periodograms are asymptotically scaled $\chi^{2}$-distributed, and hence the variance is asymptotically proportional to the expectation squared.

\subsection{Proof of Theorem \protect\ref{invertibility1}} 

(Proof for the Haar wavelet). The proof follows the same strategy to that of the proof of Theorem 2 in \cite{nason2000wavelet}. We show that there exists $\delta >0$ such that $\lambda_{\text{min}} (P) \geq \delta$, by using the following property from Toeplitz matrix theory. Let $T$ be Toeplitz (and Hermitian) with elements $\{t_{0}, t_{1}, \ldots \}$. Let $f(z) = \sum_{n=-\infty}^{\infty} t_{n} z^{n}$ for $z \in \mathbb{C}$ be the symbol of the operator associated with $T$. If $\sum_{n} |t_{n} | < \infty$, then $f(z)$ is analytic in the open unit disc $D$ in the complex plane and continuous in the closed unit disc $\Delta = D \cup S$, where $S$ denotes the unit circle. The spectrum $\Lambda$ of the (Laurent) operator $T$ is $\Lambda(T) = f(S)$. If $T$ is symmetric then an estimate of the smallest eigenvalue of $T$ is 
\begin{equation*}
\min_{|z|=1} \{ f(z) \} = \min_{|z|=1} \left\{ t_{0} + 2 \text{Re} \left( \sum_{n=1}^{\infty} t_{n} z^{n} \right) \right\} .
\end{equation*}
(\cite{reichel1992eigenvalues}, theorem 3.1 (i)). For ease of notation, indices will now run from $1$ to $\infty$ instead of $-1$ to $-\infty$. Using straightforward algebra, we can derive explicit formulae for the entries of $P$, which are given by the following lemma.
\begin{lemma}
In the case of the Haar wavelet, the elements of the matrix $P$ are given by
\begin{equation*}
P_{jj} = 10,
\end{equation*}
\begin{equation*}
P_{j,j+m} = 6 \times 2^{-m/2}, \text{ for } l = j+m, m>0.
\end{equation*}
\end{lemma}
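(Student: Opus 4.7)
The plan is a direct, termwise computation that exploits the explicit piecewise-linear form of the Haar autocorrelation wavelet. Working in the positive-index convention, recall that $\Psi_j(0)=1$, $\Psi_j(\tau) = 1 - 3|\tau|\,2^{-j}$ for $1\le|\tau|\le 2^{j-1}$, $\Psi_j(\tau) = |\tau|\,2^{-j}-1$ for $2^{j-1}\le|\tau|\le 2^j$, and $\Psi_j(\tau)=0$ for $|\tau|\ge 2^j$. By Definition~\ref{A1_mat_def},
\[
D^1_{jl} \;=\; 2\sum_{\tau}\Psi_j(\tau)\,\delta_l(\tau), \qquad \delta_l(\tau) \;:=\; \Psi_l(\tau) - \Psi_l(\tau-1).
\]
A first difference of the piecewise formula gives $\delta_l(\tau) = +3\cdot 2^{-l}$ for $-2^{l-1}+1\le\tau\le 0$, $\delta_l(\tau) = -3\cdot 2^{-l}$ for $1\le\tau\le 2^{l-1}$, $\delta_l(\tau) = -2^{-l}$ for $-2^l+1\le\tau\le -2^{l-1}$, $\delta_l(\tau) = +2^{-l}$ for $2^{l-1}+1\le\tau\le 2^l$, and zero elsewhere.

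For the off-diagonal case $l>j$, the support of $\Psi_j$ satisfies $|\tau|\le 2^j-1 < 2^{l-1}$, so $\delta_l$ takes only the two values $\pm 3\cdot 2^{-l}$ on this support, with sign $+$ for $\tau\le 0$ and sign $-$ for $\tau\ge 1$. Using the evenness $\Psi_j(-\tau)=\Psi_j(\tau)$ together with the zero-mean identity $\sum_\tau \Psi_j(\tau)=0$, I rewrite
\[
\sum_\tau \Psi_j(\tau)\,\delta_l(\tau) \;=\; 3\cdot 2^{-l}\Bigl(\sum_{\tau\le 0}\Psi_j(\tau) - \sum_{\tau\ge 1}\Psi_j(\tau)\Bigr) \;=\; 3\cdot 2^{-l}\,\Psi_j(0) \;=\; 3\cdot 2^{-l},
\]
so $D^1_{jl} = 6\cdot 2^{-l}$ and $P_{jl} = 2^{j/2}\,D^1_{jl}\,2^{l/2} = 6\cdot 2^{(j-l)/2} = 6\cdot 2^{-m/2}$ with $m=l-j$.

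For the diagonal case $j=l$, the support of $\Psi_j$ reaches the boundary point $|\tau|=2^{j-1}$ where $\delta_j$ switches piecewise branches. I would pair each $\tau$ with $-\tau$: on the open intervals strictly between the kinks one checks that $\delta_j(\tau)+\delta_j(-\tau)=0$, so evenness of $\Psi_j$ cancels every such pair. The only surviving contributions come from $\tau=0$ and $\tau=\pm 2^{j-1}$, since $\Psi_j$ vanishes at $\pm 2^j$. Substituting $\Psi_j(0)=1$, $\delta_j(0) = 3\cdot 2^{-j}$, $\Psi_j(\pm 2^{j-1}) = -\tfrac12$, $\delta_j(2^{j-1}) = -3\cdot 2^{-j}$, and $\delta_j(-2^{j-1}) = -2^{-j}$ yields $\sum_\tau \Psi_j(\tau)\delta_j(\tau) = 3\cdot 2^{-j} + (-\tfrac12)\cdot(-4\cdot 2^{-j}) = 5\cdot 2^{-j}$. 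Hence $D^1_{jj} = 10\cdot 2^{-j}$ and $P_{jj} = 2^{j/2}\cdot 10\cdot 2^{-j}\cdot 2^{j/2} = 10$.

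The only real obstacle is careful bookkeeping of which piecewise branch $\delta_l$ lies in at the boundaries $\pm 2^{l-1}$ and $\pm 2^l$; in particular the degenerate scale $j=1$ (where the first-case interior $1\le|\tau|\le 2^{j-1}-1$ is empty) should be verified separately by a short hand calculation, which indeed gives $\sum_\tau \Psi_1(\tau)\delta_1(\tau)=5/2$ as expected.
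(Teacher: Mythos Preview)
Your proof is correct and follows a genuinely different (and cleaner) route than the paper. The paper computes $A^1_{jj}$ and $A^1_{jl}$ directly by splitting the sum $\sum_\tau \Psi_j(\tau)\Psi_l(\tau+1)$ into piecewise-linear pieces, evaluating each polynomial sum explicitly, and then forming $D^1_{jl}=2(A_{jl}-A^1_{jl})$ using the known closed forms for $A_{jl}$ from \cite{nason2000wavelet}. You instead work directly with the first difference $\delta_l(\tau)=\Psi_l(\tau)-\Psi_l(\tau-1)$, recognise that on the support of $\Psi_j$ (for $l>j$) it is simply $\pm 3\cdot 2^{-l}$, and invoke the zero-mean identity $\sum_\tau\Psi_j(\tau)=0$ together with evenness to collapse the off-diagonal sum to $\Psi_j(0)$; for the diagonal you pair $\tau$ with $-\tau$ and observe that $\delta_j(\tau)+\delta_j(-\tau)=0$ except at the kinks $\tau=0,\pm 2^{j-1}$, reducing the whole sum to three explicit terms. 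Your approach buys a much shorter calculation and makes transparent \emph{why} the answer is so simple (it is essentially a discrete integration-by-parts exploiting the symmetry of $\Psi_j$), whereas the paper's approach is more pedestrian but requires no insight beyond careful bookkeeping of the piecewise sums. Your caution about the degenerate case $j=1$ is appropriate and your hand check $\sum_\tau\Psi_1(\tau)\delta_1(\tau)=5/2$ is correct.
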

\begin{proof}
In general, we can derive the discrete autocorrelation wavelets $\Psi_{j} (\tau)$ by discretising the continuous wavelet autocorrelation function $\Psi (\tau)$, using the relationship
\begin{equation*}
\Psi_{j} (\tau) = \Psi \left(  \frac{ | \tau | }{ 2^{j}} \right).
\end{equation*}
For Haar wavelets, the continuous wavelet autocorrelation function is given by
\begin{align*}
\Psi (\tau) &= \int_{-\infty}^{\infty} \psi_{H} (x) \psi_{H} (x - \tau) dx  = \begin{cases}
1 - 3 | \tau | & \text{for } | \tau | \in [0, 1/2], \\
| \tau | - 1 & \text{for } | \tau | \in (1/2,1],
\end{cases}
\end{align*}
where $\psi_{H} (x)$ is the Haar mother wavelet. The discretisation formula holds for $\tau = -(2^{j}-1), \ldots, 0, \ldots , (2^{j}-1)$, and is equal to zero for all other values of $\tau$. By \cite{nason2000wavelet}, we have that the elements of $A$ are given by
\begin{align*}
A_{jj} &= \frac{1}{3} 2^{j} + \frac{5}{3} 2^{-j}, \qquad A_{jl} = 2^{2j-l-1} + 2^{-l} , \quad l > j >0.
\end{align*}
We must therefore calculate the elements of $A^{1}$, from which we can obtain the elements of $D^{1}$, and hence $P$. Note that replacing the $(\tau -1)$ terms with $(\tau+1)$ results in an equivalent definition of the operators $A^{1}$ and $D^{1}$. When $l=j$, we have that
\begin{align*}
A^{1}_{jj} &= \sum_{\tau = -(2^{j}-1)}^{2^{j}-1}\Psi_{j} (\tau)  \Psi_{j} (\tau+1) \\[1ex]
& = 2 \left[ \sum_{\tau = 1}^{2^{j-1}-1} \Psi_{j} (\tau) \Psi_{j} (\tau+1)+ \sum_{\tau = 2^{j-1}+1}^{2^{j}-1}  \Psi_{j} (\tau)  \Psi_{j} (\tau+1)  \right] -2^{-j-1} + \frac{1}{4} - 2^{-j-1} + \frac{1}{4}\\[1ex]
&= 2 \left[ \sum_{\tau = 1}^{2^{j-1}-1}  \left( 1 - \frac{3\tau}{2^{j}} \right)  \left( 1 - \frac{3(\tau+1)}{2^{j}} \right) + \sum_{\tau = 2^{j-1}+1}^{2^{j}-1} \left( \frac{\tau}{2^{j}} - 1 \right)  \left( \frac{\tau+1}{2^{j}} - 1 \right)   \right] -2^{-j} + \frac{1}{2} \\[1ex]
& = \frac{1}{3} 2^{j} -\frac{10}{3} 2^{-j}.
\end{align*}
Hence $D^{1}_{jj} = 2A_{jj} - 2A^{1}_{jj} = 10 \times 2^{-j}$, and therefore $P_{jj} = 10$. When $j\neq l$, without loss of generality assume $l>j$. We have that 
\begin{align*}
A^{1}_{jl} &= \sum_{\tau = -(2^{j}-1)}^{2^{j}-1}\Psi_{j} (\tau)  \Psi_{l} (\tau+1) = \sum_{\tau = -(2^{j}-1)}^{2^{j}-1} \Psi_{j} (\tau) \left( 1 - \frac{3|\tau+1|}{2^{l}} \right) \\[1ex]
&= 1 - 3 \times 2^{-l} +  \sum_{\tau = -(2^{j}-1)}^{-1} \Psi_{j} (\tau) \left( 1 - \frac{3|\tau+1|}{2^{l}} \right) + \sum_{\tau = 1}^{ 2^{j}-1} \Psi_{j} (\tau) \left( 1 - \frac{3(\tau+1)}{2^{l}} \right),
\end{align*}
since the value of $|\tau+1|$ is always less than or equal to $1/2$, and so we use the $1-3|\tau|$ part of the autocorrelation wavelet formula. The first sum is equal to
\begin{align*}
& \sum_{\tau = -(2^{j}-1)}^{ -2^{j-1}-1} \Psi_{j} (\tau) \left( 1 + \frac{3(\tau+1)}{2^{l}} \right) + \sum_{\tau = -2^{j-1}}^{ -1} \Psi_{j} (\tau) \left( 1 + \frac{3(\tau+1)}{2^{l}} \right) \\[1ex]
&= \sum_{\tau = -(2^{j}-1)}^{ -2^{j-1}-1} \left( -1 - \frac{\tau}{2^{j}} \right)  \left( 1 + \frac{3(\tau+1)}{2^{l}} \right) + \sum_{\tau = -2^{j-1}}^{ -1}  \left( 1 + \frac{3\tau}{2^{j}} \right)  \left( 1 + \frac{3(\tau+1)}{2^{l}} \right) \\[1ex]
&= \left( -2^{j-l-2} - 2^{j-3} + 2^{-l-1} + \frac{1}{4} - 2^{j-l-1} + 2^{2j-l-2} \right) +  \left( 3 \times 2^{j-l-2} + 2^{j-3} -3 \times 2^{-l-1} -\frac{3}{4} \right)  \\[1ex]
& = 2^{2j-l-2} -2^{-l} - \frac{1}{2}. 
\end{align*}
The second sum is given by
\begin{align*}
& \sum_{\tau = 1}^{ 2^{j-1}} \Psi_{j} (\tau) \left( 1 - \frac{3(\tau+1)}{2^{l}} \right) + \sum_{\tau = 2^{j-1}+1}^{ 2^{j}-1} \Psi_{j} (\tau) \left( 1 - \frac{3(\tau+1)}{2^{l}} \right) \\[1ex]
&= \sum_{\tau = 1}^{ 2^{j-1}} \left( 1 - \frac{3 \tau}{2^{j}} \right) \left( 1 - \frac{3(\tau+1)}{2^{l}} \right) + \sum_{\tau = 2^{j-1}+1}^{ 2^{j}-1}  \left( \frac{\tau}{2^{j}} - 1 \right) \left( 1 - \frac{3(\tau+1)}{2^{l}} \right) \\[1ex]
&= \left( 2^{j-3} + 3\times 2^{-l} -\frac{3}{4}  \right) + \left( 2^{2j-l-2} -2^{j-3} -2^{-l} + \frac{1}{4} \right) \\
&= 2^{2j-l-2} +2^{1-l} -\frac{1}{2} .
\end{align*}
From this, we finally obtain that 
\begin{align*}
A_{jl}^{1} &= \left(2^{2j-l-2} -2^{-l} - \frac{1}{2} \right) + \left( 2^{2j-l-2} +2^{1-l} -\frac{1}{2} \right) + \left( 1 - 3 \times 2^{-l} \right) \\[1ex]
&= 2^{2j-l-1} -2^{1-l}.
\end{align*}
Therefore, we have that $D^{1}_{jl} = 2A_{jl} - 2 A^{1}_{jl} =  6 \times 2^{-l}$, from which the required form for $P$ follows.
\end{proof}
Returning to the main proof, we have that $P$ is symmetric, however the formula above only refers to the upper triangular portion of the matrix $P$. Now, $P$ is a Toeplitz matrix with $p_{0} = 10$ and $p_{m} = 6 \times 2^{-m/2}$. We can now show that $\lambda_{\text{min}} (P) \geq \delta >0$. Substituting in the formula for the symbol of the symmetric Toeplitz $P$, we obtain
\begin{align*}
\min_{|z|=1} \{ f(z) \} &= \min_{|z|=1} \left\{ 10 + 2 \text{Re} \left( \sum_{n=1}^{\infty} 6 \times 2^{-n/2} z^{n} \right) \right\} \\[1ex]
&= 10 + 12 \min_{|z|=1} \left\{  \text{Re} \left( \frac{-\sqrt{2}z}{\sqrt{2}z -2}  \right)  \right\}\\[1ex]
&= 10 + 12 \min_{|z|=1} \left\{  \frac{ 2\sqrt{2} \text{Re} (z) - 2 }{6-4\sqrt{2} \text{Re} (z)}  \right\}.
\end{align*}
This function is strictly monotonically increasing on $-1 \leq \text{Re} (z) \leq 1$, therefore it follows that $\min_{|z|=1} \{ f(z) \} = f(-1)$. Hence,
\begin{equation*}
\lambda_{\text{min}}  (P) \geq f(-1) = 10 + \frac{12(-2\sqrt{2}-2)}{6+4\sqrt{2}} = \frac{18+8\sqrt{2}}{3+2\sqrt{2}} > 0.
\end{equation*}
(Proof for the Shannon wavelet). Note that the indices now run over the negative integers. We can compute the entries of $P$ using a variant of the Fourier domain formula shown in Equation (\ref{parceval}) below, which is a consequence of Parseval's relation.
\begin{equation}\label{parceval}
A_{jl} = \sum_{\tau} \Psi_{j} (\tau) \Psi_{l} (\tau) = \frac{1}{2\pi} \int \hat{\Psi}_{j} (\omega)  \hat{\Psi}_{l} (\omega) d \omega ,
\end{equation}
where $\hat{\Psi}_{j} (\omega)$ denotes the Fourier transform of $\Psi_{j} (\tau)$, which is equal to the squared modulus of the Fourier transform of the non-decimated wavelet coefficients, $\hat{\psi_{j} }(\omega )$. Explicitly,
\begin{equation}\label{autofourier}
\hat{\Psi}_{j} (\omega) = \left| \hat{\psi}_{j} (\omega) \right|^{2} = 2^{-j} \left| m_{1}\left( 2^{-(j+1)} \omega \right) \right|^{2} \prod_{l=0}^{-(j+2)} \left| m_{0} \left( 2^{l} \omega \right) \right|^{2},
\end{equation}
where $m_{0} (\omega) = 2^{-1/2} \sum_{k} h_{k} \exp (-i \omega k)$ is the transfer function; $\{ h_{k} \}$ is is the high-pass quadrature mirror filter with $\sum_{k} h_{k}^{2}=1$ and $\sum_{k} h_{k} = 2^{1/2}$; and $|m_{1} (\omega)|^{2} = 1 - |m_{0} (\omega) |^{2}$. 

The formula for the Fourier transform of the non-decimated wavelets given in Equation (\ref{autofourier}) and the corresponding formulae for $m_{0}( \omega )$ and $m_{1}( \omega )$ for the Shannon wavelet can be found using the Fourier transform of the continuous time mother and father wavelets, which can be found in \cite{chui1997wavelets}, pages 46 and 64. Define the set $C_{j}$, for $j<0$, to be
\begin{equation*}
C_{j} = \left[ -\frac{\pi}{2^{-j-1}}, - \frac{\pi}{2^{-j}} \right] \cup  \left[ \frac{\pi}{2^{-j}},  \frac{\pi}{2^{-j-1}} \right].
\end{equation*}
As in \cite{nason2000wavelet}, the Fourier transform of the non-decimated Shannon wavelets is given by
\begin{equation*}
\hat{\psi_{j}} (\omega) = - 2^{-j/2} \exp (-2^{-j-1} i \omega ) \mathbb{I}_{C_{j}} (\omega ),
\end{equation*}
where $\mathbb{I}_{C_{j}}$ is the indicator function on the set $C_{j}$. From this the Fourier transform of the autocorrelation wavelets can be obtained as 
\begin{equation*}
\hat{\Psi}_{j} (\omega) = \left| \hat{\psi}_{j} (\omega )\right|^{2} = 2^{-j} \mathbb{I}_{C_{j}} (\omega) .
\end{equation*}
Using Parseval's relation and the shifting property of the Fourier transform, we have that
\begin{equation*}
D^{1}_{jl} = 2 \sum_{\tau} \Psi_{j} (\tau) (\Psi_{l} (\tau) - \Psi_{l}(\tau + 1) ) = \frac{1}{2\pi}  \int \hat{\Psi}_{j} (\omega)  \hat{\Psi}_{l} (\omega) d \omega - \frac{1}{2\pi}  \int e^{-i \omega} \hat{\Psi}_{j} (\omega)  \hat{\Psi}_{l} (\omega) d \omega.
\end{equation*}
The first term in the sum is exactly the entries of the original $A$-matrix which are given by $A_{jj} = 2^{-j}$ for $j<0$, $A_{jl} = 0 $ for $j \neq l$. When $j \neq l$, the second term is equal to zero since the supports of different $\hat{\Psi}_{j} (\omega)$ do not overlap. Thus, the matrix is diagonal, and when $j=l$, the second term is given by
\begin{align*}
-\sum_{\tau} \Psi_{j} (\tau) \Psi_{j}(\tau + 1)  &= - \frac{1}{2\pi}  \int e^{-i \omega} \hat{\Psi}_{j} (\omega)^{2} d \omega \\
&= -\frac{2^{-2j-1}}{\pi} \int e^{-i \omega} \mathbb{I}_{C_{j}} (\omega) d \omega \\[1ex]
&= -\frac{2^{-2j-1}}{\pi} \left(\int_{-2^{j+1}\pi}^{-2^{j}\pi} e^{-i \omega} d \omega + \int_{2^{j}\pi}^{2^{j+1}\pi} e^{-i\omega}   d \omega \right) \\[1ex]
&= -\frac{2^{-2j-1}}{\pi} \left( \left[i e^{-i \omega} \right]_{-2^{j+1}\pi}^{-2^{j}\pi} +  \left[i e^{-i \omega} \right]_{2^{j1}\pi}^{2^{j+1}\pi}  \right)\\
& = -\frac{2^{-2j-1}i}{\pi} \left( \exp(2^{j} \pi i) -\exp(2^{j+1} \pi i)+\exp(-2^{j+1} \pi i) -\exp(-2^{j} \pi i)     \right).
\end{align*}
Now, expanding the complex exponential terms into trigonometric functions, we obtain
\begin{align*}
\begin{split}
-\sum_{\tau} \Psi_{j} (\tau) \Psi_{j}(\tau + 1)  &= -\frac{2^{-2j-1}i}{\pi} \left\{  \cos(2^{j}\pi ) + i\sin (2^{j}\pi) -\cos(2^{j+1}\pi ) - i\sin (2^{j+1}\pi) \right. \\
& \left. + \cos(-2^{j+1}\pi ) + i\sin (-2^{j+1}\pi) - \cos(-2^{j}\pi ) - i\sin (-2^{j}\pi)\right\}.
\end{split}
\end{align*}
After much simplification in which the cosine terms will vanish, we obtain
\begin{equation*}
-\sum_{\tau} \Psi_{j} (\tau) \Psi_{j}(\tau + 1) = \frac{2^{-2j}}{\pi} \left( \sin(2^{j} \pi) - \sin(2^{j+1} \pi) \right).
\end{equation*}
Hence, the diagonal entries of the matrix $D^{1}$ are given by 
\begin{equation*}
D^{1}_{jj} = 2^{-j+1} + \frac{2^{-2j+1}}{\pi} \left( \sin(2^{j} \pi) - \sin(2^{j+1} \pi) \right).
\end{equation*}
Hence, $P_{jj} = 2^{-2j+1} + 2^{-3j+1}\left[ \sin(2^{j} \pi) - \sin(2^{j+1} \pi) \right]/ \pi$, while the off-diagonal terms are zero. Next, approximate the diagonal terms using a Maclaurin series:
\begin{align*}
P_{jj} & \approx 2^{-2j+1} + \frac{2^{-3j+1}}{\pi} \left( 2^{j}\pi - \frac{(2^{j}\pi)^{3}}{6} - 2^{j+1}\pi + \frac{(2^{j+1}\pi)^{3}}{6} \right) \\
&= 2^{-2j+1} +2^{-2j+1} - 2^{-2j+2} - \frac{\pi^{2}}{3} + \frac{8\pi^{2}}{3} \\
&=\frac{7 \pi^{2}}{3}.
\end{align*}
Therefore, the matrix is diagonal, with all diagonal entries being uniformly bounded away from 0, with $P_{jj} \geq P_{-1,-1} \approx 13.09$ for all $j$. Hence, we have shown that  there exists some $\delta>0$ such that $\lambda_{\text{min}} (P) \geq \delta$. 

\subsection{Proof of Theorem \protect\ref{smooth_theorem4}}

As $T \rightarrow \infty$, the eigenvalues of $D^{1}$ tend to zero and hence, when viewed as an operator acting on the sequence space $\ell^{2}(\mathbb{N})$, its inverse is unbounded. In the locally stationary Fourier time series setting, a similar relationship is found, as given in Equation (5) of \cite{roueff2011locally}. Loosely speaking, the original spectrum at frequency $\omega$ is related to the differenced one through multiplication of the term $|1-e^{-i \omega}|^{-2}$. As $\omega \rightarrow 0$ (corresponding to low frequencies) the equation blows up. This mirrors our scenario, where, as the correction matrix grows in size -- and we consider coarse-scale (low frequency) behaviour -- the inverse matrix norm becomes larger. 

We can account for the unboundedness of the inverse of $D^{1}$ by using a rescaling of the LSW process itself. Consider the auxiliary process $\epsilon_{t} = \sum_{j,k}  \tilde{w}_{jk} \tilde{\psi}_{j,k-t} \xi_{lm}$, where $\tilde{w}_{jk} = 2^{j/4} w_{jk}$ and $\tilde{\psi}_{j,k-t} = 2^{-j/4} \psi_{j,k-t}$. Then, the expectation of the raw wavelet periodogram (with respect to the rescaled wavelet) is given by
\begin{equation}\label{beans}
\mathbb{E} (\tilde{I}_{k}^{j} ) = \sum_{l} P_{jl} \tilde{S}_{l} (k/T) + \mathcal{O} (T^{-1}),
\end{equation}
where $\tilde{S}_{j} (k/T) = 2^{j/2} S_{j} (k/T)$ and $P_{jl} = 2^{-j/2} D^{1}_{jl} 2^{-l/2}$. We can therefore use the (bounded) $P$-inverse matrix to correct the smoothed, rescaled periodogram, and then multiply by $2^{-j/2}$, since ${S}_{j} (k/T) = 2^{-j/2} \tilde{S}_{j} (k/T)$.  To determine the appropriate threshold for the wavelet-based estimator, we use the following lemma:
\begin{lemma}\label{smooth_theorem1}
For a Gaussian trend LSW process and using a wavelet ${\psi}'$ of bounded variation, the wavelet coefficients $\hat{v}_{rs}^j$, with $2^r = o (T)$, obey uniformly in $s$,
\begin{equation*}
\mathbb{E} ( \hat{v}_{rs} ) - \int_0^1 \sum_n P_{jn} \tilde{S}_n (z) {\psi}'_{rs} (z) dz = \mathcal{O} \left( 2^{r/2} T^{-1} \right),
\end{equation*}
and
\begin{equation*}
\text{Var} ( \hat{v}_{rs} ) = 2 T^{-1} \int_0^1 \left( \sum_n P_{jn} \tilde{S}_n (z) \right)^2 {\psi}'^2_{rs} (z) dz + \mathcal{O} \left( 2^{r} T^{-2} \right).
\end{equation*}
\end{lemma}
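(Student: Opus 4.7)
The plan is to mirror the strategy used in \cite{nason2000wavelet} for the analogous undifferenced statement, substituting the operator $P$ for $A$ and the rescaled spectrum $\tilde S_l$ for $S_l$, and then to control the Riemann-sum discretisation errors through the total variation of the smoothing wavelet $\psi'$.

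For the expectation, linearity gives
\[\mathbb{E}(\hat v^j_{rs}) = T^{-1}\sum_{n=0}^{T-1} \mathbb{E}(\tilde I^j_n) \psi'_{rs}(n/T),\]
and the rescaled version of Proposition \ref{A1exp} yields $\mathbb{E}(\tilde I^j_n) = g_j(n/T) + \mathcal{O}(T^{-1})$ with $g_j(z) := \sum_l P_{jl}\tilde S_l(z)$ a Lipschitz function of $z$. Since $\psi'_{rs}(z) = 2^{r/2}\psi'(2^r z - s)$ has total variation $2^{r/2}V(\psi')$ on its compact support, a standard Koksma--Hlawka bound for bounded-variation integrands gives the Riemann-sum error $\mathcal{O}(2^{r/2}T^{-1})$ for $\int_0^1 g_j \psi'_{rs}\,dz$. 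The secondary term from the $\mathcal{O}(T^{-1})$ remainder contributes $\mathcal{O}(T^{-1} \cdot T^{-1}\sum_n |\psi'_{rs}(n/T)|) = \mathcal{O}(2^{-r/2}T^{-1})$, which is of smaller order.

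For the variance, Gaussianity of $\{\xi_{j,k}\}$ makes the differenced wavelet coefficients $\tilde d_{j,n}$ jointly Gaussian with means of order $\mathcal{O}(T^{-1})$ (due to the Lipschitz trend). Isserlis' theorem then gives
\[\text{Cov}(\tilde I^j_n, \tilde I^j_m) = 2[\text{Cov}(\tilde d_{j,n}, \tilde d_{j,m})]^2 + \mathcal{O}(T^{-2}),\]
so that
\[\text{Var}(\hat v^j_{rs}) = 2T^{-2}\sum_{n,m}[\text{Cov}(\tilde d_{j,n}, \tilde d_{j,m})]^2 \psi'_{rs}(n/T)\psi'_{rs}(m/T) + \mathcal{O}(T^{-2}).\]
The diagonal piece $n=m$ uses $\text{Var}(\tilde d_{j,n}) = g_j(n/T) + \mathcal{O}(T^{-1})$ from Proposition \ref{A1exp}; applying a Riemann-sum approximation to $g_j^2 \psi'^2_{rs}$ (whose total variation is $\mathcal{O}(2^r)$) produces the claimed leading term with discretisation error $\mathcal{O}(2^r T^{-2})$. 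The off-diagonal piece is handled by expanding $\text{Cov}(\tilde d_{j,n}, \tilde d_{j,m})$ in terms of the autocorrelation wavelets $\Psi_l(n-m)$ evaluated at lag $n-m$ and the spectrum at rescaled time $(n+m)/(2T)$; summability of $\Psi_l(\tau)$ in $\tau$ together with Cauchy--Schwarz and the localisation of $\psi'_{rs}$ yield the required $\mathcal{O}(2^r T^{-2})$ bound.

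The main obstacle is the off-diagonal variance estimate. One must adapt the covariance-decay argument of \cite{nason2000wavelet} to the differenced process, whose increments $\tilde\xi_{j,k} = \xi_{j,k} - \xi_{j,k-1}$ carry a non-trivial lag-1 correlation. The operator $P$ (equivalently $D^1$) already encodes how this propagates into the expectation; for the variance one must verify that the corresponding cross-terms between $\Psi_l$ and $B\Psi_l$ inherit an analogous summable envelope, so that the resulting bound is compatible with the $2^r T^{-2}$ order. This is essentially careful bookkeeping, but it requires attention near the boundary of the compactly supported wavelets where the shifted and unshifted autocorrelation wavelets overlap.
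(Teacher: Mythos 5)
Your bias argument is sound and is exactly what the paper has in mind: the paper in fact gives no detailed proof of this lemma at all, disposing of it in one line as ``analogous to Theorem 3 of \cite{nason2000wavelet}'', the whole point being that once the rescaled identity $\mathbb{E}(\tilde{I}^j_k) = \sum_l P_{jl}\tilde{S}_l(k/T) + \mathcal{O}(T^{-1})$ is established, the smoothing argument of \cite{nason2000wavelet} transfers verbatim with $P$ in place of $A$ and $\tilde{S}$ in place of $S$. Your Riemann-sum bound via the total variation $2^{r/2}V(\psi')$ of $\psi'_{rs}$ and the Lipschitz continuity of $z \mapsto \sum_l P_{jl}\tilde{S}_l(z)$, with the $\mathcal{O}(T^{-1})$ remainder contributing only $\mathcal{O}(2^{-r/2}T^{-1})$, is precisely that transfer, as is your Isserlis-theorem reduction and your handling of the $\mathcal{O}(T^{-1})$ means induced by the differenced trend.

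The genuine gap is the off-diagonal variance step, and it is not ``careful bookkeeping'': as you have set it up, it fails. The covariances $\mathrm{Cov}(\tilde{d}_{j,n}, \tilde{d}_{j,m})$ for $0 < |n-m| \leq \mathcal{L}_j$ (the filter length, fixed as $T \to \infty$) are $\Theta(1)$, not decaying in $T$, so each off-diagonal band $\tau = n - m \neq 0$ contributes $2T^{-2}\sum_n \mathrm{Cov}(\tilde{d}_{j,n}, \tilde{d}_{j,n+\tau})^2\, \psi'_{rs}(n/T)\psi'_{rs}((n+\tau)/T) \approx 2T^{-1}\int_0^1 c_j(z,\tau)^2\, \psi'^2_{rs}(z)\, dz$, where $c_j(z,\tau)$ is the local lag-$\tau$ autocovariance of the scale-$j$ coefficient process. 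This is $\Theta(T^{-1})$ — the same order as the diagonal term you retain — whereas $2^r T^{-2} = o(T^{-1})$ under $2^r = o(T)$; summability of the autocorrelation wavelets can only give an $\mathcal{O}(T^{-1})$ bound on these bands, never $\mathcal{O}(2^r T^{-2})$. Concretely, take differenced Gaussian white noise and Haar at $j=-1$: $\tilde{d}_{-1,k} = (\epsilon_k - 2\epsilon_{k-1} + \epsilon_{k-2})/\sqrt{2}$ has lag covariances $C(0)=3$, $C(\pm 1) = -2$, $C(\pm 2) = 1/2$, so $\sum_\tau C(\tau)^2 = 17.5$ while the diagonal-only constant is $C(0)^2 = 9$; the variance of the smoothed periodogram carries the former, not the latter, and rescaling by $2^{-j/2}$ affects both sides equally. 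In other words, the exact leading term is $2T^{-1}\int \bigl(\sum_\tau c_j(z,\tau)^2\bigr)\psi'^2_{rs}(z)\, dz$, and the displayed formula (which the paper inherits from the form of the cited theorem) captures only its $\tau = 0$ part. This is a constant-factor issue at fixed $j$: downstream, Theorem \ref{smooth_theorem4} uses only $\mathrm{Var}(\hat{v}^j_{rs}) = \mathcal{O}(T^{-1})$ through the threshold $\lambda^2 = \mathrm{Var}(\hat{v}^j_{rs})\log^2(T)$ and the rates of \cite{neumann1995wavelet}, so the consistency results stand — but the route you propose cannot produce the stated remainder, and any complete proof must either keep the lag-sum in the leading term or settle for an order bound.
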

Lemma \ref{smooth_theorem1} is analogous to Theorem 3 of \cite{nason2000wavelet}. The result of mean square consistency follows due to a combination of Equation \eqref{beans} and Theorem 4 of \cite{nason2000wavelet}. The mean squared error of the smoothed, corrected wavelet periodogram is given by
\begin{align*}
 & \mathbb{E} \left[  \int_{0}^{1}  \left(\widehat{S}_{j} (z) - S_{j} (z) \right)^{2} dz \right]  = 2^{-j}  \mathbb{E} \left[  \int_{0}^{1}  \left(\hat{\tilde{S}}_{j} (z) - \tilde{S}_{j} (z) \right)^{2} dz \right] \\[1ex]
 & \leq  2^{-j+1} \mathbb{E} \left[  \int_{0}^{1} \left( \sum_{l=-J}^{-1} \left(\hat{I}^{l}_{\lfloor zT \rfloor} - \Lambda_{l} (z) \right) P_{jl}^{-1} \right)^{2} dz \right]  + 2^{-j+1} \int^{1}_{0} \left( \sum_{l < -J} \Lambda_{l} (z) P_{jl}^{-1} \right)^{2} dz,
 \end{align*}
 where $\hat{I}^{l}_{\lfloor zT \rfloor}$ is the smoothed estimate of the rescaled raw wavelet periodogram and $\Lambda_{l} (z) = \sum_{n} P_{nl} \tilde{S}_{n} (z)$. The first term can be bounded as
 \begin{align*}
 I & \leq 2^{-j+1}  \left( \sum_{l=-J}^{-1} P_{jl}^{-1} \left( \mathbb{E} \left[ \int_{0}^{1} \left(\hat{I}^{l}_{\lfloor zT \rfloor} - \Lambda_{l} (z) \right)^{2} dz \right] \right)^{1/2} \right)^{2} \\
& \leq 2^{-j+1}  \left( \sum_{l=-J}^{-1} P_{jl}^{-1}  \mathcal{O} \left( T^{-2/3} \log^{2} (T)  \right)^{1/2} \right)^{2} =  \mathcal{O} \left( 2^{-j} T^{-2/3} \log^{2} (T)\right) ,
\end{align*}
which follows since $P$ possesses a bounded inverse with exponentially decaying entries, and using the rate of convergence of the mean squared error of the wavelet thresholding estimator derived in \cite{neumann1995wavelet}, Theorem 3.1 A. The second term is asymptotically dominated by the first, since it can be bounded as
\begin{align*}
II & \leq2^{-j+1} \left(  \sum_{l< -J} \Lambda_{l} (z) P_{jl}^{-1} \right)^{2} \leq 2^{-j+1} \left(  \sum_{l< -J} P_{jl}^{-1} \sum_{n=-\infty}^{-1} 2^{-l/2} D^{1}_{ln} S_{n} (z) \right)^{2} \\[1ex]
& \leq2^{-j+1} \left(  \sum_{l< -J} \mathcal{O} (2^{l/2}) \sum_{n=-\infty}^{-1}  S_{n} (z) \right)^{2} = \mathcal{O} (2^{-j} \times 2^{-J}) = \mathcal{O} ( 2^{-j} T^{-1}) , 
\end{align*}
which follows since $P^{-1}_{jl}$ is bounded, $D^{1}_{ln} = \mathcal{O}(2^{l})$, $\sum_{n} S_{n} (z) < \infty$, and $T = 2^{J}$. Hence, the mean squared error is given by $\mathcal{O} \left( 2^{-j} T^{-2/3} \log^{2} (T)\right)$.

\subsection{Proof of Proposition \protect\ref{A1-c-est}} 
We can write the mean squared error as
\begin{equation*}
\mathbb{E} \left[  \int_{0}^{1} \left( \hat{c} (z,\tau) - c(z,\tau) \right)^{2} dz \right] \leq 2 \mathbb{E} \left[  \int_{0}^{1} \left( \sum_{j=-J_{0}}^{-1} \left(\widehat{S}_{j} (z) - S_{j} (z) \right) \Psi_{j} (\tau) \right)^{2} dz \right]  + 2R_{J_{0}},
\end{equation*}
where $R_{J_{0}}$ can be bounded as
\begin{align*}
R_{J_{0}} &=  \left( \sum_{j < - J_{0}} S_{j} (z) \Psi_{j} (\tau) \right)^{2} \leq  \left( \sum_{j < - J_{0}} S_{j} (z)  \right)^{2} \\[1ex]
& \leq  \left( \sum_{j < - J_{0}}  \mathcal{O} (2^{\gamma j} ) \right)^{2} = \mathcal{O}(2^{-2\gamma J_{0}}) = \mathcal{O} (T^{-2 \gamma \alpha}) = \mathcal{O} (T^{\alpha - 2/3}).
\end{align*}
For the first term, we obtain
\begin{align*}
&\mathbb{E} \left[  \int_{0}^{1} \left( \sum_{j=-J_{0}}^{-1} \left(\widehat{S}_{j} (z) - S_{j} (z) \right) \Psi_{j} (\tau) \right)^{2} dz \right] \\
&\leq \left( \sum_{j=-J_{0}}^{-1} \Psi_{j} ( \tau ) \left( \mathbb{E} \left[ \int_{0}^{1}  \left(\widehat{S}_{j} (z) - S_{j} (z) \right)^{2} dz \right] \right)^{1/2} \right)^{2}\\
&\leq \left( \sum_{j=-J_{0}}^{-1}  \mathcal{O} \left( 2^{-j} T^{-2/3} \log^{2} (T) \right)^{1/2} \right)^{2}\\
&= \mathcal{O} \left( \left(T^{\alpha-2/3} \log^{2} (T) \right)^{1/2} \right)^{2} = \mathcal{O} \left(T^{\alpha-2/3} \log^{2} (T) \right),
\end{align*}
by Equation (6) from the main text, and using that $\Psi_{j} (\tau) \leq 1$ for all $j$ and $\tau$. Hence, provided that $T^{\alpha-2/3} \log^{2} (T) \rightarrow 0$ as $T \rightarrow \infty$, the estimator is mean square consistent.

\subsection{Proof of Proposition \protect\ref{nth-diff-exp}} 
In order to derive the formula for the expectation of the squared wavelet coefficients of a general $n$-th difference, we require the formula the $n$-th difference itself, which is a well-known result. Denote by $\nabla^{n} {X}_{t}$ the n-th difference of the time series $\{X_{t}\}$ at time $t$. Then,
\begin{equation}\label{difflemma}
\nabla^{n} X_{t} = \sum_{k=0}^{n} (-1)^{k} {n \choose k} X_{t-k} = \sum_{k=0}^{n}  (-1)^{k} {n \choose k} \epsilon_{t-k}   +\mathcal{O}(T^{-1}) ,
\end{equation}
which follows from the differentiability assumption of the trend $\mu$. Now, observe that, for an $n$-th difference, the expectation will involve the sum of the spectrum over all scales, multiplied by a linear combination of lagged inner product $A$-matrix entries, denoted $A^{\tau}$, from lag $0$ to lag $n$. To calculate the coefficient in front of each of the $A^{\tau}$, we simply calculate the sum of the coefficients of the $\{\epsilon_{t} \epsilon_{s} \}$ for each particular lag in the expansion obtained by squaring the $n$-th difference of the time series, for which we can use Equation \eqref{difflemma}.

For example, to calculate the coefficient of $A$, we add together the coefficients of the squared terms in the square of the differenced series, i.e. add the coefficients of $\epsilon_{t}^{2}, \epsilon_{t-1}^{2}, \ldots \epsilon_{t-n}^{2}$. For the coefficient of $A^{1}$, we add together the coefficients of the terms in the square of the difference that differ in index by 1, i.e. we add the coefficients of $\epsilon_{t}\epsilon_{t-1}, \epsilon_{t-1} \epsilon_{t-2}, \epsilon_{t-1} \epsilon_{t}, \ldots , \epsilon_{n-1} \epsilon_{n}$. In particular, if we were interested in the third difference, then the coefficient in front of $A$ would be $1+9+9+1$, and the coefficient in front of $A^{1}$ would be $-3-3-9-9-3-3 = -30$, and so on. Hence the formula for the coefficient in front of $A$ is given by 
\begin{equation*}
\sum_{r=0}^{n} {n \choose r}^{2} = {2n \choose n},
\end{equation*}
and similarly, the formula for the coefficient in front of $A^{\tau}$, for $\tau \geq 1$, is given by 
\begin{equation*}
2 (-1)^{\tau} \sum_{r=0}^{n-\tau} {n \choose r} {n \choose r + \tau} = 2 (-1)^{\tau} {2n \choose n + \tau },
\end{equation*}
where the multiplication by 2 arises due to symmetry (for example we must add both the coefficients of $\epsilon_{t-1} \epsilon_{t-2}$ and $\epsilon_{t-2} \epsilon_{t-1}$). The equality on the right follows from a counting argument. The number of ways to choose $n+\tau$ objects from $2n$ choices is the same as the number of ways of choosing $r$ objects from the first $n$ and choosing $r+\tau$ from the remaining $n$ objects, for $0 \leq r \leq n-\tau$. The expectation is accurate up to order $\mathcal{O} (T^{-1})$ by the same argument as in the proof of Proposition \ref{A1exp}. Hence, the squared expectation of the wavelet coefficients of the $n$-th differenced series are given by 
\begin{equation*}
\mathbb{E} (\tilde{I}_{k}^{j}) = \sum_{l} S_{l} \left( \frac{k}{T} \right) \left[ {2n \choose n} A_{jl} + 2 \sum_{\tau=1}^{n} (-1)^{\tau} {2n \choose n+\tau}A_{jl}^{\tau}  \right]  + \mathcal{O}(T^{-1}) .
\end{equation*}

\subsection{Proof of Theorem \protect\ref{2nd-diff-spec}}

By the Daubechies characterisation of H{\"o}lder spaces, rescaling of the LSW process, and Proposition \ref{A1exp}, the expectation of the rescaled raw wavelet periodogram of the differenced time series is given by
\begin{equation}\label{a7eq}
\mathbb{E} (\tilde{I}_{k}^{j}) = \sum_{l} P_{jl} \tilde{S}_{l} (k/T) + \mathcal{O}(2^{-7j/2} T^{-2}) +  \mathcal{O} (T^{-1}) .
\end{equation}
Hence, the mean squared error of the smoothed wavelet periodogram is given by
\begin{align*}
 & \mathbb{E} \left[  \int_{0}^{1}  \left(\widehat{S}_{j} (z) - S_{j} (z) \right)^{2} dz \right]  = 2^{-j}  \mathbb{E} \left[  \int_{0}^{1}  \left(\hat{\tilde{S}}_{j} (z) - \tilde{S}_{j} (z) \right)^{2} dz \right] \\[1ex]
 & \leq  2^{-j+1} \mathbb{E} \left[  \int_{0}^{1} \left( \sum_{l=-J_{1}}^{-1} \left(\hat{I}^{l}_{\lfloor zT\rfloor} - \Lambda_{l} (z) \right) P_{jl}^{-1} \right)^{2} dz \right]  + 2^{-j+1} \int^{1}_{0} \left( \sum_{l < -J_{1}} \Lambda_{l} (z) P_{jl}^{-1} \right)^{2} dz
 \end{align*}
 where $\hat{I}^{l}_{\lfloor zT\rfloor }$ is the smoothed estimate of the raw wavelet periodogram and $\Lambda_{l} (z) = \sum_{n} P_{nl} \tilde{S}_{n} (z)$. The first term can be bounded as
 \begin{align*}
 I & \leq 2^{-j+1}  \left( \sum_{l=-J_{1}}^{-1} P_{jl}^{-1} \left( \mathbb{E} \left[ \int_{0}^{1} \left(\hat{I}^{l}_{\lfloor zT \rfloor} - \Lambda_{l} (z) \right)^{2} dz \right] \right)^{1/2} \right)^{2} \\
& \leq 2^{-j+1}  \left( \sum_{l=-J_{1}}^{-1} P_{jl}^{-1} \left( \mathcal{O} \left( 2^{-7l} T^{-4}  \right) +  \mathcal{O} \left( T^{-2/3} \log^{2} (T)  \right)\right)^{1/2} \right)^{2} \\
& = \mathcal{O} \left(2^{-j} T^{7\beta-4} \right)  + \mathcal{O} \left( 2^{-j} T^{-2/3} \log^{2} (T)\right),
\end{align*}
which follows since $P$ possesses a bounded inverse with exponentially decaying entries, and using Equation \eqref{a7eq}. The second term can be bounded in the same fashion as in the proof of Theorem \ref{smooth_theorem4}, and is of order $\mathcal{O}(2^{-j} T^{-\beta})$, which gives the stated consistency result. Similarly, the mean squared error of the LACV estimator is calculated as
\begin{equation*}
\mathbb{E} \left[  \int_{0}^{1} \left( \hat{c} (z,\tau) - c(z,\tau) \right)^{2} dz \right] \leq 2 \mathbb{E} \left[  \int_{0}^{1} \left( \sum_{j=-J_{0}}^{-1} \left(\widehat{S}_{j} (z) - S_{j} (z) \right) \Psi_{j} (\tau) \right)^{2} dz \right]  + 2R_{J_{0}},
\end{equation*}
where $R_{J_{0}}$ is asymptotically negligible by the argument in Proposition \ref{A1-c-est}. For the first term, we obtain
\begin{align*}
& \mathbb{E} \left[  \int_{0}^{1} \left( \sum_{j=-J_{0}}^{-1} \left(\hat{S}_{j} (z) - S_{j} (z) \right) \Psi_{j} (\tau) \right)^{2} dz \right] \\
&\leq \left( \sum_{j=-J_{0}}^{-1} 2^{-j/2} \left(  \mathcal{O} \left(T^{7\beta-4} \right) + \mathcal{O} \left( T^{-2/3} \log^{2} (T) \right) + \mathcal{O} (T^{-\beta}) \right) ^{1/2} \right)^{2}\\[1ex]
&= \left(  \mathcal{O} \left(T^{7\beta-4} \right) + \mathcal{O} \left( T^{-2/3} \log^{2} (T) \right) + \mathcal{O} (T^{- \beta}) \right)  \left( \sum_{j=-J_{0}}^{-1} 2^{-j/2} \right)^{2}\\[1ex]
&=  \mathcal{O} \left(T^{\alpha + 7\beta-4} \right) + \mathcal{O} \left( T^{\alpha-2/3} \log^{2} (T) \right) + \mathcal{O} (T^{\alpha - \beta}) .
\end{align*}

\subsection{Proof of Proposition \protect\ref{smooth_theorem5}} 

The appropriate threshold is derived by using an analogous result to \ref{smooth_theorem1}, from which we obtain
\begin{equation*}
\mathbb{E} ( \hat{v}_{rs} ) - \int_0^1 \mu(z) {\psi}'_{rs} (z) dz = \mathcal{O} \left( 2^{r/2} T^{-1} \right),
\end{equation*}
and
\begin{equation*}
\text{Var} ( \hat{v}_{rs} ) =  T^{-1} \int_0^1 \sum_n {S}_n (z) {\psi}'^2_{rs} (z) dz + \mathcal{O} \left( 2^{r} T^{-2} \right).
\end{equation*}
The mean squared error rate is obtained by using Theorem 1 of \cite{von2000non}, with the specific case of a Lipschitz continuous trend.

\end{appendices}

\end{document}